\newtheorem{theorem}{Theorem}
\algnewcommand{\Break}{\textbf{Break}}
\newsavebox{\measure@tikzpicture}
  \def\tikz@width{#1}%
  \def\tikzscale{1}\begin{lrbox}{\measure@tikzpicture}%
  \edef\tikzscale{\pgfmathresult}%
\newcommand\resetstackedplots{
\makeatletter
\pgfplots@stacked@isfirstplottrue
\makeatother
\addplot [forget plot,draw=none] coordinates{(1,0) (2,0) (3,0) (4,0) (5,0) (6,0)};
}
\newcommand{\set}[1]{\{#1\}}
\newcommand{\bset}[1]{\big\{#1\big\}}
\newcommand{\Bset}[1]{\Big\{#1\Big\}}
\newcommand{\bbset}[1]{\bigg\{#1\bigg\}}
\newcommand{\tuple}[1]{\langle#1\rangle} 
\newcommand{\btuple}[1]{\big\langle#1\big\rangle}
\newcommand{\lbl}{\mathcal{L}}
\newcommand{\interpret}{\textsc{Interpret}}
\newcommand{\bfit}[1]{\emph{\textbf{#1}}} 
\newcommand{\True}{\textsf{True}}
\newcommand{\False}{\textsf{False}}
\newcommand{\code}[2]{#1:#2}
\newcommand{\linecode}[2]{#2}
\newcommand{\hStatex}[0]{\vspace{4pt}}
\newcommand{\npc}{\textsf{NP-complete}}
\newcommand{\pruning}{\mathcal{P}}
\newcommand{\false}{{\sf false}}
\newcommand{\Key}{{\sf Key}}
\newcommand{\keyxvar}{\mathit{x}}
\newcommand{\keyyvar}{\mathit{y}}
\newcommand{\keyzvar}{\mathit{z}}
\newcommand{\Val}{{\sf Val}}
\newcommand{\valvar}{\mathit{v}}
\newcommand{\h}{\mathcal{H}}
\newcommand{\uniqueval}{\textsf{UniqueValue}}
\newcommand{\opset}{\mathit{O}}
\newcommand{\Op}{{\sf Op}}
\newcommand{\readevent}{{\sf R}}
\newcommand{\writeevent}{{\sf W}}
\newcommand{\WriteTx}{{\sf WriteTx}}
\newcommand{\rel}[1]{\xrightarrow{#1}}
\newcommand{\comp}{\;;\;}
\newcommand{\po}{{\sf po}}
\newcommand{\intaxiom}{\textsc{Int}}
\newcommand{\abortedreads}{\textsc{AbortedReads}}
\newcommand{\intermediatereads}{\textsc{IntermediateReads}}
\newcommand{\si}{\textsc{SI}}
\newcommand{\ser}{\textsc{SER}}
\newcommand{\checksi}{\textsc{CheckSI}}
\newcommand{\createknowngraph}{\textsc{CreateKnownGraph}}
\newcommand{\generateconstraints}{\textsc{GenerateConstraints}}
\newcommand{\pruneconstraints}{\textsc{PruneConstraints}}
\newcommand{\encodeconstraints}{\textsc{SAT-Encode}}
\newcommand{\solveconstraints}{\textsc{MonoSAT-Solve}}
\newcommand{\solve}{\textsc{Solve}}
\newcommand{\g}{\mathit{G}}
\newcommand{\inducedgraph}{\mathit{I}}
\newcommand{\knowninducedgraph}{\mathit{KI}}
\newcommand{\eithervar}{\mathit{either}}
\newcommand{\orvar}{\mathit{or}}
\newcommand{\reachability}{\textsc{Reachability}}
\newcommand{\reachabilityvar}{\mathit{reachability}}
\newcommand{\vertex}{\mathit{V}}
\newcommand{\edges}{\mathit{E}}
\newcommand{\cons}{\mathit{C}}
\newcommand{\vvar}{\mathit{v}}
\newcommand{\precvar}{\mathit{prec}}
\newcommand{\consvar}{\mathit{cons}}
\newcommand{\inducedrule}{\mathcal{R}}
\newcommand{\BV}{\mathsf{BV}}
\newcommand{\Clause}{\mathsf{CL}}
\newcommand{\solver}{\mathit{solver}}
\newcommand{\instance}{\textsc{MonoSAT-Solver}}
\newcommand{\opid}{\iota}
\newcommand{\OpId}{\mathsf{OpId}}
\newcommand{\fromvar}{\mathit{from}}
\newcommand{\tovar}{\mathit{to}}
\newcommand{\typevar}{\mathit{type}}
\newcommand{\graphA}{\mathit{Dep}}
\newcommand{\graphB}{\mathit{AntiDep}}
\newcommand{\cycle}{\mathcal{C}}
\newcommand{\T}{\mathcal{T}}
\providecommand{\G}{}
\renewcommand{\G}{\mathcal{G}}
\renewcommand{\H}{\mathcal{H}}
\newcommand{\SO}{\textsf{SO}}
\newcommand{\WR}{\textsf{WR}}
\newcommand{\WW}{\textsf{WW}}
\newcommand{\RW}{\textsf{RW}}
\newcommand{\name}{PolySI}
\newcommand{\ct}{\mathit{c}}
\newcommand{\dependency}{\mathit{dep}}
\newcommand{\recoveredgraph}{\mathit{Graph_{recovered}}}
\newcommand{\finalizedgraph}{\mathit{Graph_{finalized}}}
\newcommand{\taggedgraph}{\mathit{Graph_{tagged}}}
\newcommand{\acs}{\mathit{Graph_{acs}}}
\newcommand{\testgraph}{\mathit{Graph_{tested}}}
\newcommand{\graphsize}{\mathit{graph\_size}}
\newcommand{\minimalExtendDeps}{\mathit{minimal\_extend\_deps}}
\newcommand{\minimalAcs}{\mathit{minimal\_acs}}
\newcommand{\extendDeps}{\mathit{extendDeps}}
\newcommand{\extendgraph}{\mathit{graph_{extended}}}
\newcommand{\Restore}{\textsc{Restore}}
\newcommand{\Finalize}{\textsc{Finalize}}
\newcommand{\Findacs}{\textsc{Find\_ACS}}
\newcommand{\Resolve}{\textsc{Resolve}}
\newcommand\vldbdoi{10.14778/3583140.3583145}
\newcommand\vldbpages{1264 - 1276}
\newcommand\vldbvolume{16}
\newcommand\vldbissue{6}
\newcommand\vldbyear{2023}
\newcommand\vldbauthors{\authors}
\newcommand\vldbtitle{\shorttitle}
\newcommand\vldbavailabilityurl{https://github.com/hengxin/PolySI-PVLDB2023-Artifacts}
\newcommand\vldbpagestyle{empty}
\begin{document}
\title{Efficient Black-box Checking of Snapshot Isolation in Databases}

\thanks{$^\dagger$Joint first authors.}

 \author{Kaile Huang$^\dagger$}
 \affiliation{%
   \institution{State Key Laboratory for Novel Software Technology\\
    Nanjing University}
 }

  \author{Si Liu$^\dagger$}
  \affiliation{%
   \institution{ETH Zurich}
  }

 \author{Zhenge Chen}
 \author{Hengfeng Wei}
 \authornote{Corresponding author.}
 \affiliation{%
  \institution{State Key Laboratory for Novel Software Technology\\
  Software Institute\\Nanjing University}
}

  \author{David Basin}
\affiliation{%
	\institution{ETH Zurich}
}

 
 
 
   \author{Haixiang Li}
   \author{Anqun Pan}
 \affiliation{%
   \institution{Tencent Inc.}
 }
 





\begin{abstract}

Snapshot isolation (SI) is a prevalent weak isolation level that avoids the performance penalty imposed by serializability and 
simultaneously prevents various undesired data anomalies.
Nevertheless, SI anomalies have recently been found in production  cloud databases that claim to provide  the SI guarantee.
Given the complex and often unavailable internals of such databases, a black-box SI checker is highly desirable.

\looseness=-1
In this paper we present \name{},  a novel black-box
 checker that efficiently checks SI and provides understandable counterexamples upon detecting violations.  
 \name{}  builds on a novel characterization of SI using generalized polygraphs (GPs),  for which we establish its 
 soundness and completeness.   \name{}  employs an SMT solver and also
 accelerates SMT solving by utilizing the compact constraint encoding of GPs
 and domain-specific optimizations for pruning constraints.
As demonstrated by our  extensive assessment,  \name{} successfully reproduces all of 2477 known SI anomalies,  detects novel SI violations in three production cloud  databases,  identifies their causes,
   outperforms the state-of-the-art black-box checkers under a wide range of workloads, and can scale up to large-sized workloads.

\end{abstract}

\maketitle

\pagestyle{\vldbpagestyle}
\begingroup\small\noindent\raggedright\textbf{PVLDB Reference Format:}\\
\vldbauthors. \vldbtitle. PVLDB, \vldbvolume(\vldbissue): \vldbpages, \vldbyear.\\
\href{https://doi.org/\vldbdoi}{doi:\vldbdoi}
\endgroup
\begingroup
\renewcommand\thefootnote{}\footnote{\noindent
This work is licensed under the Creative Commons BY-NC-ND 4.0 International License. Visit \url{https://creativecommons.org/licenses/by-nc-nd/4.0/} to view a copy of this license. For any use beyond those covered by this license, obtain permission by emailing \href{mailto:info@vldb.org}{info@vldb.org}. Copyright is held by the owner/author(s). Publication rights licensed to the VLDB Endowment. \\
\raggedright Proceedings of the VLDB Endowment, Vol. \vldbvolume, No. \vldbissue\ %
ISSN 2150-8097. \\
\href{https://doi.org/\vldbdoi}{doi:\vldbdoi} \\
}\addtocounter{footnote}{-1}\endgroup

\ifdefempty{\vldbavailabilityurl}{https://github.com/hengxin/PolySI-PVLDB2023-Artifacts}{
\vspace{.3cm}
\begingroup\small\noindent\raggedright\textbf{PVLDB Artifact Availability:}\\
The source code, data, and/or other artifacts have been made available at \url{\vldbavailabilityurl}.
\endgroup
}

\section{Introduction}
\label{section:intro}

Database systems are an essential building block of  many software systems and applications.
Transactional access to databases simplifies concurrent programming by providing an abstraction for executing concurrent computations on shared data in isolation~\cite{Bernstein:Book1986}.
The gold-standard isolation level,  {\it serializability} (\ser)~\cite{SER:JACM1979},
ensures that all transactions appear to execute serially,  one after another.
However,  providing \ser, especially in geo-replicated environments like modern cloud databases,  is computationally expensive~\cite{HAT:VLDB2013,DBLP:conf/osdi/LuSL20}.

Many databases  provide weaker guarantees for transactions to balance the trade-off  between  data consistency and system performance.  \emph{Snapshot isolation} (SI)~\cite{CritiqueANSI:SIGMOD1995} is one of the prevalent weaker isolation levels used in practice, which  avoids the performance penalty imposed by SER and
simultaneously prevents  undesired data anomalies such as fractured reads,  causality violations, and
lost updates 
\cite{AnalysingSI:JACM2018}.
In addition to classic centralized databases such as Microsoft SQL Server~\cite{SQLServer} and Oracle Database~\cite{Oracle},
SI is  supported by  numerous   \emph{production cloud} database systems like
Google's Percolator~\cite{Percolator:OSDI2010},
MongoDB~\cite{MongoDB}, TiDB~\cite{TiDB}, YugabyteDB~\cite{YugabyteDB}, Galera~\cite{maria-galera}, and Dgraph~\cite{dgraph}.

Unfortunately, as recently reported in \cite{Elle:VLDB2020, MongoDB-Jepsen, TiDB-Jepsen},
data anomalies have been found in several production cloud databases that claim to provide SI.\footnote{These anomalies, which we are also concerned with in this paper, are isolation violations purely in  \emph{database engines}. They may be tolerated by end users or higher-level applications,  depending on their business  logic~\cite{ACIDRain:SIGMOD2017,IsoDiff:VLDB2020}. }
This raises the question of whether such databases actually deliver the promised SI guarantee in practice.
Given that their internals (e.g., source code) are often unavailable to the outsiders 
 or hard to digest,
  a black-box SI checker is highly desirable.

A natural question then to ask is ``What  should an ideal black-box SI checker look like?''  The SIEGE principle \cite{Elle:VLDB2020} has already provided a strong baseline: an ideal checker would be \emph{sound}
(return no false positives),
\emph{informative} (report understandable counterexamples),
\emph{effective} (detect violations in real-world databases),
\emph{general} (compatible with different patterns of transactions),
and \emph{efficient} (add modest checking time even for workloads of high concurrency).
Additionally,  (i) we expect an ideal checker to be \emph{complete}, thus missing no violations;
and (ii) we also augment the \emph{generality} criterion by requiring the checker to be compatible
 not only with general (read-only, write-only, and read-write) transaction workloads 
    but also with standard key-value/SQL APIs.   We call this extended principle SIEGE+.

None of the existing SI checkers,
to the best of our knowledge,  satisfies SIEGE+ (see Section \ref{section:related-work} for the detailed comparison).  For example,
dbcop \cite{Complexity:OOPSLA2019} is incomplete,  incurs exponentially increasing overhead under higher concurrency (Section \ref{ss:efficiency}),  and returns no counterexamples upon finding a violation;
Elle \cite{Elle:VLDB2020} relies on specific database APIs such as lists and the (internal) timestamps of transactions to infer isolation anomalies, thus not conforming
to our black-box setting.



\vspace{1ex}
\noindent \textbf{The \name{} Checker.}
We present \name{},  a novel, black-box SI checker designed to achieve
all the  SIEGE+ criteria.
\name{} builds on three key ideas in response to three major challenges.

First,  despite previous attempts to characterize SI~\cite{CritiqueANSI:SIGMOD1995, Adya:PhDThesis1999,CentralisedSemantics:ECOOP2020},
its semantics is usually explained in terms of low-level implementation choices invisible to the database outsiders.
Consequently, one must \emph{guess} the dependencies (aka uncertain/unknown dependencies) between  client-observable data,
for example,   which of the two writes 
was first recorded in the database.

We introduce a novel dependency graph, called \emph{generalized polygraph} (GP),
based on which we present a new \emph{sound} and \emph{complete}  characterization of SI.
There are  two main advantages of a GP:  (i)  it naturally  models the guesses  by capturing \emph{all} possible dependencies between transactions in a single compacted data structure; and (ii)  it enables
 the acceleration of SMT solving by compacting constraints (see below) as demonstrated by our experiments.



Second,   there have been recent advances
 in SAT and SMT solving 
 for checking \emph{graph properties} such as
 the MonoSAT solver~\cite{MonoSAT:AAAI2015}  
and its  successful application to the black-box checking of SER~\cite{Cobra:OSDI2020}.  The idea is to \emph{search} for
an acyclic graph where the nodes are transactions in the history\footnote{A history collected from dynamically executing a system records the transactional requests to and  responses from the database.  See Section~\ref{ss:si-formal} for its formal definition.} and the edges meet certain constraints.
We show that
SMT techniques can also be applied to build an effective SI checker.
This application is nontrivial as
a brute-force approach
  would be inefficient due to
   the high computational complexity of checking SI \cite{Complexity:OOPSLA2019}: the problem is
\npc{} in general and $O(n^{c})$ with $c$ (resp.  $n$) a fixed,  yet in practice large,  number of clients (resp.  transactions),
even for a single transaction history.
In fact,  checking SI is known to be asymptomatically
more complex than checking SER~\cite{Complexity:OOPSLA2019}.
In the context of SMT solving over graphs,
SI leads to  much larger search space due to its specific anomaly patterns \cite{AnalysingSI:JACM2018} while checking SER  simply requires finding a cycle. 

Thanks to our GP-based characterization of SI,
we leverage its compact encoding of constraints on transaction dependencies to accelerate MonoSAT solving.
Moreover, we develop domain-specific optimizations that further prune constraints,
thereby reducing the search space.  For example,  \name{} prunes a constraint if an associated uncertain dependency would result in an SI violation with known dependencies.

Finally,  although MonoSAT outputs cycles upon detecting  a violation,  they are still   \emph{uninformative} with respect to  understanding how the violation actually occurred.
Locating the actual causes of violations would  facilitate  debugging and repairing  the defective implementations.  For example,  if an SI checker were to identify a \emph{lost update} anomaly 
from  the returned counterexample,  developers could then focus on investigating the write-write conflict resolution mechanism.
Hence,  we design and integrate into \name{} a novel interpretation algorithm that explains the  counterexamples returned by MonoSAT.  More specifically,   \name{}
(i) recovers the violating scenario
 by bringing back any potentially involved transactions and  dependencies eliminated during pruning and solving and (ii)
 finalizes the core participants to  highlight the violation cause.

\vspace{1ex}
\noindent \textbf{Main Contributions.}
In summary, we provide:
 \begin{enumerate}
 \item a new GP-based characterization of SI that both facilitates the modeling of uncertain transaction dependencies  inherent to black-box testing  and also enables  the acceleration of constraint solving (Section \ref{section:polygraph-si});
 \item  a sound and complete  GP-based checking algorithm for SI with domain-specific optimizations for pruning constraints (Section \ref{section:si-sat});
 \item  the  \name{} tool comprising both our new checking algorithm and the interpretation algorithm for debugging; and
 \item  an extensive assessment of \name{} that demonstrates its fulfilment of  SIEGE+  (Section~\ref{section:experiments}).  In particular,  \name{} successfully reproduces all of 2477 known SI anomalies,  detects novel SI violations in three production cloud  databases,  identifies their causes,
   outperforms the state-of-the-art black-box checkers under a wide range of workloads, and can scale up to large-sized workloads.
 \end{enumerate}




\section{Preliminaries}  \label{section:problem-definition}


\subsection{Snapshot Isolation in a Nutshell} \label{ss:si-informal}

\emph{Snapshot isolation} (SI)~\cite{CritiqueANSI:SIGMOD1995} is one of the most prominent weaker isolation levels that modern (cloud) databases usually provide to avoid  the performance penalty imposed by \emph{serializability} (SER).
Figure~\ref{fig:hierarchy} shows a hierarchy of  isolation levels where SI sits inbetween \emph{transactional causal consistency}~\cite{Eiger:NSDI2013} and SER,
and is not comparable to \emph{repeatable read} \cite{Adya:PhDThesis1999}.


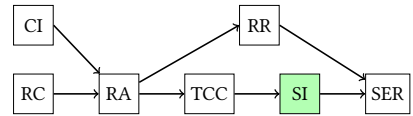
\begin{figure}[h!]
  \centering
  \resizebox{0.30\textwidth}{!}{
  \begin{tikzpicture}[model/.style = {draw, minimum size = 20pt, font = \large},
    node distance = 0.5cm and 0.8cm]

    \node[model] (rc) {\textsc{RC}};
      \node[model, above = of rc] (ci) {\textsc{CI}};
    \node[model, right = of rc] (ra) {\textsc{RA}};
    \node[model, right = of ra] (tcc) {\textsc{TCC}};
    \node[model, right = of tcc, fill = white!70!green] (si) {\textsc{SI}};
    \node[model, above left = 0.50cm and 0.00cm of si] (rr) {\textsc{RR}};
    \node[model, right = of si] (ser) {\textsc{SER}};

    \path[every edge, ->, thick]
      (rc) edge (ra)
      (ra) edge (tcc)
      (ra) edge (rr.west)
      (tcc) edge (si)
      (si) edge (ser);
    \draw[->, thick] (ci.east) to (ra);
    \draw[->, thick] (rr.east) to (ser);
  \end{tikzpicture}}
  \caption{A hierarchy of  isolation levels.
    $A \rightarrow B$: $A$ is strictly weaker than $B$.
    CI: cut isolation~\cite{HAT:VLDB2013};
    RC: read committed~\cite{CritiqueANSI:SIGMOD1995};
    RA: read atomicity~\cite{RAMP:TODS2016};
    RR: repeatable read~\cite{Adya:PhDThesis1999};
    TCC: transactional causal consistency~\cite{Eiger:NSDI2013};
    SI: snapshot isolation~\cite{AnalysingSI:JACM2018}; 
    SER: serializability~\cite{CritiqueANSI:SIGMOD1995}.}
  \label{fig:hierarchy}
\end{figure}

A transaction with SI always reads
from a snapshot that reflects a single commit ordering of transactions
and is allowed to commit if no concurrent transaction has updated the data that it intends to write.
SI prevents various undesired data anomalies such as fractured reads,  causality violations,
lost updates, and long fork~\cite{CritiqueANSI:SIGMOD1995,AnalysingSI:JACM2018}.
The following examples illustrate two kinds of anomalies disallowed by SI.  
 As we will see in Section~\ref{section:experiments},  both anomalies have been detected by our \name{} checker in production cloud databases.




\begin{example}[Causality Violation] \label{ex:social}
 Alice posts a photo of her birthday party.
Bob writes a comment to her post.
Later,  Carol sees Bob's comment but not Alice's post.
\end{example}

\begin{example}[Lost Update] \label{ex:banking}
Dan and Emma share a banking account with 10 dollars.
Both  simultaneously  deposit  50 dollars.  The resulting balance is 60, instead of 110,  as one of the deposits is lost.
\end{example}



In this paper we focus on the prevalent \emph{strong session} variant of \si{}~\cite{LazyReplSI:VLDB2006, AnalysingSI:JACM2018}, which
 additionally requires a transaction to observe all the effects of the preceding transactions in the same \emph{session} \cite{TerryDPSTW94}.
Many production databases, including DGraph \cite{dgraph}, Galera \cite{maria-galera},  and CockroachDB \cite{cockroach},  provide this isolation level in practice.


\subsection{Snapshot Isolation: Formal Definition} \label{ss:si-formal}

We recall the formalization of SI over dependency graphs,
which serves as the theoretical foundation of \name.
The following account is standard, see for example~\cite{AnalysingSI:JACM2018},
and Table~\ref{table:notations} summarizes the notation used throughout the paper.

We consider a distributed key-value store
managing a set of keys $\Key = \set{\keyxvar, \keyyvar, \keyzvar, \dots}$,
which are associated with values from a set $\Val$.\footnote{
  We discuss how to support SQL queries in Section~\ref{section:discussion}.
  However, we do not support predicates in this work.}
We denote by $\Op$ the set of possible read or write operations on keys:
$\Op = \set{\readevent_{\opid}(\keyxvar, \valvar), \writeevent_{\opid}(\keyxvar, \valvar)
  \mid \opid \in \OpId, \keyxvar \in \Key, \valvar \in \Val}$,
where $\OpId$ is the set of operation identifiers.
We omit operation identifiers when they are unimportant.


\begin{table}[t]
\small
	\centering
	\caption{Notation}
	\label{table:notations}
	\resizebox{\columnwidth}{!}{%
	\begin{tabular}{|c|c|c|}
	\hline
	\textbf{Category}                   & \textbf{Notation}          & \textbf{Meaning}                  \\ \hline
	\multirow{3}{*}{\textit{KV Store}}  & $\Key$                     & set of keys                   \\ \cline{2-3}
										& $\Val$                     & set of values                 \\ \cline{2-3}
										& $\Op$                      & set of operations             \\ \hline
	\multirow{3}{*}{\textit{Relations}} & $R?$                       & reflexive closure of $R$          \\ \cline{2-3}
										& $R^{+}$                    & transitive closure of $R$         \\ \cline{2-3}
										& $R \comp S$                & composition of $R$ with $S$        \\ \hline
	\textit{Dependency}                 & $\SO$, $\WR$, $\WW$, $\RW$ & dependency relations/edges        \\ \hline
	\multirow{3}{*}{\textit{Graph}}     & $G = (V, E, C)$            & (generalized) polygraph           \\ \cline{2-3}
										& $V_{G}$, $E_{G}$, $C_{G}$  & components of $G$ \\ \cline{2-3}
										& $G|_{F}$                   & digraph with set $F$ of edges \\ \hline
	\multirow{4}{*}{\textit{Algorithm}} & $\H = (\T, \SO)$           & history to check                  \\ \cline{2-3}
										& $\inducedgraph$            & \si{} induced graph                     \\ \cline{2-3}
										& $\BV$                      & set of Boolean variables      \\ \cline{2-3}
										& $\Clause$                  & set of clauses                \\ \hline
	\end{tabular}%
	}
\normalsize
	\end{table}
\subsubsection{Relations, Orderings, Graphs, and Logics} \label{ss:relations}

A binary relation $R$ over a given set $A$ is a subset of $A \times A$, i.e., $R \subseteq A \times A$.
For $a, b \in A$, we use $(a, b) \in R$ and $a \rel{R} b$ interchangeably.
We use $R{?}$ and $R^{+}$ to denote the reflexive closure and the transitive closure of $R$, respectively.
A relation $R \subseteq A \times A$ is \emph{acyclic} if $R^{+} \cap I_{A} = \emptyset$,
where $I_{A} \triangleq \set{(a, a) \mid a \in A}$ is the identity relation on $A$.
Given two binary relations $R$ and $S$ over set $A$, we define their composition as
$R \comp S = \{ (a, c) \mid \exists b \in A: a \rel{R} b \rel{S} c\}$.
A strict partial order is an irreflexive and transitive relation.
A strict total order is a relation that is a strict partial order and total.

For a directed labeled graph $G = (V, E)$,
we use $V_{G}$ and $E_{G}$ to denote the set of vertices and edges in $G$, respectively.
For a set $F$ of edges, $G|_{F}$ denotes
the directed labeled graph that has the set $V$ of vertices and the set $F$ of edges.

In logical formulas, we write $\_$ for irrelevant parts that are  implicitly existentially quantified.
We use $\exists!$ to mean ``unique existence.''
\subsubsection{Transactions and Histories} \label{sss:transactions-histories}

\begin{definition} \label{def:transaction}
  A \bfit{transaction} is a pair $(\opset, \po)$,
  where $\opset \subseteq \Op$ is a finite, non-empty set of operations
  and $\po \subseteq \opset \times \opset$ is a strict total order called the \bfit{program order}.
\end{definition}

For a transaction $T$, we let $T \vdash \writeevent(\keyxvar, \valvar)$
if $T$ writes to $\keyxvar$ and the last value written is $\valvar$,
and $T \vdash \readevent(\keyxvar, \valvar)$
if $T$ reads from $\keyxvar$ before writing to it
and $\valvar$ is the value returned by the first such read.
We also use $\WriteTx_{\keyxvar} = \set{T \mid T \vdash \writeevent(\keyxvar, \_)}$.

Clients interact with the store by issuing transactions during \emph{sessions}.
We use a \emph{history} to record the client-visible results of such interactions.
For conciseness, we consider only committed transactions in the formalism~\cite{AnalysingSI:JACM2018};
see further discussions in Section~\ref{ss:completing-si-checking}.

\begin{definition} \label{def:history}
  A \bfit{history} is a pair $\h = (\T, \SO)$,
  where $\T$ is a set of transactions with disjoint sets of operations
  and the \bfit{session order} $\SO \subseteq \T \times \T$
  is the union of strict total orders on disjoint sets of $\T$,
  which correspond to transactions in different sessions.
\end{definition}



\subsubsection{Dependency Graph-based Characterization of \si} \label{sss:depgraph-si}

A dependency graph extends a history with three relations
(or typed edges, in terms of graphs): $\WR$, $\WW$, and $\RW$,
representing three possibility of dependencies between transactions in this history~\cite{AnalysingSI:JACM2018}.
The $\WR$ relation associates a transaction that reads some value with the one that writes this value.
The $\WW$ relation stipulates a strict total order (aka the version order~\cite{Adya:PhDThesis1999}) among the transactions on the same key.
The $\RW$ relation is derived from $\WR$ and $\WW$,
relating a transaction that reads some value to the one that overwrites this value,
in terms of the version orders specified by the $\WW$ relation.

\begin{definition} \label{def:depgraph}
  A \bfit{dependency graph} is a tuple $\G = (\T, \SO, \WR, \\ \WW, \RW)$,
  where $(\T, \SO)$ is a history and
  \begin{enumerate}[(1)]
	\item $\WR: \Key \to 2^{\T \times \T}$ is such that
    \begin{itemize}
      \item $\forall \keyxvar \in \Key.\; \forall S \in \T.\;
        S \vdash \readevent(\keyxvar, \_) \!\!\implies\!\! \exists! T \in \T.\; T \rel{\WR(\keyxvar)} S$
      \item $\forall \keyxvar \in \Key.\; \forall T, S \in \T.\;
        T \rel{\WR(x)} S \implies T \neq S \land
          \exists \valvar \in \Val.\; T \vdash \writeevent(\keyxvar, \valvar) \land S \vdash \readevent(\keyxvar, \valvar)$.
    \end{itemize}
	\item $\WW: \Key \to 2^{\T \times \T}$ is such that
    for every $\keyxvar \in \Key$, $\WW(\keyxvar)$ is a strict total order on the set $\WriteTx_{\keyxvar}$;
	\item $\RW: \Key \to 2^{\T \times \T}$ is such that
    $\forall T, S \in \T.\; \forall \keyxvar \in \Key.\;
      T \rel{\RW(\keyxvar)} S \iff T \neq S \land \exists T' \in \T.\; T' \rel{\WR(\keyxvar)} T \land T' \rel{\WW(\keyxvar)} S$.
  \end{enumerate}
\end{definition}

We denote a component of $\G$, such as $\WW$, by $\WW_{\G}$.
We write $T \rel{\WR/\WW/\RW} S$ when the key $\keyxvar$ in $T \rel{\WR(\keyxvar)/\WW(\keyxvar)/\RW(\keyxvar)} S$ is irrelevant or the context is clear.

Intuitively, a history satisfies \si{} if and only if
it can be extended to a dependency graph that contains only cycles (if any) with at least two adjacent $\RW$ edges.
Formally,

\begin{theorem}[Dependency Graph-based Characterization of \si{}
    (Theorem 4.1 of~\cite{AnalysingSI:JACM2018})] \label{thm:depgraph-si}
  For a history \emph{$\H = (\T, \SO)$},
  \emph{\begin{align*}
    \H \models \si & \iff \H \models \intaxiom \;\land \\
      &\exists \WR, \WW, \RW.\; \G = (\H, \WR, \WW, \RW) \;\land \\
      &\quad (((\SO_{\G} \cup \WR_{\G} \cup \WW_{\G}) \comp \RW_{\G}?) \text{\it\; is acyclic}).
  \end{align*}}
\end{theorem}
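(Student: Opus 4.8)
The plan is to unfold the definition of $\H \models \si$ into its abstract-execution form and prove the two implications separately. I take $\H \models \si$ to mean that there exist a visibility relation $\vis$ and a total arbitration order $\ar$ with $\vis \subseteq \ar$ such that $(\H, \vis, \ar)$ satisfies the standard SI axioms: $\intaxiom$, $\extaxiom$, $\sessionaxiom$ (giving $\SO \subseteq \vis$), $\prefixaxiom$ (i.e. $\ar \comp \vis \subseteq \vis$, so every visible-set is an $\ar$-prefix), and $\noconflictaxiom$ (any two transactions writing a common key are $\vis$-comparable). The conceptual bridge in both directions is that $\SO$, $\WR$, and $\WW$ are snapshot-respecting (contained in $\vis$), whereas each $\RW$ edge steps just past the reader's snapshot boundary; hence the composition $(\SO \cup \WR \cup \WW) \comp \RW?$ never crosses more than one snapshot boundary, which is exactly what forces it into the acyclic order $\ar$.

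For the forward direction, given a witnessing execution I would define $\WW(\keyxvar)$ as the restriction of $\ar$ to $\WriteTx_{\keyxvar}$ (a strict total order, as required), let $\WR$ be the reads-from relation extracted from $\extaxiom$ (each read takes the $\ar$-maximal visible writer of its key), and let $\RW$ be derived per Definition~\ref{def:depgraph}. Then $\intaxiom$ holds since it is an SI axiom. I would show $\SO, \WR, \WW \subseteq \vis$: $\SO$ from $\sessionaxiom$, $\WR$ from $\extaxiom$, and $\WW$ from $\noconflictaxiom$ together with $\vis \subseteq \ar$ (two same-key writers are $\vis$-comparable, and the $\ar$-earlier one must be the $\vis$-earlier one). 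For an edge $b \rel{\RW} c$, $\extaxiom$ forces $c \notin \vis^{-1}(b)$, since otherwise $c$ would be a later visible writer than the one $b$ read. Finally, as $\prefixaxiom$ makes $\vis^{-1}(b)$ an $\ar$-prefix, any $a$ with $a \rel{(\SO \cup \WR \cup \WW)} b$ (hence $a \rel{\vis} b$) is $\ar$-before the excluded $c$; thus each step of $(\SO \cup \WR \cup \WW) \comp \RW?$ lies in $\ar$, and acyclicity follows because $\ar$ is a strict total order.

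For the reverse direction, given $\intaxiom$ and witnesses $\WR, \WW, \RW$ with $(\SO \cup \WR \cup \WW) \comp \RW?$ acyclic, I would build an execution. Take $\ar$ to be any linear extension of $R := ((\SO \cup \WR \cup \WW) \comp \RW?)^{+}$ (a strict partial order, extended by Szpilrajn). Define the snapshot of each $S$ to be the $\ar$-prefix containing every $T$ with $T \rel{(\SO \cup \WR \cup \WW)} S$ but excluding every overwriter $c$ with $S \rel{\RW} c$; concretely $T \rel{\vis} S$ iff $T \rel{\ar} S$ and $T \rel{\ar} c$ for every such $c$. The acyclicity hypothesis is precisely what makes this well defined: from $T \rel{(\SO \cup \WR \cup \WW)} S \rel{\RW} c$ we get $T \rel{\ar} c$, so the must-see set lies entirely $\ar$-below the must-not-see set and a prefix boundary fits between them. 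I would then verify each axiom: $\vis$ is an $\ar$-prefix by construction ($\prefixaxiom$); $\SO \subseteq \vis$ ($\sessionaxiom$); same-key writers are $\vis$-comparable because $T \rel{\WW} S \rel{\RW} c$ again yields $T \rel{\ar} c$ ($\noconflictaxiom$); and $\extaxiom$ holds because the $\ar$-maximal visible writer of a key for $S$ is exactly its $\WR$-source, since a later visible writer $W$ would give $S \rel{\RW} W$ and so exclude $W$ from the snapshot. $\intaxiom$ is assumed.

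I expect the reverse direction to be the main obstacle, specifically pinning down $\vis$ so that $\extaxiom$ reproduces the \emph{given} $\WR$ rather than merely some consistent reads-from: the snapshot must be large enough to hold each reader's source yet small enough to drop every overwriter, and it is the single acyclicity condition on $(\SO \cup \WR \cup \WW) \comp \RW?$ that simultaneously secures both boundaries and $\noconflictaxiom$. A secondary point worth checking is the reformulation underlying the intuitive statement, namely that acyclicity of this composed relation is equivalent to every cycle of the full graph containing two adjacent $\RW$ edges; this is a routine decomposition (pairing each $\RW$ edge with its non-$\RW$ predecessor) but should be stated precisely, together with edge cases such as $S$ both reading and writing a key, which are absorbed by $\intaxiom$.
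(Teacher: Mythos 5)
This theorem is not proved in the paper at all: it is imported verbatim as Theorem~4.1 of the cited Cerone--Gotsman JACM article on analysing SI, so there is no in-paper proof to compare your attempt against (the paper's own appendix proofs concern Theorems~\ref{thm:polygraph-si} and~\ref{thm:correctness-of-pruneconstraints}, which take this result as given). Your reconstruction from the axiomatic visibility/arbitration presentation of strong session SI is sound and follows essentially the same route as the proof in that cited source: defining $\WW$ as the restriction of $\ar$ to same-key writers and pushing each $(\SO\cup\WR\cup\WW)\comp\RW?$ step into $\ar$ via \prefixaxiom{} in the forward direction, and linearizing the composed relation and carving each snapshot as the $\ar$-prefix below all of the reader's $\RW$-successors in the reverse direction; your check that this forced \extaxiom{} to reproduce the given $\WR$ (a later visible writer $W$ would itself be an $\RW$-successor and hence self-excluded) is exactly the delicate point and you handle it correctly.
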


The \emph{internal consistency axiom} \intaxiom{} ensures that,
within a transaction, a read from a key returns the same value
as the last write to or read from this key in the transaction.


\subsection{The SI Checking Problem} \label{ss:problem-definition}

\begin{definition} \label{def:si-checking}
  The \bfit{\si{} checking problem} is the decision problem of
  determining whether a given history $\H$ satisfies \si{}, i.e., is $\H \models \si$?
\end{definition}

We take the common ``\uniqueval'' assumption on histories~\cite{Adya:PhDThesis1999, Complexity:OOPSLA2019, ClientCentric:PODC2017, VCC:POPL2017, Cobra:OSDI2020}:
for each key, every write to the key assigns a unique value.
For  database testing, we can produce such histories
by ensuring the uniqueness of the values written on the client side (or workload generator) using, e.g., the client identifier and local counter.
Under this assumption, each read can be associated with the transaction
that issues the corresponding write~\cite{AnalysingSI:JACM2018}.

Theorem~\ref{thm:depgraph-si} provides a brute-force approach to the SI checking problem:
enumerate all possible $\WW$ relations and check whether any of them results in a dependency graph
that contains only cycles with at least two adjacent $\RW$ edges.
This approach is, however, prohibitively expensive.

\subsection{Polygraphs} \label{ss:polygraphs}

A dependency graph extending a history represents \emph{one} possibility of dependencies between transactions in this history.
To capture \emph{all} possible dependencies between transactions in a single structure,
we rely on polygraphs~\cite{SER:JACM1979}.
Intuitively, a polygraph can be viewed as a family of dependency graphs.


\begin{figure}[t]
  \centering
  \begin{subfigure}[b]{0.23\textwidth}
    \centering
    \includegraphics[width = 1.00\textwidth]{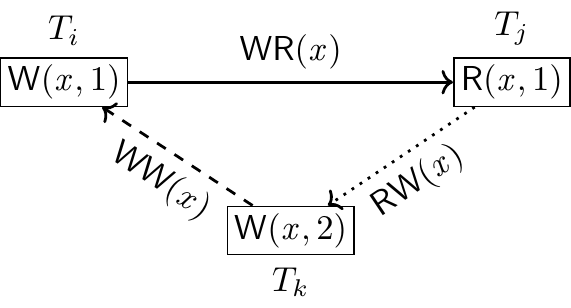}
    \caption{A polygraph}
    \label{fig:polygraph}
  \end{subfigure}
  \hfill
  \begin{subfigure}[b]{0.23\textwidth}
    \centering
    \includegraphics[width = 0.85\textwidth]{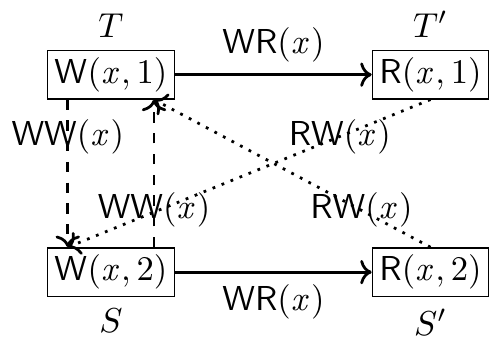}
    \caption{A generalized polygraph}
    \label{fig:g-polygraph}
  \end{subfigure}
  \caption{
  Examples of polygraphs and generalized polygraphs.
     $\WR$, $\WW$, and $\RW$ relations are represented by solid,
    dashed, and dotted arrows, respectively.}
  \label{fig:polygraph-g-polygraph}
\end{figure}

\begin{definition} \label{def:polygraph}
  A \bfit{polygraph} $G = (V, E, C)$ associated with a history $\H = (\T, \SO)$
  is a directed labeled graph $(V, E)$ called the \emph{known graph},
  together with a set $C$ of \emph{constraints} such that
  \begin{itemize}
    \item $V$ corresponds to all transactions in the history $\H$;
    \item $E = \set{(T, S, \SO) \mid T \rel{\SO} S} \cup \set{(T, S, \WR) \mid T \rel{\WR} S}$,
      where $\SO$ and $\WR$, when used as the third component of an edge, are edge labels (i.e., types); and
    \item $C = \set{\tuple{(T_{k}, T_{i}, \WW), (T_{j}, T_{k}, \RW)}
      \mid (T_{i} \rel{\WR(\keyxvar)} T_{j})
      \land T_{k} \in \WriteTx_{\keyxvar} \land T_{k} \neq T_{i} \land T_{k} \neq T_{j}}$.
  \end{itemize}
\end{definition}

As shown in Figure~\ref{fig:polygraph},
for a pair of transactions $T_{i}$ and $T_{j}$ such that $T_{i} \rel{\WR(\keyxvar)} T_{j}$
and a transaction $T_{k}$ that writes $\keyxvar$,
the constraint $\tuple{(T_{k}, T_{i}, \WW), (T_{j}, T_{k}, \RW)}$ captures the unknown dependencies
that ``either $T_{k}$ happened before $T_{i}$ or $T_{k}$ happened after $T_{j}$.''

\section{Characterizing SI using Generalized Polygraphs} \label{section:polygraph-si}

In this section we introduce generalized polygraphs with generalized constraints
and use them to characterize \si.
By compacting several constraints together,
using generalized constraints leads to a compact encoding which in turn helps accelerate the solving process
(see Section \ref{sss:differential}).

\subsection{Generalized Polygraphs} \label{ss:g-polygraph}

In a polygraph, a constraint involves only a single pair of transactions related by $\WR$,
like $T_{i}$ and $T_{j}$ on $\keyxvar$ in Figure~\ref{fig:polygraph}.
Thus, several constraints are needed when there are multiple transactions reading the value of $\keyxvar$ from $T_{i}$.
To \emph{compact} these constraints, we introduce \emph{generalized polygraphs} with generalized constraints.

\begin{definition} \label{def:polygraph-generalized}
  A \bfit{generalized polygraph} $G = (V, E, C)$ associated with a history $\H = (\T, \SO)$
  is a directed labeled graph $(V, E)$ called the \emph{known graph},
  together with a set $C$ of \emph{generalized constraints} such that
  \begin{itemize}
    \item $V$ corresponds to all transactions in the history $\H$;
    \item $E \subseteq V \times V \times \lbl$ is a set of edges with labels (i.e., types)
      from $\lbl = \set{\SO, \WR, \WW, \RW}$; and
    \item $C = \bset{\btuple{\eithervar \triangleq \set{(T, S, \WW)} \;\cup \bigcup\limits_{T' \in \WR(\keyxvar)(T)}\!\!\set{(T', S, \RW)}, \\
      \orvar \triangleq \set{(S, T, \WW)} \;\cup \bigcup\limits_{S' \in \WR(\keyxvar)(S)}\!\!\set{(S', T, \RW)}}
      \mid T \in \WriteTx_{\keyxvar} \land S \in \WriteTx_{\keyxvar} \land T \neq S}$.
  \end{itemize}
\end{definition}

A generalized constraint is a pair of sets of edges of the form $\tuple{\eithervar, \orvar}$.
The $\eithervar$ part handles the possibility of $T$ being ordered before $S$ via a $\WW$ edge.
This forces each transaction $T'$ that reads the value of $\keyxvar$ from $T$ to be ordered before $S$ via an $\RW$ edge.
Symmetrically, the $\orvar$ part handles the possibility of $S$ being ordered before $T$ via a $\WW$ edge.
This forces each transaction $S'$ that reads the value of $\keyxvar$ from $S$ to be ordered before $T$ via an $\RW$ edge.


\begin{example}[Generalized Polygraphs vs. Polygraphs] \label{ex:polygraph}
  In Figure~\ref{fig:g-polygraph}, both transactions $T$ and $S$ write to $\keyxvar$,
  and $T'$ and $S'$ read the values of $\keyxvar$ from $T$ and $S$, respectively.
  The possible dependencies between these transactions can be compactly expressed as
  a single generalized constraint $\tuple{\set{(T, S, \WW), (T', S, \RW)}, \set{(S, T, \WW), (S', T, \RW)}}$,
  which corresponds to two constraints:
  $\tuple{(T, S, \WW), (S', T, \RW)}$ and $\tuple{(S, T, \WW), (T', S, \RW)}$.
\end{example}

Note that a generalized polygraph may contain edges of any type in $E$ such that
a ``pruned'' generalized polygraph (Section~\ref{ss:prune-constraints}) is still a generalized polygraph.
For a generalized polygraph $G = (V, E, C)$ and a label $L \in \lbl$,
we use $V_{G}$, $E_{G}$, $C_{G}$, and $L_{G}$ to denote the set $V$ of vertices,
the set $E$ of known edges, the set $C$ of constraints, and the set of known edges with label $L$ in $E$, respectively.
For a generalized polygraph $G = (V, E, C)$ and a set $F$ of edges,
we use $G|_{F}$ to denote the directed labeled graph that has the set $V$ of vertices and the set $F$ of edges.
For any two subsets $R, S \subseteq E$, we define their composition as
$R \comp S = \{ (a, c, L_{1} \comp L_{2}) \mid
  \exists b \in V: (a, b, L_{1}) \in R \land (b, c, L_{2}) \in S \}$,
where $L_{1} \comp L_{2}$ is a newly introduced label used when composing edges.
In the sequel, we use generalized polygraphs, but sometimes we still refer to them as polygraphs.
\subsection{Characterizing \si} \label{ss:g-poly-si}

According to Theorem~\ref{thm:depgraph-si}, we are interested in the \emph{induced} graph
of a generalized polygraph $\g$, obtained by composing the edges of $\g$ according to the rule
$((\SO \cup \WR \cup \WW) \comp \RW?)$.

\begin{definition} \label{def:induced-graph}
  The \bfit{induced \si{} graph} of a polygraph $\g = (V, E, C)$ is the graph
  $G' = (V, E, C, \inducedrule)$,
  where $\inducedrule = (\SO \cup \WR \cup \WW) \comp \RW?$ is the induce rule.
\end{definition}

The concept of \emph{compatible} graphs gives a meaning to polygraphs
and their induced \si{} graphs.
A graph is compatible with a polygraph when it is a resolution of the constraints of the polygraph.
Thus, a polygraph corresponds to a family of its compatible graphs.

\begin{definition} \label{def:compatible-graphs-with-a-polygraph}
  A directed labeled graph $G' = (V', E')$ is \bfit{compatible with
  a generalized polygraph} $G = (V, E, C)$ if
  \begin{itemize}
    \item $V' = V$;
    \item $E' \supseteq E$; and
    \item $\forall \tuple{\eithervar, \orvar} \in C.\;
      (\eithervar \subseteq E' \land \orvar \cap E' = \emptyset)
      \lor (\orvar \subseteq E' \land \eithervar \cap E' = \emptyset)$.
  \end{itemize}
\end{definition}

By applying the induce rule $\inducedrule$ to a compatible graph of a polygraph,
we obtain a compatible graph with the induced \si{} graph of this polygraph.

\begin{definition} \label{def:compatible-graphs-with-an-induced-si-graph}
  Let $G' = (V', E')$ be a compatible graph with a polygraph $G$.
  Then $G'|_{(\SO_{G'} \;\cup\; \WR_{G'} \;\cup\; \WW_{G'}) \comp \RW_{G'}?}$ is a \bfit{compatible graph with the induced \si{} graph} of $G$.
\end{definition}

\begin{example}[Compatible Graphs] \label{ex:compatible-graphs-with-a-polygraph}
  There are two compatible graphs with the generalized polygraph of Figure~\ref{fig:g-polygraph}:
  one is with the edge set $\set{(T, T', \WR), (S, S', \WR), (T, S, \WW), (T', S, \RW)}$,
  and the other is with $\set{(T, T', \WR), (S, S', \WR), (S, T, \WW), (S', T, \RW)}$.

  Accordingly, there are also two compatible graphs with the induced \si{} graph
  of the polygraph of Figure~\ref{fig:g-polygraph}:
  one is with the edge set $\set{(T, T', \WR), (S, S', \WR), (T, S, \WW), (T, S, \WR \comp \RW)}$.
  The edge $(T, S, \WR \comp \RW)$ is obtained from $(T, T', \WR) \comp (T', S, \RW)$.
  It is identical to $(T, S, \WW)$ if the edge types are ignored.
  The other is with $\set{(T, T', \WR), (S, S', \WR), (S, T, \WW), (S, T, \WR \comp \RW)}$.
  Similarly, $(S, T, \WR \comp \RW)$ is identical to $(S, T, \WW)$
  if the edge types are ignored.
\end{example}

We are concerned with the acyclicity of polygraphs and their induced \si{} graphs.

\begin{definition} \label{def:si-acyclicity}
  An induced \si{} graph is \bfit{acyclic} if there exists an acyclic compatible graph with it,
  \emph{when the edge types are ignored}.
  A polygraph is \bfit{\si-acyclic} if its induced \si{} graph is acyclic.
\end{definition}

Finally, we present the generalized polygraph-based characterization of \si.
Its proof can be found in Appendix~\ref{section:appendix-proofs}.
The key lies in the correspondence between compatible graphs of polygraphs and dependency graphs.
\begin{theorem}[Generalized Polygraph-based Characterization of SI] \label{thm:polygraph-si}
  A history $\H$ satisfies \emph{\si{}} if and only if
  \emph{$\H \models \intaxiom$} and the generalized polygraph of $\H$ is \si-acyclic.
\end{theorem}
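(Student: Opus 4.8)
The plan is to prove Theorem~\ref{thm:polygraph-si} by reducing it to the dependency-graph characterization of Theorem~\ref{thm:depgraph-si}, exploiting the fact that a generalized polygraph is, by construction, a compact family of dependency graphs. The central technical device is a bijection (or at least a pair of mutually inverse maps) between the dependency graphs $\G = (\H, \WR, \WW, \RW)$ that extend the history $\H$ and the compatible graphs of the generalized polygraph $G$ of $\H$. Since Theorem~\ref{thm:depgraph-si} already says that $\H \models \si$ iff $\H \models \intaxiom$ and there exists such a $\G$ with $(\SO_{\G} \cup \WR_{\G} \cup \WW_{\G}) \comp \RW_{\G}?$ acyclic, and since $\si$-acyclicity (Definition~\ref{def:si-acyclicity}) says precisely that \emph{some} compatible graph of $G$ induces an acyclic graph under the same rule $\inducedrule$, the whole theorem follows once the correspondence between the two existential witnesses is established.

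I would proceed in two directions. For the forward direction ($\Rightarrow$), assume $\H \models \si$. By Theorem~\ref{thm:depgraph-si}, $\H \models \intaxiom$ and there is a dependency graph $\G$ with $\WR_{\G} = \WR_{\H}$ (fixed by the history under \uniqueval), some total $\WW_{\G}$ per key, and the derived $\RW_{\G}$, making the induced relation acyclic. I would build the compatible graph $G'$ of the generalized polygraph by taking $E' = \SO \cup \WR \cup \WW_{\G} \cup \RW_{\G}$ and then verify the three conditions of Definition~\ref{def:compatible-graphs-with-a-polygraph}: $V' = V$ and $E' \supseteq E$ are immediate, and for each generalized constraint $\tuple{\eithervar, \orvar}$ arising from a pair $T, S \in \WriteTx_{\keyxvar}$, the total order $\WW_{\G}$ decides whether $T \rel{\WW} S$ or $S \rel{\WW} T$; that choice, together with the \emph{definition} of $\RW_{\G}$ in Definition~\ref{def:depgraph}(3), forces exactly one of $\eithervar \subseteq E'$ or $\orvar \subseteq E'$ while the other set is disjoint from $E'$. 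Acyclicity of the induced graph transfers verbatim because the induce rule $\inducedrule$ is literally $(\SO \cup \WR \cup \WW) \comp \RW?$.

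For the converse ($\Leftarrow$), assume $\H \models \intaxiom$ and $G$ is $\si$-acyclic, so there is a compatible graph $G'$ whose induced graph is acyclic. I would recover a dependency graph by reading $\WW_{\G}$ off the $\WW$-labeled edges selected in $E'$ and setting $\RW_{\G}$ to the induced $\RW$ edges. The obligation is to check that this really is a dependency graph in the sense of Definition~\ref{def:depgraph}: that $\WW_{\G}$ restricted to each $\WriteTx_{\keyxvar}$ is a \emph{strict total order} (not merely a partial one), and that the chosen $\RW$ edges coincide with those \emph{derived} from $\WR$ and $\WW_{\G}$ via clause (3). Totality should follow from the fact that the generalized polygraph contains a constraint for \emph{every} unordered pair $T \neq S$ in $\WriteTx_{\keyxvar}$, and the compatibility condition forces each such pair to be oriented one way or the other; transitivity and acyclicity of the resulting $\WW_{\G}$ would be argued from acyclicity of the induced graph.

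The main obstacle I expect is the bookkeeping around the $\RW$ edges and showing the two notions of $\RW$ agree exactly. In the dependency-graph world, $\RW$ is \emph{derived} by clause (3) of Definition~\ref{def:depgraph} as $\RW(\keyxvar) = (\WR(\keyxvar))^{-1} \comp \WW(\keyxvar)$ minus self-loops; in the generalized polygraph world, the $\RW$ edges are those bundled into the chosen side of each generalized constraint, one per reader $T' \in \WR(\keyxvar)(T)$. I must confirm that, given a fixed $\WW$ orientation, these two sets of $\RW$ edges are identical, including the subtle handling of the $T' \neq S$ and $T_k \neq T_i, T_k \neq T_j$ side-conditions and the compaction described in Example~\ref{ex:polygraph} (one generalized constraint standing in for several ordinary polygraph constraints). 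Once this matching is pinned down, the acyclicity statements on the two sides are the same condition on the same induce rule, and the equivalence closes. I would keep the argument clean by proving a single lemma asserting the bijection between dependency graphs over $\H$ and compatible graphs of $G$ that preserves the induced graph, and then invoke Theorem~\ref{thm:depgraph-si} to finish.
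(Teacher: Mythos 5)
Your proposal is correct and follows essentially the same route as the paper's own proof: both directions reduce to Theorem~\ref{thm:depgraph-si} via the correspondence between dependency graphs extending $\H$ and compatible graphs of the generalized polygraph, with the induce rule $(\SO \cup \WR \cup \WW) \comp \RW?$ carrying acyclicity across unchanged. If anything, you spell out more of the obligations (totality and transitivity of the recovered $\WW$ from acyclicity of the tournament, and the exact matching of derived versus bundled $\RW$ edges) than the paper's rather terse appendix proof does.
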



\section{The Checking Algorithm for \si}  \label{section:si-sat}

Given a history $\H$, \name{} encodes the induced \si{} graph
of the generalized polygraph of $\H$ into an SAT formula
and utilizes the MonoSAT solver~\cite{MonoSAT:AAAI2015} to test its acyclicity.
We choose MonoSAT mainly because, compared to conventional SMT solvers such as Z3,  it is more efficient in checking graph properties~\cite{MonoSAT:AAAI2015}.

The main challenge 
 is that the size (measured as the number of variables and clauses)
of the resulting SAT formula may be too large for MonoSAT to solve in reasonable time.
Hence, \name{} prunes constraints of the generalized polygraph of $\H$ before encoding.
As we will see in Section~\ref{ss:efficiency},  this pruning process is crucial to \name's high performance.
Additionally,  solving is accelerated by utilizing,
instead of original polygraphs, generalized polygraphs with compact generalized constraints.


\begin{figure*}[t]
  \centering
 \begin{subfigure}[c]{0.28\textwidth}
   \centering
   \includegraphics[width = 1.00\textwidth]{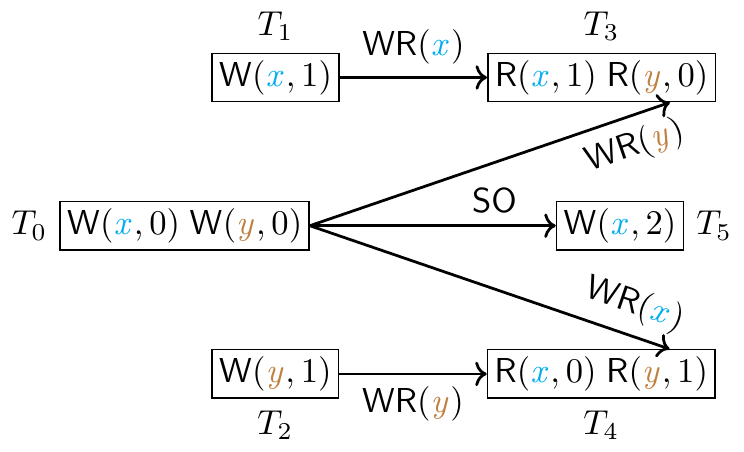}
   \caption{\small A ``long fork'' history with $\SO$ and $\WR$ edges.
    $T_{0}$ and $T_{5}$ are on the same session.}
   \label{fig:long-fork}
 \end{subfigure}
 \hfill
  \begin{subfigure}[c]{0.30\textwidth}
    \includegraphics[width = 1.00\textwidth]{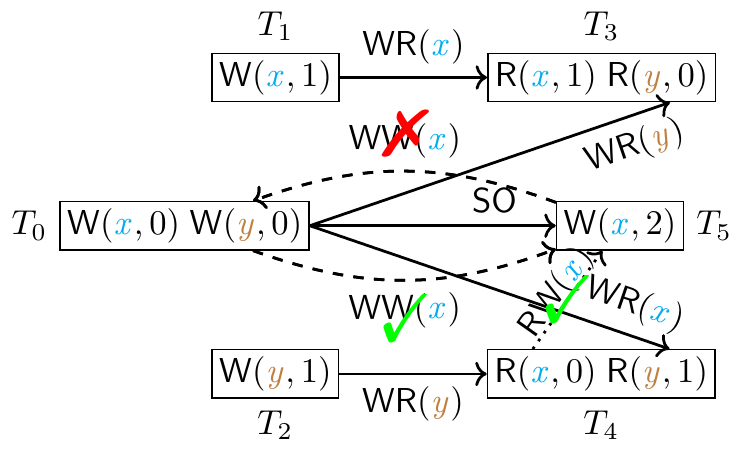}
    \caption{\small The $T_{5} \rel{\WW(\keyxvar)} T_{0}$ case is pruned
      due to the cycle $T_{0} \rel{\SO} T_{5} \rel{\WR(\keyxvar)} T_{0}$.}
    \label{fig:long-fork-2more-soww}
  \end{subfigure}
 \hfill
  \begin{subfigure}[c]{0.30\textwidth}
    \includegraphics[width = 1.00\textwidth]{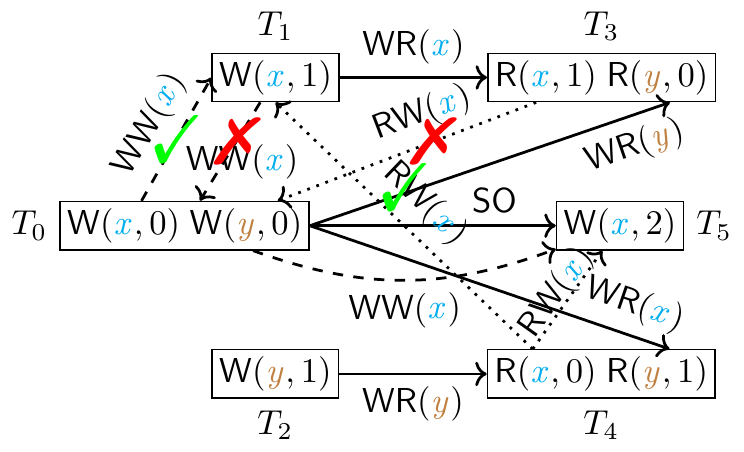}
    \caption{\small The $T_{1} \rel{\WW(\keyxvar)} T_{0}$ case is pruned
      due to the cycle $T_{3} \rel{\RW(\keyxvar)} T_{0} \rel{\WR(\keyyvar)} T_{3}$.}
    \label{fig:long-fork-keyx}
  \end{subfigure}

  \vspace{0.30cm}
  \begin{subfigure}[c]{0.30\textwidth}
    \includegraphics[width = 1.00\textwidth]{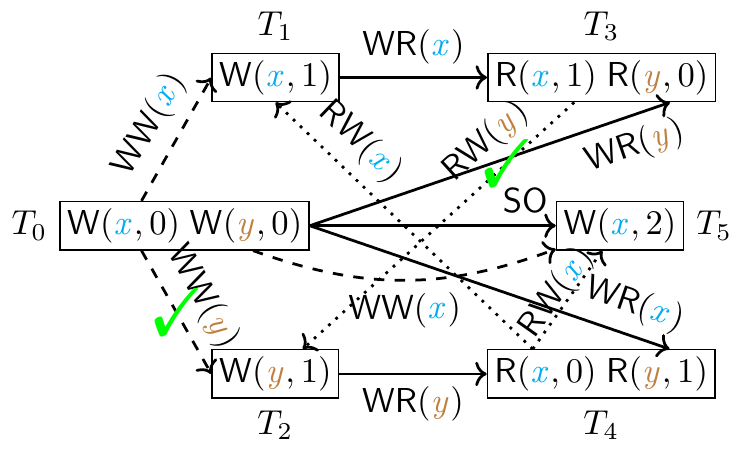}
    \caption{\small The $T_{2} \rel{\WW(\keyyvar)} T_{0}$ case
      is pruned and the $T_{0} \rel{\WW(\keyyvar)} T_{2}$ case
      become known.}
    \label{fig:long-fork-keyy}
  \end{subfigure}
  \hspace{100pt}
  \begin{subfigure}[c]{0.20\textwidth}
    \includegraphics[width = 1.00\textwidth]{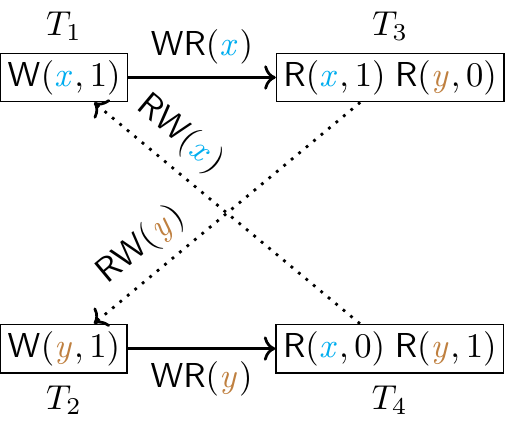}
    \caption{\small The undesired and  violating cycle found by MonoSAT.}
    \label{fig:long-fork-cycle}
  \end{subfigure}
  \caption{The ``long fork'' anomaly: an illustrating example of \name.
	 }
  \label{fig:long-fork-example}
\end{figure*}
\subsection{Overview} \label{ss:overview}

The procedure $\checksi$ (line~\linecode{\ref{alg:checksi}}{\ref{line:func-checksi}} of Algorithm~\ref{alg:checksi})
outlines the checking algorithm.
First, if $\H$ does not satisfy the $\intaxiom$ axiom,
the checking algorithm terminates and returns \false{}
(line~\linecode{\ref{alg:checksi}}{\ref{line:checksi-intaxiom}};
see Section~\ref{ss:completing-si-checking} for 
the predicates $\abortedreads$ and $\intermediatereads$).
The algorithm proceeds otherwise in the following three steps:

\begin{itemize}
  \item
    construct the generalized polygraph $\g$ of $\H$
    (lines~\linecode{\ref{alg:checksi}}{\ref{line:checksi-call-createknowngraph}}
    and~\linecode{\ref{alg:checksi}}{\ref{line:checksi-call-generateconstraints}});
  \item
    prune constraints in the polygraph $\g$
    (line~\linecode{\ref{alg:checksi}}{\ref{line:checksi-call-pruneconstraints}}); and
  \item 
    encode the induced \si{} graph, denoted $\inducedgraph$, of $\g$ after pruning
    into an SAT formula (line~\linecode{\ref{alg:checksi}}{\ref{line:checksi-call-encodeconstraints}}),
    and call MonoSAT to test whether $\inducedgraph$ is acyclic (line~\linecode{\ref{alg:checksi}}{\ref{line:checksi-call-solveconstraints}}).
\end{itemize}
Algorithm~\ref{alg:checksi} depicts the core procedures of pruning and encoding.
The remaining procedures are given in Appendix~\ref{section:appendix-polysi}.

Before diving into details, we illustrate our algorithm using the example history in Figure~\ref{fig:long-fork}, which exemplifies the well-known ``long fork'' anomaly in \si~\cite{PSI:SOSP2011,AnalysingSI:JACM2018}.
Specifically, transaction $T_{0}$ writes to both keys  $\keyxvar$ and $\keyyvar$.
Transactions $T_{1}$ and $T_{2}$ concurrently write to $\keyxvar$ and $\keyyvar$, respectively.
Transaction $T_{3}$ sees the write by $T_{1}$, but not the write by $T_{2}$,
while $T_{4}$ sees the write by $T_{2}$, but not the write by $T_{1}$.
The session committing $T_{0}$ then issues $T_{5}$ to update $\keyxvar$.

\vspace{1ex}
\noindent \textbf{Pruning Constraints.}
To check whether this history satisfies \si,
we must determine the order between $T_{0}$, $T_{1}$, and $T_{5}$ (on $\keyxvar$)
and the order between $T_{0}$ and $T_{2}$ (on $\keyyvar$).
Consider first the constraint on the order between $T_{0}$ and $T_{5}$ shown in
 Figure~\ref{fig:long-fork-2more-soww}.
Due to $T_{0} \rel{\SO} T_{5}$, the $T_{5} \rel{\WW(\keyxvar)} T_{0}$ case
would introduce an undesired cycle.
Therefore, this case can be safely pruned
and the other case of $T_{0} \rel{\WW(\keyxvar)} T_{5}$,
along with the edge $T_{4} \rel{\RW(\keyxvar)} T_{5}$, become known.

Figure~\ref{fig:long-fork-keyx} shows the constraint on the order
between $T_{0}$ and $T_{1}$:
$\tuple{\eithervar = \set{(T_{1}, T_{0}, \WW), (T_{3}, T_{0}, \RW},
  \orvar = \set{(T_{0}, T_{1}, \WW), (T_{4}, T_{1}, \RW)}}$.
Consider first the $\eithervar$ case.
Note that the edge $T_{3} \rel{\RW(\keyxvar)} T_{0}$ is in an undesired cycle
$T_{3} \rel{\RW(\keyxvar)} T_{0} \rel{\WR(\keyyvar)} T_{3}$,
which contains only a single $\RW$ edges.
Hence, the $\eithervar$ case could be safely pruned,
without SAT encoding and solving.
Conversely, the $\orvar$ case does not introduce undesired cycles.
Thus, the edges in the $\orvar$ case become known,
before SAT encoding and solving.

Similarly, 
the $\eithervar$ case of the constraint
on the order between $T_{0}$ and $T_{2}$, namely
$\tuple{\eithervar = \set{(T_{2}, T_{0}, \WW), (T_{4}, T_{0}, \RW},
  \orvar = \\ \set{(T_{0}, T_{2}, \WW), (T_{4}, T_{2}, \RW)}}$,
could be safely pruned (not shown in Figure~\ref{fig:long-fork-keyy}),
and the edges in the $\orvar$ case become known.

\vspace{1ex}
\noindent \textbf{SAT Encoding.}
The order between $T_{1}$ and $T_{5}$ is still uncertain after pruning
in Figure~\ref{fig:long-fork-keyy}.
We encode the constraint $\tuple{
  \eithervar = \set{(T_{1}, T_{5}, \WW), (T_{3}, T_{5}, \RW)},
  \orvar = \set{(T_{5}, T_{1}, \WW)}}$
on the order as a SAT formula
\[
  (\BV_{1,5} \land \BV_{3,5} \land \lnot\BV_{5,1}) \lor
  (\BV_{5,1} \land \lnot\BV_{1,5} \land \lnot\BV_{3,5}),
\]
where $\BV_{i,j}$ is a Boolean variable indicating the existence of
the edge from $T_{i}$ to $T_{j}$ in the pruned polygraph.
We then encode the induced \si{} graph, denoted $\inducedgraph$.
Since $T_{2} \rel{\WR(\keyyvar)} T_{4} \rel{\RW(\keyxvar)} T_{5}$,
we have $\BV^{\inducedgraph}_{2,5} = \BV_{2,4} \land \BV_{4,5}$,
where $\BV^{\inducedgraph}_{i,j}$ is a Boolean variable indicating the existence
of the edge from $T_{i}$ to $T_{j}$ in $\inducedgraph$.
Similarly, we have $\BV^{\inducedgraph}_{1,2} = \BV_{1,3} \land \BV_{3,2}$
and $\BV^{\inducedgraph}_{2,1} = \BV_{2,4} \land \BV_{4,1}$.
In contrast, since it is possible that $T_{3} \rel{\RW(\keyxvar)} T_{5}$,
we have $\BV^{\inducedgraph}_{1,5} = \BV_{1,3} \land \BV_{3,5}$.

\vspace{1ex}
\noindent \textbf{MonoSAT Solving.}
Finally, we feed the SAT formula to MonoSAT for an acyclicity
test of the graph $\inducedgraph$.
MonoSAT successfully finds an undesired cycle
$T_{1} \rel{\WR(\keyxvar)} T_{3} \rel{\RW(\keyyvar)} T_{2}
    \rel{\WR(\keyyvar)} T_{4} \rel{\RW(\keyxvar)} T_{1}$,
which contains two \emph{non-adjacent} $\RW$ edges;
see Figure~\ref{fig:long-fork-cycle}.
Therefore, this history violates \si.

\begin{algorithm}[t]
  \footnotesize
  \caption{The \name{} algorithm for checking SI}
  \label{alg:checksi}
  \begin{varwidth}[t]{0.49\textwidth}
  \begin{algorithmic}[1]

    \Procedure{\checksi}{$\H$}
      \label{line:func-checksi}
      \If{$\H \not\models \intaxiom \lor \abortedreads \lor \intermediatereads$}
        \label{line:checksi-intaxiom}
        \State \Return \false
      \EndIf

      \State \Call{\createknowngraph}{$\H$}
        \Comment{see Appendix~\ref{section:appendix-polysi}}
        \label{line:checksi-call-createknowngraph}
      \State \Call{\generateconstraints}{$\H$}
        \Comment{see Appendix~\ref{section:appendix-polysi}}
        \label{line:checksi-call-generateconstraints}
      \If{$\lnot \Call{\pruneconstraints}{\null}$}
        \label{line:checksi-call-pruneconstraints}
        \State \Return \false
          \label{line:checksi-return-false}
      \EndIf
      \State \Call{\encodeconstraints}{\null}
        \label{line:checksi-call-encodeconstraints}
      \State \Return \Call{\solveconstraints}{\null}
        \Comment{see Appendix~\ref{section:appendix-polysi}}
        \label{line:checksi-call-solveconstraints}
    \EndProcedure

    \Statex
    \Procedure{\pruneconstraints}{\null}
      \label{line:func-pruneconstraints}
      \Repeat
        \State $\graphA \gets \g|_{\SO_{\g} \cup \WR_{\g} \cup \WW_{\g}}$
          \label{line:pruneconstraints-A}
        \State $\graphB \gets \g|_{\RW_{\g}}$
          \label{line:pruneconstraints-B}
        \State $\knowninducedgraph \gets \graphA \cup (\graphA \comp \graphB)$
          \label{line:pruneconstraints-C}
        \State $\reachabilityvar \gets \Call{\reachability}{\knowninducedgraph}$
          \Comment{using Floyd-Warshall algorithm~\cite{CLRS2009}}. 
          \label{line:pruneconstraints-reachability}

        \hStatex
        \ForAll{$\consvar \gets \tuple{\eithervar, \orvar} \in \cons_{\g}$}
          \label{line:pruneconstraints-forall-cons}
          \ForAll{$(\fromvar, \tovar, \typevar) \in \eithervar$}
            \Comment{for the ``$\eithervar$'' possibility}
            \label{line:pruneconstraints-forall-edge-in-either}
            \If{$\typevar = \WW$}
              \label{line:pruneconstraints-ww-type}
              \If{$(\tovar, \fromvar) \in \reachabilityvar$}
                \label{line:pruneconstraints-ww-reachability}
                \State $\cons_{\g} \gets \cons_{\g} \setminus \set{\consvar}$
                  \label{line:pruneconstraints-ww-cons}
                \State $\edges_{\g} \gets \edges_{\g} \cup \orvar$
                  \label{line:pruneconstraints-ww-edges}
                \State {\bf break} the ``{\bf for all} $(\fromvar, \tovar, \typevar) \in \eithervar$'' loop
                  \label{line:pruneconstraints-ww-break}
              \EndIf
            \Else \Comment{$\typevar = \RW$}
              \ForAll{$\precvar \in \vertex_{\graphA}$ such that $(\precvar, \fromvar, \_) \in \edges_{\graphA}$}
                \label{line:pruneconstraints-rw-forall-pre-vertex}
                \If{$(\tovar, \precvar) \in \reachabilityvar$}
                  \label{line:pruneconstraints-rw-reachability}
                  \State $\cons_{\g} \gets \cons_{\g} \setminus \set{\consvar}$
                    \label{line:pruneconstraints-rw-cons}
                  \State $\edges_{\g} \gets \edges_{\g} \cup \orvar$
                    \label{line:pruneconstraints-rw-edges}
                  \State {\bf break} the ``{\bf for all} $(\fromvar, \tovar, \typevar) \in \eithervar$'' loop
                    \label{line:pruneconstraints-rw-break}
                \EndIf
              \EndFor
            \EndIf
          \EndFor

          \ForAll{$(\fromvar, \tovar, \typevar) \in \orvar$}
            \Comment{for the ``$\orvar$'' possibility}
            \label{line:pruneconstraints-forall-edge-in-or}
            \State the same with the ``$\eithervar$'' possibility except that
                it returns \False
            \Statex \hspace{40pt} if both $\eithervar$ and $\orvar$ possibilities of a constraint are pruned
          \EndFor
        \EndFor
      \Until{$\cons_{\g}$ remains unchanged}
        \label{line:pruneconstraints-until}
      \State \Return \True
        \label{line:pruneconstraints-return-true}
    \EndProcedure

    \Statex
    \Procedure{\encodeconstraints}{\null}
      \label{line:func-encodeconstraints}
      \ForAll{$\vvar_{i}, \vvar_{j} \in \vertex_{\g}$ such that $i \neq j$}
        \label{line:encodeconstraints-forall-vertex-pair}
        \State $\BV \gets \BV \cup \set{\BV_{i, j}, \BV^{\inducedgraph}_{i, j}}$
          \label{line:encodeconstraints-bv-add-bvij}
      \EndFor

      \ForAll{$(\vvar_{i}, \vvar_{j}) \in \edges_{\g}$} \Comment{encode the known graph of $\g$}
        \label{line:encodeconstraints-forall-edges}
        \State $\Clause \gets \Clause \cup \set{\BV_{i, j} = \True}$
          \label{line:encodeconstraints-set-bvij-true}
      \EndFor

      \ForAll{$\tuple{\eithervar, \orvar} \in \cons_{\g}$}  \Comment{encode the constraints of $\g$}
        \label{line:encodeconstraints-forall-cons}
        \State $\Clause \gets \Clause \;\cup\; \Bset{\bigl(
            \bigwedge\limits_{(\vvar_{i}, \vvar_{j}, \_) \in \eithervar}
              \!\!\!\!\BV_{i, j}
            \land \bigwedge\limits_{(\vvar_{i}, \vvar_{j}, \_) \in \orvar}
              \!\!\!\!\lnot\BV_{i, j}\bigr) \;\lor\;
            \bigl(
              \bigwedge\limits_{(\vvar_{i}, \vvar_{j}, \_) \in \orvar}
              \!\!\!\!\BV_{i, j}
              \land \bigwedge\limits_{(\vvar_{i}, \vvar_{j}, \_) \in \eithervar}
              \!\!\!\!\lnot\BV_{i, j}\bigr)}$
          \label{line:encodeconstraints-either-or}
      \EndFor

      \hStatex
      \State $\graphA \gets \g|_{\SO_{\g} \cup \WR_{\g} \cup \WW_{\g}}$
        \label{line:encodeconstraints-A}
      \State $\edges_{\graphA} \gets \edges_{\graphA} \cup \set{(\_, \_, \WW) \in \eithervar \cup \orvar \mid \tuple{\eithervar, \orvar} \in \cons_{\g}}$
        \label{line:encodeconstraints-add-ww-to-A}
      \State $\graphB \gets \g|_{\RW_{\g}}$
        \label{line:encodeconstraints-B}
      \State $\edges_{\graphB} \gets \edges_{\graphB} \cup \set{(\_, \_, \RW) \in \eithervar \cup \orvar \mid \tuple{\eithervar, \orvar} \in \cons_{\g}}$
        \label{line:encodeconstraints-add-rw-to-B}
      \State $\Clause \gets \Clause \cup \Bset{\BV^{\inducedgraph}_{i, j} =
        \big(\BV_{i, j} \land (\vvar_{i}, \vvar_{j}, \_) \in \edges_{\graphA}\big) \lor
        \big(\bigvee\limits_{\substack{(\vvar_{i}, \vvar_{k}, \_) \in \edges_{\graphA} \\
          (\vvar_{k}, \vvar_{j}, \_) \in \edges_{\graphB}}} \hspace{-1em} \BV_{i, k} \land \BV_{k, j}\big)
          \bigm| \vvar_{i}, \vvar_{j} \in \vertex_{\g}}$
        \Comment{encode the induced \si{} graph $\inducedgraph$ of $\g$}
        \label{line:encodeconstraints-add-clauses-for-I}
    \EndProcedure
  \end{algorithmic}
  \end{varwidth}
  \normalsize
\end{algorithm}


\subsection{Constructing the Generalized Polygraph} \label{ss:constructing-polygraph}

We construct the generalized polygraph $\g$ of the history $\H$ in two steps.
First, we create the known graph of $\g$ by adding the known edges of types $\SO$ and $\WR$ to $\edges_{\g}$.
Second, we generate the generalized constraints of $\g$ on possible dependencies between transactions.
Specifically, for each key $\keyxvar$ and each pair of transactions $T$ and $S$ that both write $\keyxvar$,
we generate a generalized constraint of the form $\tuple{\eithervar, \orvar}$
according to Definition~\ref{def:polygraph-generalized}.

\subsection{Pruning Constraints} \label{ss:prune-constraints}

\looseness=-1
To accelerate MonoSAT solving, we prune as many constraints as possible before encoding
(line~\linecode{\ref{alg:checksi}}{\ref{line:func-pruneconstraints}}).
A constraint can be pruned if either of its two possibilities,
represented by \emph{either} or \emph{or}, cannot happen,
i.e., adding the edges in one of the two possibilities would create a cycle in
the reduced \si{} graph.
If neither of the two possibilities in a constraint can happen,
\name{} immediately returns $\False$.
This process is repeated until no more constraints can be pruned
(line~\linecode{\ref{alg:checksi}}{\ref{line:pruneconstraints-until}}).



\begin{figure}[t]
  \centering
  \begin{subfigure}[c]{0.2\textwidth}
    \centering
    \includegraphics[width = 0.80\textwidth]{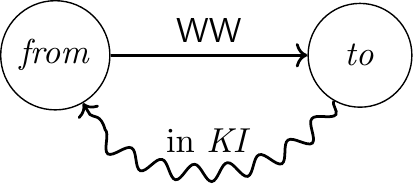}
    \caption{$(\fromvar, \tovar)$ is a $\WW$ edge.}
    \label{fig:pruning-ww-case}
  \end{subfigure}
  \hfill
  \begin{subfigure}[c]{0.2\textwidth}
    \centering
    \includegraphics[width = 0.80\textwidth]{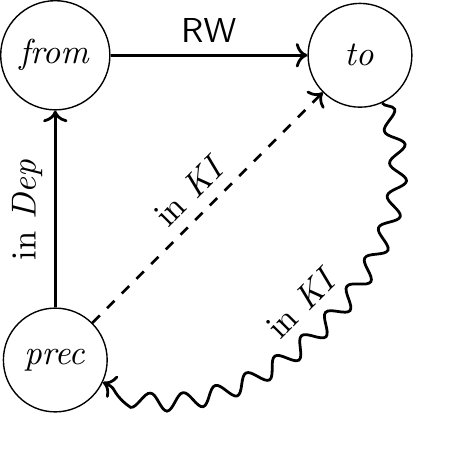}
    \caption{$(\fromvar, \tovar)$ is an $\RW$ edge.}
    \label{fig:pruning-rw-case}
  \end{subfigure}
  \caption{Two cases for pruning constraints.}
  \label{fig:pruning}
\end{figure}

In each iteration, we first construct the \emph{currently known part} of the induced \si{} graph,
denoted $\knowninducedgraph$, of $\g$.
To do this, we define two auxiliary graphs, namely $\graphA \gets \g|_{\SO_{\g} \cup \WR_{\g} \cup \WW_{\g}}$
and $\graphB \gets \g|_{\RW_{\g}}$.
By Definition~\ref{def:compatible-graphs-with-an-induced-si-graph}, $\knowninducedgraph$ is $\graphA \cup (\graphA \comp \graphB)$
(line~\linecode{\ref{alg:checksi}}{\ref{line:pruneconstraints-C}}).
Then, we compute the reachability relation of $\knowninducedgraph$.
Next, for each constraint $\consvar$ of the form $\tuple{\eithervar, \orvar}$,
we check if $\eithervar$ or $\orvar$ would create cycles in $\knowninducedgraph$
(line~\linecode{\ref{alg:checksi}}{\ref{line:pruneconstraints-forall-cons}}).
Consider an edge $(\fromvar, \tovar, \typevar)$ in $\eithervar$ (line~\linecode{\ref{alg:checksi}}{\ref{line:pruneconstraints-forall-edge-in-either}}).
By construction, it must be of type $\WW$ or $\RW$.
Note that $\knowninducedgraph$ does not contain any $\RW$ edges by definition.
Therefore, an $\RW$ edge from $\fromvar$ to $\tovar$,
together with a path from $\tovar$ to $\fromvar$ in $\knowninducedgraph$,
does \emph{not} necessarily create a cycle in $\knowninducedgraph$.
This fails the simple reachability-based strategy used in Cobra~\cite{Cobra:OSDI2020}.

Suppose first that $(\fromvar, \tovar)$ is a $\WW$ edge;
see Figure~\ref{fig:pruning-ww-case}.
If there is already a path from $\tovar$ to $\fromvar$ in $\knowninducedgraph$
(line~\linecode{\ref{alg:checksi}}{\ref{line:pruneconstraints-ww-reachability}}),
adding the $\WW$ edge would create a cycle in $\knowninducedgraph$.
Thus, we can prune the constraint $\consvar$
and the edges in the other possibility $\orvar$ become known.

Now suppose that $(\fromvar, \tovar)$ is an $\RW$ edge;
see Figure~\ref{fig:pruning-rw-case}.
We check if there is a path in $\knowninducedgraph$ from $\tovar$
to any immediate predecessor $\precvar$ of $\fromvar$ in $\graphA$
(line~\linecode{\ref{alg:checksi}}{\ref{line:pruneconstraints-rw-forall-pre-vertex}}).
If there is a path, adding this $\RW$ edge would introduce,
via composition with the edge from $\precvar$ to $\fromvar$,
an edge from $\precvar$ to $\tovar$ in $\knowninducedgraph$
(the dashed arrow in Figure~\ref{fig:pruning-rw-case}).
Then, with the path from $\tovar$ to $\precvar$,
we obtain a cycle in $\knowninducedgraph$.

The pruning process of the $\orvar$ possibility is same with that for $\eithervar$,
except that
it returns \False{} if both $\eithervar$ and $\orvar$ possibilities of a constraint are pruned.

\looseness=-1
The following theorem states that $\pruneconstraints$ is correct
in that (1) it preserves the \si-(a)cyclicity of polygraphs;
and (2) it does not introduce new undesired cycles,
which ensures that any violation found in the pruned polygraph
using MonoSAT later also exists in the original polygraph.
This is crucial to the informativeness of \name.
The theorem's proof can be found in Appendix~\ref{section:appendix-proofs}.

\begin{theorem}[Correctness of \pruneconstraints] \label{thm:correctness-of-pruneconstraints}
  Let $G$ and $G_{p}$ be the generalized polygraphs
  before and after \emph{$\pruneconstraints$}, respectively. Then,
  \begin{enumerate}[(1)]
    \item $G$ is \si-acyclic if and only if \emph{$\pruneconstraints$} returns \emph{\True{}} and $G_{p}$ is \si-acyclic.
    \item Suppose that $G_{p}$ is not \si-acyclic.
      Let $\cycle$ be a cycle in a compatible graph with the induced \si{} graph of $G_{p}$.
      Then there is a compatible graph with the induced \si{} graph of $G$ that contains $\cycle$.
  \end{enumerate}
\end{theorem}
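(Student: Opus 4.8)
The plan is to reduce the entire procedure to a finite sequence of single-constraint pruning steps and to show that each step exactly preserves \si-acyclicity, with the theorem following by induction on the number of steps. The crux is a \emph{soundness lemma} for a single pruning decision: if \pruneconstraints{} discards the $\eithervar$ possibility of a constraint $\tuple{\eithervar,\orvar}$, then every compatible graph whose edge set realizes $\eithervar$ has a cycle in its induced \si{} graph (and symmetrically for $\orvar$). First I would establish the standing fact that every edge of the currently known induced graph $\knowninducedgraph = \graphA \cup (\graphA \comp \graphB)$ lies in the induced \si{} graph of \emph{every} compatible graph: $\graphA$ and $\graphB$ consist of known edges, which by Definition~\ref{def:compatible-graphs-with-a-polygraph} belong to every $E'$, and composition is monotone. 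Granting this, the two pruning cases exhibit a cycle directly. In the $\WW$ case the discarded edge $(\fromvar,\tovar,\WW)$ appears in the induced graph via $\inducedrule$ and closes the assumed $\knowninducedgraph$-path from $\tovar$ to $\fromvar$; in the $\RW$ case, composing a known $\graphA$-edge $(\precvar,\fromvar)$ with $(\fromvar,\tovar,\RW)$ yields an induced edge $(\precvar,\tovar)$ that closes the assumed path from $\tovar$ to $\precvar$.

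For part~(1) I would treat one step $G_i \to G_{i+1}$, which prunes one possibility of a constraint $c=\tuple{\eithervar,\orvar}$ and promotes the other to known edges, and prove $G_i$ is \si-acyclic iff $G_{i+1}$ is. For the forward direction, take an acyclic compatible graph of $G_i$; by the soundness lemma it cannot realize the pruned possibility, so it realizes $\orvar$ and is therefore still compatible with $G_{i+1}$, witnessing its acyclicity; moreover, if some constraint had \emph{both} possibilities discarded, then by the lemma no compatible graph is acyclic, so \pruneconstraints{} returns \False{} only when $G_i$ is not \si-acyclic. For the reverse direction, an acyclic compatible graph of $G_{i+1}$ already contains the promoted $\orvar$ edges; since every $\WW$/$\RW$ edge is generated by the possibility of a \emph{unique} constraint, deleting any spurious $\eithervar$ edges yields a graph that resolves $c$ via $\orvar$, is compatible with $G_i$, and remains acyclic because deleting base edges only removes edges from the induced \si{} graph. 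Chaining these equivalences along $G=G_0,\dots,G_m=G_p$ establishes statement~(1).

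For part~(2) the key observation is that the extra known edges of $G_p$ are exactly the $\orvar$-choices forced by the pruned constraints, and these are legitimate resolutions of those same constraints already present in $G$. Hence, taking the compatible graph of $G_p$ carrying $\cycle$ to be a resolution (one chosen possibility per surviving constraint, adding nothing beyond the known edges), I would reinterpret it as a compatible graph of $G$: it resolves every surviving constraint as before and resolves each pruned constraint via its forced $\orvar$, with no $\eithervar$ edge of a pruned constraint present, so it satisfies Definition~\ref{def:compatible-graphs-with-a-polygraph} for $G$. Because the vertex and base-edge sets are unchanged, the induced \si{} graph is identical and still contains $\cycle$, yielding the required compatible graph of $G$.

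The main obstacle I anticipate is the reverse direction of part~(1): the permissive Definition~\ref{def:compatible-graphs-with-a-polygraph} lets a compatible graph of $G_{i+1}$ carry ``spurious'' edges that would violate the pruned constraint $c$ back in $G_i$. Resolving this cleanly hinges on the fact that each $\WW$/$\RW$ edge is produced by exactly one constraint possibility—so trimming such edges disturbs no other constraint—combined with monotonicity of acyclicity under edge deletion; alternatively one restricts attention throughout to minimal compatible graphs. A secondary point to verify is that the nested loop of \pruneconstraints{} truly corresponds to a well-founded sequence of single-constraint steps, which holds because constraints are only ever removed and the loop halts once $\cons_{\g}$ stabilizes.
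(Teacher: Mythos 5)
Your proposal takes essentially the same route as the paper: induction over pruning iterations, the key observation that a pruned possibility forces a cycle in the induced \si{} graph of every compatible graph (which the paper states informally and you prove via the two $\WW$/$\RW$ cases), and the transfer of compatible graphs between $G_{i}$ and $G_{i+1}$ for the converse and for part~(2). In fact you are more careful than the paper, which silently asserts that a compatible graph of the pruned polygraph is compatible with the original one and never confronts the ``spurious edges'' issue you correctly isolate. One caveat: your justification for the trimming step---that each $\WW$/$\RW$ edge is generated by a \emph{unique} constraint possibility---is false when two transactions write several common keys (the edge $(T,S,\WW)$ then appears in the $\eithervar$ of every such per-key constraint); the repair is to note that any constraints sharing an edge involve the same writer pair and are forced to resolve in the same direction in any acyclic compatible graph, so trimming remains safe.
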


Combining Theorems~\ref{thm:polygraph-si} and~\ref{thm:correctness-of-pruneconstraints}, we prove \name{}'s soundness.
\begin{theorem}[Soundness of \name] \label{thm:soundness-of-polysi}
  \emph{\name{}} is sound, i.e., if \emph{\name{}} returns \emph{$\False$},
  then the input history indeed violates \si.
\end{theorem}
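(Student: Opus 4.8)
The plan is to prove \name's soundness by tracing the checker's logic backward from its rejection of a history. Suppose \name{} returns \False{} on input $\H$. By inspection of Algorithm~\ref{alg:checksi}, there are exactly three places where \False{} is returned: (i) the internal-consistency check on line~\ref{line:checksi-intaxiom}, when $\H \not\models \intaxiom$ (or exhibits aborted/intermediate reads); (ii) line~\ref{line:checksi-return-false}, when $\pruneconstraints$ fails; and (iii) line~\ref{line:checksi-call-solveconstraints}, when MonoSAT reports that the encoded induced \si{} graph is not acyclic. My strategy is to handle each of these three exit points separately and show that in every case $\H \not\models \si$.

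The first case is immediate: by Theorem~\ref{thm:polygraph-si}, $\H \models \si$ requires $\H \models \intaxiom$, so if this axiom fails then $\H \not\models \si$ directly. The remaining two cases both reduce to establishing that the generalized polygraph of $\H$ is \emph{not} \si-acyclic, whence Theorem~\ref{thm:polygraph-si} again yields $\H \not\models \si$. For case (ii), I would invoke Theorem~\ref{thm:correctness-of-pruneconstraints}(1): its contrapositive states that if $\pruneconstraints$ returns \False{}, then the original polygraph $G$ cannot have been \si-acyclic (since \si-acyclicity of $G$ forces $\pruneconstraints$ to return \True{}). For case (iii), I would argue that a non-acyclic MonoSAT verdict on the encoded graph means the pruned polygraph $G_p$ is not \si-acyclic, and then apply Theorem~\ref{thm:correctness-of-pruneconstraints}(1) once more to conclude that $G$ itself is not \si-acyclic.

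The two substantive obligations are therefore: first, that $\pruneconstraints$ and the SAT encoding are faithful translations, so that MonoSAT's acyclicity test on the formula produced by $\encodeconstraints$ exactly decides \si-acyclicity of $G_p$ in the sense of Definition~\ref{def:si-acyclicity}; and second, the clean application of the already-proven pruning correctness theorem. For the encoding fidelity I would check that the clauses generated on lines~\ref{line:encodeconstraints-either-or} and~\ref{line:encodeconstraints-add-clauses-for-I} make the Boolean variables $\BV_{i,j}$ range exactly over compatible edge sets (Definition~\ref{def:compatible-graphs-with-a-polygraph}) and make each $\BV^{\inducedgraph}_{i,j}$ true precisely when the induce rule $\inducedrule = (\SO \cup \WR \cup \WW)\comp\RW?$ puts an edge from $T_i$ to $T_j$; then a satisfying assignment corresponds to an acyclic compatible graph with the induced \si{} graph, so MonoSAT's failure to find one means no such acyclic graph exists, i.e. $G_p$ is not \si-acyclic.

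I expect the main obstacle to be the encoding-fidelity step rather than the logical skeleton, since the soundness theorem itself is essentially a corollary once Theorems~\ref{thm:polygraph-si} and~\ref{thm:correctness-of-pruneconstraints} are in hand. The delicate point is confirming that the either/or clause structure enforces the mutual-exclusion condition of Definition~\ref{def:compatible-graphs-with-a-polygraph} (one possibility's edges all present while the other's are all absent) and that the $\BV^{\inducedgraph}$ definition faithfully captures composition with the reflexive closure $\RW?$. Because soundness only needs the ``returns \False{} $\Rightarrow$ violation'' direction, I can lean on the easier half of each correspondence: I merely need that every acyclic compatible graph with the induced \si{} graph yields a satisfying assignment, so that MonoSAT's \False{} genuinely certifies the absence of any such graph. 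This asymmetry keeps the proof short, and I would present it as a direct combination of the two cited theorems with the three-case analysis above.
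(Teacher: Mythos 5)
Your proposal is correct and follows essentially the same route as the paper, which proves soundness precisely by combining Theorem~\ref{thm:polygraph-si} with Theorem~\ref{thm:correctness-of-pruneconstraints}; your three-case analysis of the return points and the encoding-fidelity obligation simply spell out details the paper leaves implicit in its one-line argument.
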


\subsection{SAT Encoding} \label{ss:sat-encoding}

In this step we encode the induced \si{} graph, denoted $\inducedgraph$,
of the pruned polygraph $\g$ into an SAT formula
(line~\linecode{\ref{alg:checksi}}{\ref{line:func-encodeconstraints}}).
We use $\BV$ and $\Clause$ to denote the set of Boolean variables
and the set of clauses of the SAT formula, respectively.
For each pair of vertices $\vvar_{i}$ and $\vvar_{j}$,
we create two Boolean variables $\BV_{i, j}$ and $\BV^{\inducedgraph}_{i, j}$:
one for the polygraph $G$, and the other for its induced \si{} graph $\inducedgraph$.
An edge $(\vvar_{i}, \vvar_{j})$ is in the compatible graph with $\inducedgraph$ (resp., $\g$)
if and only if $\BV^{\inducedgraph}_{i, j}$ (resp., $\BV_{i, j}$) is assigned to \True{} by MonoSAT
in testing the acyclicity of $\inducedgraph$.

We first encode the polygraph $\g$.
For each edge $(\vvar_{i}, \vvar_{j})$ in the known graph of $\g$,
we add a clause $\BV_{i, j} = \True$.
For each constraint $\tuple{\eithervar, \orvar}$, the clause
$\bigl(\bigwedge\limits_{(\vvar_{i}, \vvar_{j}, \_) \in \eithervar}
  \!\!\!\!\BV_{i, j}
  \land \bigwedge\limits_{(\vvar_{i}, \vvar_{j}, \_) \in \orvar}
  \!\!\!\!\lnot\BV_{i, j}\bigr) \;\lor\;
\bigl(\bigwedge\limits_{(\vvar_{i}, \vvar_{j}, \_) \in \orvar}
  \!\!\!\!\BV_{i, j}
  \land \bigwedge\limits_{(\vvar_{i}, \vvar_{j}, \_) \in \eithervar}
  \!\!\!\!\lnot\BV_{i, j}\bigr)$
expresses that exactly one of $\eithervar$ or $\orvar$ happens.


Then we encode the induced \si{} graph $\inducedgraph$ of $\g$.
The auxiliary graph $\graphA$ contains all the known and potential $\SO$, $\WR$, and $\WW$ edges of $\g$
(lines~\linecode{\ref{alg:checksi}}{\ref{line:encodeconstraints-A}}
and~\linecode{\ref{alg:checksi}}{\ref{line:encodeconstraints-add-ww-to-A}}),
while $\graphB$ contains all the known and potential $\RW$ edges of $\g$
(lines~\linecode{\ref{alg:checksi}}{\ref{line:encodeconstraints-B}}
and~\linecode{\ref{alg:checksi}}{\ref{line:encodeconstraints-add-rw-to-B}}).
The clauses defined on $\BV^{\inducedgraph}$ at line~\linecode{\ref{alg:checksi}}{\ref{line:encodeconstraints-add-clauses-for-I}}
state that $\inducedgraph$ is the union of $\graphA$
and the composition of $\graphA$ with $\graphB$.



\subsection{Completing the SI Checking} \label{ss:completing-si-checking}

Theorem~\ref{thm:depgraph-si} assumes histories with only committed transactions
and considers the $\WR$, $\WW$, and $\RW$ relations over transactions
rather than read/write operations inside them.
This would miss  non-cycle anomalies.
Hence, for completeness, \name{} also checks whether a history
exhibits $\abortedreads$ or $\intermediatereads$ anomalies~\cite{Adya:PhDThesis1999,Elle:VLDB2020}
(line~\linecode{\ref{alg:checksi}}{\ref{line:checksi-intaxiom}}):

\begin{itemize}
  \item \emph{Aborted Reads}:
    a committed transaction cannot read a value from an aborted 
    transaction.
  \item \emph{Intermediate Reads}:
    a transaction cannot read a value that was overwritten by the transaction that wrote it.
\end{itemize}

Note that \name{}'s completeness relies on a common assumption  about  \emph{determinate} transactions \cite{Adya:PhDThesis1999,  Framework:CONCUR2015,ClientCentric:PODC2017,AnalysingSI:JACM2018,Elle:VLDB2020,Complexity:OOPSLA2019}, i.e., the status of each transaction, whether committed or aborted,  is legitimately decided.  
Indeterminate transactions are inherent to black-box testing:
it is difficult for a client to justify the status of a transaction due to the invisibility of system internals.
Together with the completeness of the dependency-graph-based characterization of \si{} in Theorem~\ref{thm:depgraph-si}, 
we prove \name{}'s completeness.

\begin{theorem}[Completeness of \name] \label{thm:completeness}
  \emph{\name{}} is complete with respect to a history that contains only determinate transactions,
  i.e., if such a history indeed violates \emph{\si},
  then \emph{\name{}} returns \false.
\end{theorem}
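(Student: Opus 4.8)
The plan is to prove Theorem~\ref{thm:completeness} by a contrapositive decomposition of the ways a history can violate \si{}, mirroring the structure of the \checksi{} procedure. According to Theorem~\ref{thm:depgraph-si}, a determinate history $\H$ fails \si{} exactly when either (a) $\H \not\models \intaxiom$ (or exhibits an aborted/intermediate read, which the standard formalism of committed-only transactions omits but which \name{} checks separately), or (b) $\H \models \intaxiom$ but no choice of $\WR$, $\WW$, $\RW$ yields a dependency graph whose induced relation $(\SO \cup \WR \cup \WW) \comp \RW?$ is acyclic. I would show that in case (a) line~\linecode{\ref{alg:checksi}}{\ref{line:checksi-intaxiom}} returns \false{}, and in case (b) the remaining pipeline returns \false{}. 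The crux is to argue that the graph-theoretic search performed by \name{} is a faithful and exhaustive realization of the existential quantifier over $\WW$ in Theorem~\ref{thm:depgraph-si}.

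First I would dispatch the non-cycle anomalies. Since the history is determinate, every transaction's committed/aborted status is fixed, so the predicates $\abortedreads$ and $\intermediatereads$ are well-defined; if either holds, or if $\H \not\models \intaxiom$, the guard on line~\linecode{\ref{alg:checksi}}{\ref{line:checksi-intaxiom}} fires and \name{} returns \false{}, as required. This reduces the problem to histories satisfying $\intaxiom$ with no aborted or intermediate reads, where the committed-transaction formalism of Theorem~\ref{thm:depgraph-si} applies verbatim. The key observation, which I would make precise, is that for such histories the ``violates \si'' condition is equivalent to the generalized polygraph of $\H$ \emph{not} being \si-acyclic; this is exactly the converse direction of Theorem~\ref{thm:polygraph-si}, which I may assume.

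Next I would connect \si-acyclicity of the polygraph to what the algorithm actually decides. By Theorem~\ref{thm:correctness-of-pruneconstraints}(1), $\pruneconstraints$ returns \True{} and the pruned polygraph $G_p$ is \si-acyclic if and only if the original polygraph $G$ is \si-acyclic; contrapositively, if $G$ is not \si-acyclic then either $\pruneconstraints$ returns \false{} (line~\linecode{\ref{alg:checksi}}{\ref{line:checksi-return-false}}) or it returns \True{} but $G_p$ remains non-\si-acyclic. In the first subcase \name{} has already returned \false{}. In the second subcase I must show that the SAT encoding of $\inducedgraph$ (procedure \encodeconstraints{}) together with the MonoSAT acyclicity test reports unsatisfiability, hence \name{} returns \false{} at line~\linecode{\ref{alg:checksi}}{\ref{line:checksi-call-solveconstraints}}. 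This requires an encoding-faithfulness lemma: the Boolean models of the formula built from $\BV$ and $\BV^{\inducedgraph}$ are in bijection with the compatible graphs of $G_p$ and their induced \si{} graphs (the constraint clauses on line~\linecode{\ref{alg:checksi}}{\ref{line:encodeconstraints-either-or}} enforce exactly one of $\eithervar$/$\orvar$, and the clauses on line~\linecode{\ref{alg:checksi}}{\ref{line:encodeconstraints-add-clauses-for-I}} force $\BV^{\inducedgraph}$ to equal the induce rule $\inducedrule = (\SO \cup \WR \cup \WW) \comp \RW?$). Granting this correspondence, a satisfying assignment whose $\inducedgraph$-edges are acyclic would exhibit an acyclic compatible graph with the induced \si{} graph of $G_p$, contradicting that $G_p$ is not \si-acyclic; so MonoSAT finds no such assignment and \name{} returns \false{}.

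The main obstacle I anticipate is establishing this encoding-faithfulness lemma rigorously, namely that MonoSAT's graph-acyclicity monotone predicate over the $\BV^{\inducedgraph}$ variables tests precisely the acyclicity of a \emph{compatible graph with the induced \si{} graph} in the sense of Definition~\ref{def:si-acyclicity} (ignoring edge types), and that the clause on line~\linecode{\ref{alg:checksi}}{\ref{line:encodeconstraints-add-clauses-for-I}} correctly captures composition $\graphA \comp \graphB$ including the reflexive closure $\RW?$ encoded by the first disjunct. I would handle this by proving two inclusions: every compatible induced \si{} graph of $G_p$ arises from some satisfying truth assignment, and conversely every satisfying assignment yields such a compatible graph, so that the SAT formula is satisfiable if and only if $G_p$ is \si-acyclic. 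Finally, I would assemble the pieces: failure of \si{} implies one of the three exit points of Algorithm~\ref{alg:checksi} returns \false{}, which is exactly the claimed completeness. Throughout I would lean on Theorems~\ref{thm:depgraph-si}, \ref{thm:polygraph-si}, and~\ref{thm:correctness-of-pruneconstraints}, so the only genuinely new work is the determinacy reduction for non-cycle anomalies and the encoding-faithfulness argument.
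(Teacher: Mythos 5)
Your proposal is correct and follows essentially the same route as the paper, which establishes completeness by combining the explicit aborted/intermediate-reads checks under the determinacy assumption with Theorems~\ref{thm:depgraph-si}, \ref{thm:polygraph-si}, and~\ref{thm:correctness-of-pruneconstraints}(1), leaving the faithfulness of the SAT encoding implicit in the construction of \encodeconstraints{}. Your version simply makes that chain (and the encoding-faithfulness lemma the paper takes for granted) explicit.
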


\section{Experiments}  \label{section:experiments}


We have presented our SI checking algorithm \name{} and established its \emph{soundness} and \emph{completeness}.
 In this section, we conduct a comprehensive assessment of  \name{}
 to answer the following questions with respect to the remaining criteria of  SIEGE+ (Section~\ref{section:intro}):


\vspace{1ex}
\noindent \textbf{(1) Effective:}
Can \name{} find SI violations in (production) databases?

\vspace{1ex}
\noindent \textbf{(2) Informative:}
  Can \name{} provide understandable counterexamples for SI violations?

\vspace{1ex}
\noindent \textbf{(3) Efficient:}
How efficient is \name{} (and its components)?
Can \name{} outperform the state of the art  under \emph{various} workloads and scale up to large-sized workloads?

\vspace{1ex}
Our answer to (1) is twofold (Section~\ref{ss:effectiveness}):
(i) \name{} successfully reproduces all of 2477 known SI anomalies in production databases;
and (ii) we use \name{} to detect novel SI violations in three cloud databases of different kinds:
 the graph database Dgraph~\cite{dgraph},  the relational database MariaDB-Galera~\cite{maria-galera}, and YugabyteDB~\cite{YugabyteDB} supporting multiple data models.
To answer (2) we provide an  algorithm that  recovers the violating scenario, highlighting  the cause of the violation found
 (Section \ref{ss:understanding-violations}).
Regarding  (3),  we (i) show that \name{} outperforms several competitive baselines including the most performant SI and serializability checkers to date; (ii) measure the contributions of its different components/optimizations to the overall performance
under both general and specific transaction workloads
 (Section \ref{ss:efficiency}); 
 and (iii) demonstrate its scalability for large-sized workloads with one billion keys and one million transactions.
Note that we demonstrate \name{}'s \textbf{generality} along with the answers to questions (1) and (3). 


\subsection{Workloads,  Benchmarks, and Setup}
\subsubsection{Workloads and Benchmarks}
To evaluate \name{} on
 \emph{general} read-only, write-only, and read-write transaction workloads, we have implemented a parametric
 workload generator.
 Its parameters are:  the number of client sessions (\#sess; 20 by default), the
number of transactions per session (\#txns/sess; 100 by default), the number of read/write operations per
transaction (\#ops/txn; 15 by default), the percentage of reads (\%reads; 50\% by default), the total number of keys (\#keys; 10k by default), and the
key-access distribution (dist) including uniform,  zipfian (by default),  and hotspot (80\% operations touching 20\% keys).
Note that the default 2k transactions with 30k operations issued by 20 sessions are sufficient to distinguish \name{} from  competing tools (see Section \ref{sss:comparison}).

%

Among such general workloads,
we also consider three representatives,  each with 10k transactions and 80k operations in total (\#sess=25,  \#txns/sess=400, and \#ops/txn=8),  in the comparison with Cobra and the decomposition and differential analysis of \name{}: (i) GeneralRH, read-heavy workload with 95\% reads; (ii) GeneralRW, medium workload with 50\% reads; and (iii) GeneralWH, write-heavy workloads with 30\% reads.




We also use three synthetic 
 benchmarks 
 with 
 only  serializable histories of 
 at least 10k transactions (which also satisfy SI):

\begin{itemize}
\item RUBiS~\cite{crubis}:   an eBay-like bidding system where users can, for example, register and bid for items.  The dataset archived by \cite{Cobra:OSDI2020} contains 20k
users and 200k items.

\item TPC-C~\cite{tpcc}: an open standard for benchmarking
 online transaction processing with a mix of five  different types of transactions (e.g.,  for orders and payment)
  portraying the activity of a wholesale supplier.  The dataset includes one warehouse, 10 districts,
and 30k customers.


\item C-Twitter~\cite{ctwitter}: a Twitter clone where users can,  for example, tweet and follow or unfollow other users (following the zipfian distribution).


\end{itemize}

To assess \name{}'s scalability, we also consider large-sized workloads with one billion keys and one million transactions (\#sess=20; \#txns/sess=50k). The workloads contain both short and long transactions;
the default sizes are 15 and 150, respectively.


%
%


\subsubsection{Setup.}
We use a PostgreSQL (v15 Beta 1) instance to produce \emph{valid} histories without isolation violations:
for the performance comparison with other SI checkers and the decomposition and differential analysis of \name{} itself,
 we set the isolation level to \emph{repeatable read} (implemented as SI in PostgreSQL \cite{postgresql-rr});
 for the runtime comparison with Cobra (Section \ref{sss:comparison}), we  use  the  \emph{serializability} isolation level
  to produce serializable histories.
We co-locate the client threads and PostgreSQL (or other databases  for testing; see Section \ref{sss:new-violation}) on a local machine.
 Each client thread issues a stream of transactions produced by our workload generator to the database and records the execution history.  All histories are   saved to a file  to benchmark each tool's performance.

We have implemented \name{} in 2.3k  lines of Java code,  and  
 the workload generator,  including the transformation from generated key-value operations to SQL queries (for the interactions with relational databases such as PostgreSQL), in 2.2k 
   lines of Rust code.  We ensure unique values written for each key using counters.
We use a simple database schema of  a two-column  table storing
 keys and values,  which is effective to find real violations in three production databases (see Section \ref{ss:effectiveness}).

We conducted all
 experiments 
 with a 4.5GHz
Intel Xeon E5-2620 (6-core) CPU, 48GB memory,  and an  NVIDIA  K620 GPU.


\subsection{Finding SI Violations} \label{ss:effectiveness}


\subsubsection{Reproducing Known \si{} Violations} \label{sss:reproduce}
 \name{} successfully reproduces \emph{all} known SI violations in an extensive collection of 2477 anomalous
  histories \cite{Complexity:OOPSLA2019,YugabyteDB-bug,CockroachDB-bug}. These histories  were obtained  from the earlier releases of three different production databases,
  i.e., CockroachDB, MySQL-Galera, and YugabyteDB; see Table \ref{benchmark} for details.  This set of experiments  
 provides supporting evidence for   
   \name{}'s \emph{soundness} and \emph{completeness}, established in Section \ref{section:si-sat}.
  
\begin{table}[t]
\centering
\caption{Summary of tested databases.   Multi-model refers to relational DBMS,  document store, and
wide-column store. }\label{benchmark} 
\small
\begin{tabular}{cccl}
  \toprule
  Database &	GitHub Stars &  Kind & Release \\
  \hline
  \textbf{New violations found:} &  &   &  \\
  Dgraph & 18.2k &  Graph & v21.12.0 \\
  MariaDB-Galera & 4.4k &  Relational & v10.7.3 \\
  YugabyteDB & 6.7k &  Multi-model & v2.11.1.0 \\
  \hline
  \textbf{Known bugs \cite{Complexity:OOPSLA2019,YugabyteDB-bug,CockroachDB-bug}:} &	 &   &  \\
  CockroachDB &	25.1k  &  Relational & v2.1.0\\ 
           & &   & v2.1.6 \\
  MySQL-Galera &	381  &  Relational & v25.3.26 \\
  YugabyteDB & 6.7k	 &  Multi-model & v1.1.10.0 \\

\bottomrule
\end{tabular}
\normalsize
\end{table}

\subsubsection{Detecting New  Violations.} \label{sss:new-violation}
We use \name{} to examine   recent releases of  three  well-known cloud databases (of different kinds) that  claim to provide SI:
Dgraph~\cite{dgraph}, MariaDB-Galera~\cite{maria-galera}, and YugabyteDB~\cite{YugabyteDB}.  See  Table \ref{benchmark} for details.  We have found and reported novel SI violations in all three databases which, as of the time of writing, are being investigated by the developers. 
In particular,   as communicated with the developers,  
(i)
our finding has helped the DGraph team confirm some of their  suspicions about their latest release; and (ii)
Galera has confirmed the incorrect claim on preventing lost updates for transactions issued on different cluster nodes and 
thereafter removed any claims on SI or ``partially supporting SI'' from the previous documentation.\footnote{\url{https://github.com/codership/documentation/commit/cc8d6125f1767493eb61e2cc82f5a365ecee6e7a} and \url{https://github.com/codership/documentation/commit/d87171b0d1b510fe59973cb7ce5892061ce67b80}}




\subsection{Understanding Violations} \label{ss:understanding-violations}
MonoSAT reports 
 cycles, constructed from its output logs, upon detecting an \si{} violation.
However,  such cycles
 are   \emph{uninformative} with respect to  understanding how the violation actually occurred.
For instance,  Figure~\ref{ce:galera}(a)   depicts the original cycle  returned by MonoSAT  for an SI violation found in MariaDB-Galera, where it is difficult to  identify the  cause of the violation.

Hence,
 we have designed an algorithm to interpret the returned cycles.
The key idea is to
(i) bring back any potentially
involved transactions and the associated dependencies,
(ii) restore the violating scenario by identifying the core participants and dependencies,
and (iii) remove the ``irrelevant'' dependencies to simplify the scenario.
We have integrated  into  \name{}  the algorithm written in 300 lines of C++ code.  The pseudocode is given in Appendix~\ref{section:appendix-interpretation}.  We have also integrated the Graphviz tool \cite{graphviz} into \name{} to visualize the final counterexamples (e.g., Figure~\ref{ce:galera}).

\vspace{1ex}
\noindent \textbf{Minimal Counterexample.} 
	A ``minimal'' counterexample would facilitate  understanding  how the violation actually occurred. 
	We define a minimal violation as a polygraph where no dependency can be removed; otherwise, the resulting polygraph would pass the verification of \name{}. 
 Given a polygraph $G$ (constructed from a collected history) and a cycle $C$ (returned by MonoSAT), 
there may however be more than one minimal violation with respect to $G$ and $C$ due to different interpretations of uncertain dependencies.
We call the one with  the least number of dependencies 
 \emph{the minimal counterexample with respect to $G$ and $C$}.

 \name{}  guarantees the minimality  of returned counterexample:

\begin{theorem}[Minimality]
	 \label{thm:minimal-counterexample-polysi}
	\name{} always returns a minimal counterexample with respect to $G$ and $C$, with
	$G$ the polygraph built from a history and  $C$ the cycle output by MonoSAT.
\end{theorem}

\looseness=-1
	We defer to Appendix~\ref{section:appendix-minimality} for the formal definitions of the minimal violation and counterexample and the proof of Theorem \ref{thm:minimal-counterexample-polysi}.

\begin{figure*}[t]
  \centering
    \begin{subfigure}[b]{0.2\textwidth}
        \centering
        \includegraphics[width = \textwidth]{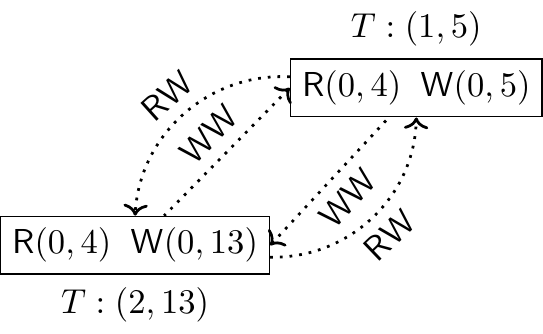}
        \caption{Original output}
    \end{subfigure}\hspace{3ex}
    \begin{subfigure}[b]{0.23\textwidth}
        \centering
        \includegraphics[width = \textwidth]{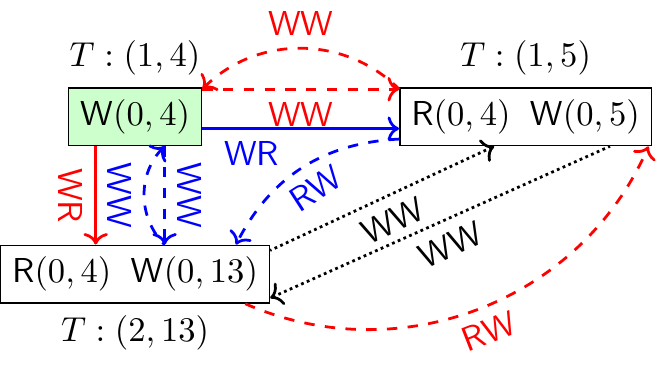}
        \caption{Missing  participants}
    \end{subfigure}\hspace{3ex}
    \begin{subfigure}[b]{0.23\textwidth}
        \centering
        \includegraphics[width = \textwidth]{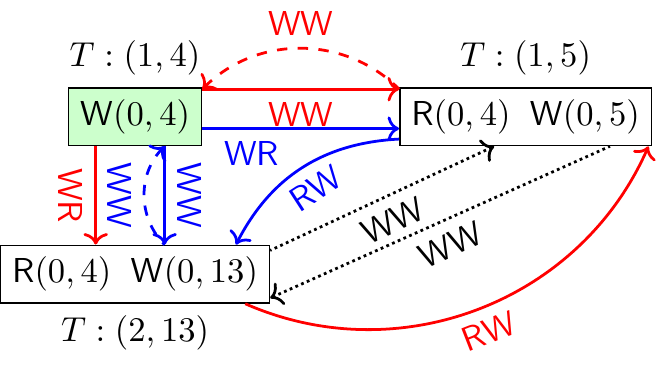}
        \caption{Recovered scenario}
    \end{subfigure}\hspace{3ex}
    \begin{subfigure}[b]{0.23\textwidth}
        \centering
        \includegraphics[width = \textwidth]{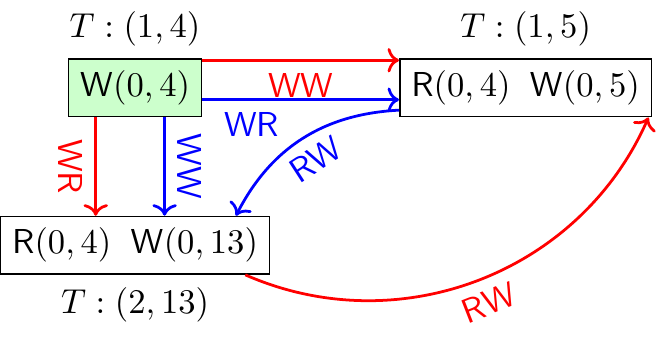}
        \caption{Finalized scenario}
    \end{subfigure}
    \caption{ \label{ce:galera}  Lost update: the SI violation  found in MariaDB-Galera.
    The original output dependencies  are represented by dotted black arrows.
    The recovered dependencies are colored in red/blue with dashed and solid arrows  representing uncertain  and certain dependencies,  respectively.  The  missing transaction is colored in green.
We omit  key 0, associated with all dependencies.    }
\end{figure*}

\vspace{1ex}
\noindent \textbf{Violation Found in MariaDB-Galera.}
We present an example violation detected in MariaDB-Galera. 
  In particular,
we illustrate how the interpretation algorithm 
helps us locate the violation cause: \emph{lost update}. 
We defer the Dgraph and YugabyteDB anomalies (causality violations) to Appendix~\ref{section:appendix-causality}.
In the following example, we use T:$(s, n)$ to
denote the $n$th transaction issued by session $s$.


 Given the original cycles returned by  MonoSAT in Figure \ref{ce:galera}(a),  \name{} first finds  the (only) ``missing'' transaction T:(1,4) (colored in green)
  and the associated  dependencies, as shown in Figure \ref{ce:galera}(b).  Note that some of the dependencies are uncertain at this moment, e.g.,  the $\WW$ dependency between T:(1,4) and T:(1,5) (in red).
\name{} then restores the violating scenario by resolving such uncertainties.  For example,  as depicted in Figure \ref{ce:galera}(c),  \name{} determines that
W(0,4) was actually installed first in the database,  i.e.,  T:(1,4)$\rel{\WW}$T:(1,5), because   there would otherwise  be an undesired cycle with the known dependencies,  i.e.,
T:(1,5)$\rel{\WW}$T:(1,4)$\rel{\WR}$T:(1,5).  The same reasoning applies to determine the $\WW$ dependency between T:(1,4) and T:(2,13) (in blue).
Finally,  \name{} finalizes the violating scenario by removing any remaining uncertainties including those dependencies not involved in the actual violation (the $\WW$ dependency between T:(1,5) and T:(2,13) in this case).

The violating scenario now becomes informative and explainable:
transaction T:(1,4)  writes value 4 on key 0,  which is read by  transactions T:(2,13) and T:(1,5).  Both transactions subsequently commit their writes on key 0 by W(0,13) and W(0,5), respectively,  which results in  a \emph{lost update} anomaly.

\subsection{Performance Evaluation} \label{ss:efficiency}

In this section, we conduct an in-depth performance analysis of \name{}
and compare it to the following black-box checkers:

\begin{itemize}
  \item dbcop \cite{Complexity:OOPSLA2019} is,
  to the best of our knowledge,  the  most efficient black-box SI checker that does not use an off-the-shelf solver.
Note that, unlike our \name{} tool, dbcop does not check \emph{{aborted reads}} or \emph{intermediate reads} (see  Section \ref{ss:completing-si-checking}).

  \item Cobra \cite{Cobra:OSDI2020} is the state-of-the-art SER checker utilizing both  MonoSAT  and GPUs to  accelerate the checking procedure.
Cobra serves as a  
baseline because (i) checking SI is more complicated than checking SER in general  \cite{Complexity:OOPSLA2019},  and constraint pruning and the MonoSAT encoding for SI are  more challenging in particular due to 
more complex cycle patterns in dependency graphs (Theorem \ref{thm:depgraph-si}, Section \ref{sss:depgraph-si});
 and (ii) Cobra is the  most  performant  SER checker to date.


  \item CobraSI:  We implement the incremental algorithm \cite[Section 4.3]{Complexity:OOPSLA2019} for reducing checking SI to checking serializability (in polynomial time) to leverage  Cobra.
    We consider two variants: (i) CobraSI without GPU for a fair comparison with  \name{} and dbcop, which
    do not employ GPU or multithreading; and (ii)
     CobraSI with GPU as a strong competitor.


%
%
%

\end{itemize}


\input{tables/runtime}

\subsubsection{Performance Comparison with State of the Art.}
\label{sss:comparison}

\looseness=-1
Our first set of experiments compares \name{} with the competing SI checkers under a wide range of workloads.
The input histories  extracted from PostgreSQL (with the \emph{repeatable read} isolation level) are all valid with respect to SI.  The experimental results are shown in Figure~\ref{comparison}:
 \name{} significantly surpasses not only  the state-of-the-art SI checker dbcop but
 also CobraSI with GPU.
  In particular,  with more concurrency, such as  more sessions (a),  transactions per session (b), and operations per transaction (c),
CobraSI with GPU exhibits exponentially increasing checking time\footnote{Two major reasons are: (i) Cobra has already been shown to exhibit exponential verification time under general workloads \cite{Cobra:OSDI2020}; and (ii) the incremental algorithm for reducing checking SI to checking serializability typically doubles the number of transactions in a given history \cite{Complexity:OOPSLA2019}, rendering the checking even more expensive. } while \name{}  incurs only moderate overhead.  The result depicted in Figure \ref{comparison}(f) is also consistent: with the skewed key accesses representing high  concurrency as in the zipfian and hotspot distributions,  both dbcop and CobraSI without GPU acceleration time out.  Moreover,  even with the GPU acceleration,  CobraSI takes 6x more time than \name{}.
Finally,  unlike the other SI checkers,  \name{}'s performance is fairly stable with respect to  varying read/write proportions (d) and keys (e).


In Figure \ref{comparison-cobra}(a)  we compare \name{}
 with the baseline serializability checker Cobra.  We present
 the checking time on various benchmarks.  \name{} outperforms Cobra (with its GPU acceleration enabled) in five of the six benchmarks with up to 3x improvement (as for GeneralRH).
The only exception is TPC-C, where most of the transactions 
have the read-modify-write pattern,\footnote{In a read-modify-write transaction each read is followed by a write on the same key.} for which  Cobra
implements a specific optimization to efficiently infer dependencies before pruning and encoding.



	We also measure the memory usage for all the checkers under the same settings as in Figure~\ref{comparison} and Figure~\ref{comparison-cobra}(a).
As shown in Figure~\ref{comparison-memory}, \name{} consumes less memory (for storing both  generated graphs and constraints) than the competitors in general. Note that
 dbcop, the only checker that does not rely on solving and stores no constraints, is not competitive with \name{} for most of the cases. Regarding the comparison on specific benchmarks (Figure~\ref{comparison-cobra}(b)), \name{} and Cobra with GPU acceleration have similar overheads, while \name{} (resp. Cobra) requires less  memory for read-heavy workloads (resp. TPC-C).

\begin{table}[h]
	\centering
	\caption{Number of constraints and unknown dependencies before and after  pruning (P)  in the six benchmarks.}\label{benchmark-stat}
	\begin{tabular}{c|rr|rr}
		Benchmark  & \#cons.  & \#cons. & \#unk.  dep. & \#unk.  dep. \\
		& before P & after P & before P     & after P      \\
		\hline

		TPC-C      & 386k     & 0       & 3628k        & 0            \\
		GeneralRH  & 4k       & 29      & 39k          & 77           \\
		RUBiS      & 14k      & 149     & 171k         & 839          \\
		C-Twitter  & 59k      & 277     & 307k         & 776          \\

		GeneralRW  & 90k      & 2565    & 401k         & 5435         \\
		GeneralWH  & 167k     & 6962    & 468k         & 14376        \\


	\end{tabular}
	\vspace{-3ex}
\end{table}

\subsubsection{Decomposition Analysis of \name} \label{sss:decomposition}

We measure \name's checking time in terms of stages:
\emph{constructing},  which builds up a generalized polygraph from a given history;
\emph{pruning}, which prunes constraints in the generalized polygraph;
\emph{encoding}, which encodes the graph and the remaining constraints; and
\emph{solving}, which  runs the MonoSAT solver.

Figure \ref{fig:decomposition} depicts the results on six different datasets. 
Constructing a generalized polygraph is  relatively inexpensive.
The overhead of pruning is fairly constant, regardless of the workloads; \name{} can effectively prune (resp. resolve) a huge number of constraints (resp. unknown dependencies)  in this phase.
See Table~\ref{benchmark-stat} for details.
In particular, for TPC-C which contains only read-only and read-modify-write transactions,
\name{} is able to resolve all uncertainties on $\WW$ relations
and identify the unique version chain for each key.
 The encoding effort is moderate; TPC-C incurs more overhead as the number of operations in total is 5x more than the others.
 The  solving time  depends on the remaining constraints and unknown dependencies after pruning,  e.g.,  the left four datasets incur negligible overhead  (see  Table \ref{benchmark-stat}).


\subsubsection{Differential Analysis of \name{}} \label{sss:differential}


To investigate the contributions of \name{}'s two major optimizations,
we experiment with three variants:
(i) \name{} itself;  (ii) \name{} without pruning (P) constraints;
and (iii) \name{} without both  compacting (C) and pruning the constraints. 
Figure~\ref{fig:differential-analysis} demonstrates the acceleration produced by each optimization.  Note that the two variants without optimization exhibit (16GB) memory-exhausted  runs on TPC-C, which contain considerably more uncertain dependencies (3628k) and constraints (386k) without pruning than the other datasets (see Table \ref{benchmark-stat}).




\subsubsection{Scalability.}
	To assess \name{}'s scalability, we generate transaction workloads with one billion keys and one million transactions with hundreds of millions of operations.
	We experiment with varying read proportions and long transaction sizes (up to 450 operations per transaction).
As shown in Figure~\ref{large-sized-exp}, \name{} consumes less than 40GB memory in all cases and at  most 4 hours  for checking one million transactions.
We also observe that the time used increases linearly with larger-sized transactions while the memory overhead is fairly stable.
To conclude, large-sized workloads are
 quite
manageable for \name{} on modern hardware. Note that the competing checkers, as expected, fail to handle such workloads.

\pgfplotsset{every axis/.append style={font=\LARGE}}
\begin{figure}[t]
	\begin{scaletikzpicturetowidth}{0.20\textwidth}
		\begin{tikzpicture}[scale=\tikzscale]
			\begin{axis}[
				title={(a)},
				xlabel={read proportion \%},
				ylabel={Time (h)},
                ymin=0,
                ymax=7,
				cycle multiindex* list={
					color       \nextlist
					mark list*  \nextlist
				}
				]
				\addplot[color=brown,mark=o,mark size=3pt] table [x=readpct, y=time, col sep=comma] {tables/data/random-large-readpct.csv};
				\legend{\name}
			\end{axis}
		\end{tikzpicture}
		\hspace{3ex}
	\end{scaletikzpicturetowidth}
	\begin{scaletikzpicturetowidth}{0.20\textwidth}
	\begin{tikzpicture}[scale=\tikzscale]
		\begin{axis}[
			title={(b)},
			xlabel={read proportion \%},
			ylabel={Memory (GB)},
            ymin=30,
            ymax=45,
			cycle multiindex* list={
				color       \nextlist
				mark list*  \nextlist
			}
			]
			\addplot[color=brown,mark=o,mark size=3pt] table [x=readpct, y=memory, col sep=comma] {tables/data/random-large-readpct.csv};
			\legend{\name}
		\end{axis}
	\end{tikzpicture}
	\end{scaletikzpicturetowidth}


	\begin{scaletikzpicturetowidth}{0.20\textwidth}
		\begin{tikzpicture}[scale=\tikzscale]
			\begin{axis}[
				title={(c)},
				xlabel={\#ops per long transaction},
				ylabel={Time (h)},
                ymin=2,
                ymax=6,
                legend pos=north west,
				]
				\addplot[color=brown,mark=o,mark size=3pt] table [x=size, y=time, col sep=comma] {tables/data/random-large-longtxn-size.csv};
				\legend{\name}
			\end{axis}
		\end{tikzpicture}
	\end{scaletikzpicturetowidth}
	\hspace{3ex}
	\begin{scaletikzpicturetowidth}{0.20\textwidth}
		\begin{tikzpicture}[scale=\tikzscale]
			\begin{axis}[
				title={(d)},
				xlabel={\#ops per long transaction},
				ylabel={Memory (GB)},
                ymin=34,
                ymax=38,
                legend pos=north west,
				]
				\addplot[color=brown,mark=o,mark size=3pt] table [x=size, y=memory, col sep=comma] {tables/data/random-large-longtxn-size.csv};
				\legend{\name}
			\end{axis}
		\end{tikzpicture}
		\hspace{3ex}
	\end{scaletikzpicturetowidth}
	\caption{ \name{}'s overhead on large-sized workloads with one billion keys and one million transactions.  }
	\label{large-sized-exp}
\end{figure}





\section{Discussion} \label{section:discussion}




\noindent
\textbf{Fault Injection.} 
We have found SI violations in three production databases without injecting faults, such as network partition and clock drift.  
 Since \name{} is an off-the-shelf checker,  
it is straightforward to  
integrate it into existing testing frameworks with fault injection 
such as Jepsen \cite{jepsen}  and CoFI \cite{cofi}; both have been demonstrated to effectively trigger  bugs in distributed systems.

\vspace{1ex}  
\noindent 
\textbf{Database Schema.}
In our testing of production databases, 
  we used a simple,  yet effective,  database schema adopted by most of black-box checkers~\cite{ConsAD:VLDB2014, Complexity:OOPSLA2019, DBLP:conf/netys/ZennouBBEE19,Cobra:OSDI2020}: a  two-column table storing key-value pairs. 
Extending it to multi-columns or even the column-family data model could be done by: (i) representing each cell in a table as a \emph{compound key},  i.e.,   ``TableName:PrimaryKey:ColumnName'',  
and a single value, i.e.,  the content of the cell \cite{Eiger:NSDI2013,MonkeyDB:OOPSLA2021}; and (ii)
 utilizing the compiler in \cite{MonkeyDB:OOPSLA2021} to rewrite (more complex)
SQL queries to key-value read/write operations.

\vspace{1ex}  
\noindent 
\textbf{Predicates.}
To the best of our knowledge, 
 none of the state-of-the-art black-box checkers \cite{ConsAD:VLDB2014, Complexity:OOPSLA2019, DBLP:conf/netys/ZennouBBEE19,Cobra:OSDI2020, Elle:VLDB2020,MonkeyDB:OOPSLA2021} considers predicates nor can they detect predicate-specific anomalies.  Given the non-predicate violations found by \name{} (as well as dbcop~\cite{Complexity:OOPSLA2019} and Elle~\cite{Elle:VLDB2020}),  we conjecture that more anomalies would arise with predicates.
It is therefore interesting future work to
extend our SI characterization  to represent predicates and to explore optimizations with respect to encoding and pruning.






\vspace{1ex}  
\noindent 
\textbf{Unique Value.} 
	As demonstrated in our experiments, guaranteeing ``unique value'' is a  pragmatic, purely  black-box  technique, and effective in detecting anomalies. 
	When this assumption is broken, 
	the complexity of the checking problem would become higher due to inferring uncertain  
	$\WR$ dependencies (a single read may be related to multiple ``false'' writes). Accordingly, we could add the encoding in \name{} for  unique existence of $\WR$ dependency among all uncertainties prior to SAT solving.

\vspace{1ex}  
\noindent 
\textbf{Optimization for Long Histories.}
 \name{}'s overhead when checking one million transactions with 450 operations per long transaction 
  is manageable for modern hardware. Still, optimizing \name{} for long transaction histories would help to reduce checking overhead, especially for \emph{online}
	transactional processing workloads. 
We could consider  periodically taking snapshots (via read-only transactions) across all sessions in a history using an additional client session. Such snapshots
carry  the summary of   write dependencies thus far, which discards prior transactions in the history. As a result, at any point of time, one only needs to consider a segment of the history consisting of the latest snapshot and its subsequent transactions.


\section{Related Work}
\label{section:related-work}



\textbf{Characterizing Snapshot Isolation.}
Many frameworks and  formalisms have been developed to characterize  SI and its variants. 
Berenson et al. ~\cite{CritiqueANSI:SIGMOD1995} considers \si{} as
a multi-version concurrency control mechanism (described also in Section~\ref{ss:si-informal}).
Adya~\cite{Adya:PhDThesis1999} 
presents the first formal definition
of  \si{} using dependency graphs, which,
  as pointed out by \cite{AnalysingSI:JACM2018},  still relies on low-level implementation choices such as how to order  start and commit events in transactions.
Cerone et al.  \cite{Framework:CONCUR2015} proposes an axiomatic framework to declaratively define \si{}  with
the dual notions of visibility (what transactions can observe) and arbitration (the order of installed versions/values).
The follow-up work~\cite{AnalysingSI:JACM2018}
 characterizes \si{}  solely in terms of Adya's  dependency graphs,
 requiring no additional information about transactions.
 Crooks et al.  \cite{ClientCentric:PODC2017}  introduces an alternative implementation-agnostic formalization of \si{} and its variants based on
client-observed values read/written.

 Driven by black-box testing of SI,
 we base
our GP-based characterization  on Cerone and Gotsman's formal specification \cite{AnalysingSI:JACM2018}.  In particular,   our new characterization:

\vspace{1ex}
\noindent
 (i) targets the prevalent \emph{strong session} variant of SI \cite{AnalysingSI:JACM2018, LazyReplSI:VLDB2006}, where \emph{sessions},  advocated by Terry et al.  \cite{TerryDPSTW94},  have been adopted by many production databases in practice (e.g., DGraph \cite{dgraph}, Galera \cite{maria-galera},  and CockroachDB \cite{cockroach});

\vspace{1ex}
\noindent
(ii)
does not rely on implementation details such as concurrency control mechanism as in \cite{CritiqueANSI:SIGMOD1995} and  timestamps as in  \cite{Adya:PhDThesis1999},  and the operational semantics of the underlying database as in~\cite{CentralisedSemantics:ECOOP2020},  which are usually invisible to the outsiders; and

\vspace{1ex}
\noindent
(iii)
naturally models uncertain dependencies inherent to black-box testing
using generalized constraints (Section \ref{section:polygraph-si}) and  enables the acceleration of SMT solving by compacting constraints (Section~\ref{ss:efficiency}).

%
%
%
%
%
%



\vspace{1ex}
Regarding the comparison with  \cite{ClientCentric:PODC2017},
despite its promising characterization of \si{} suitable for black-box testing,  we are
unaware of any checking algorithm  based on it.  A straightforward
(suboptimal) implementation  would require  enumerating  all permutations of the transactions in a history,  e.g.,  10k transactions in our experiment would require checking 10k-factorial permutations.


\vspace{1ex}
\noindent
\textbf{Dynamic Checking of SI.}
This technique
determines whether a collected history from dynamically executing a database satisfies SI.
We are unaware of any black-box SI checker that satisfies SIEGE+.

  dbcop   \cite{Complexity:OOPSLA2019} is the most efficient black-box SI checker to date.
The underlying checking algorithm runs in time
 $O(n^{c})$,  with $n$ and $c$ the number of transactions and clients involved in a  single history,  respectively.  The authors
devise both a polynomial-time algorithm for checking serializabilty (also with a fixed number of client sessions) and a polynomial-time algorithm for reducing checking SI  to checking serializabilty.  However,  as demonstrated in Section \ref{ss:efficiency},  dbcop is practically not as  efficient as our \name{} tool under various workloads.
Moreover,     dbcop is incomplete as it does not check
non-cycle anomalies such as \emph{aborted reads} and \emph{intermediate reads} (Section~\ref{ss:completing-si-checking}).
Finally, dbcop  provides no details upon a violation; only a ``false'' answer is returned.

Elle~\cite{Elle:VLDB2020}
is a state-of-the-art
checker  for  a variety of isolation levels, including strong session SI,\footnote{
  Despite the claim to support checking strong session SI,
  we have confirmed with the developer that Elle does not fulfill this functionality in its latest release \cite{elle-bug}.
  The developer has fixed the issue by adding the checking of ``g-nonadjacent-process''.}
  which is part of the Jepsen \cite{jepsen} testing framework.
Elle requires specific data models like lists  in workloads
to infer the $\WW$ dependencies and specific APIs to perform  list-specific  operations such as ``append''.
In contrast, \name{} is compatible with
general and production workloads
and uses standard key-value and SQL  APIs.
 Elle builds upon Adya's formalization of SI~\cite{Adya:PhDThesis1999}, thus relying on
the start and commit timestamps of transactions for completeness.
Such information may not always be available, e.g., MongoDB~\cite{MongoDB} and TiDB~\cite{TiDB} have no timestamps in their logs for read-only transactions.
Nonetheless,  
the underlying SI characterization for \name{} does not rely on any implementation details.
Finally, Elle's actual implementation is incomplete\footnote{
This was ``unsound'' in the previous version
(also in \cite[Section 7]{PolySI:VLDB2023}).
As clarified by the developer, it is actually ``incomplete''.}
for efficiency reasons and there are anomalies it cannot detect.
We have confirmed this with the developer~\cite{elle-bug}.


ConsAD \cite{ConsAD:VLDB2014}  is a checker tailored to application servers as opposed to black-box databases in our setting.
 Its SI checking algorithm is also based on dependency graphs.
 To
 determine the $\WW$ dependencies,
ConsAD enforces the commit order of
 update transactions using, e.g.,  artificial SQL queries, to acquire exclusive locks on the database records,  resulting in additional overhead \cite{ConsAD:VLDB2014,  RushMon:SIGMOD2018}.
Moreover,
ConsAD is 
incapable of  detecting non-cycle anomalies.





CAT \cite{Maude:Liu2019} is
a dynamic  white-box checker for SI (and several other isolation levels).
The current release is restricted to  distributed databases implemented in the Maude formal language~\cite{AllAboutMaude:Book2007}.
CAT must
capture the internal transaction information, e.g., start/commit times, during a system run.



\section{Conclusion}  \label{section:conclusion}

We have presented the design of \name{},
along with a novel characterization of SI using generalized polygraphs.
We have established the soundness and completeness of our new characterization and \name{}'s checking algorithm.
Moreover, we have demonstrated \name{}'s effectiveness by reproducing all of 2477 known  anomalies
and by finding new violations in three popular production  databases,
its efficiency by experimentally showing that it outperforms the state-of-the-art tools and can scale up to large-sized workloads,
and its generality, operating over a wide range of workloads and databases of different kinds.
Finally, we have leveraged \name{}'s interpretation algorithm to identify the causes of the violations.

\name{} is the first black-box SI checker that satisfies the SIEGE+ principle. 
The obvious next step is to 
 apply SMT solving to build SIEGE+ black-box checkers for other data consistency properties such as 
 transactional causal consistency \cite{Eiger:NSDI2013,DBLP:journals/pvldb/DidonaGWZ18} 
 and the recently proposed regular sequential consistency \cite{DBLP:conf/sosp/Helt0LL21}. 
 Moreover,  we will pursue
 the  research directions discussed in Section \ref{section:discussion}.


\section*{Acknowledgments} \label{section:ack}

We would like to thank the anonymous reviewers
for their helpful feedback.
This work was supported by the CCF-Tencent Open Fund
(Tencent RAGR20200201).

\clearpage
\balance
\bibliographystyle{ACM-Reference-Format}
\bibliography{si-sat}


\begin{thebibliography}{54}


\ifx \showCODEN    \undefined \def \showCODEN     #1{\unskip}     \fi
\ifx \showDOI      \undefined \def \showDOI       #1{#1}\fi
\ifx \showISBNx    \undefined \def \showISBNx     #1{\unskip}     \fi
\ifx \showISBNxiii \undefined \def \showISBNxiii  #1{\unskip}     \fi
\ifx \showISSN     \undefined \def \showISSN      #1{\unskip}     \fi
\ifx \showLCCN     \undefined \def \showLCCN      #1{\unskip}     \fi
\ifx \shownote     \undefined \def \shownote      #1{#1}          \fi
\ifx \showarticletitle \undefined \def \showarticletitle #1{#1}   \fi
\ifx \showURL      \undefined \def \showURL       {\relax}        \fi
\providecommand\bibfield[2]{#2}
\providecommand\bibinfo[2]{#2}
\providecommand\natexlab[1]{#1}
\providecommand\showeprint[2][]{arXiv:#2}

\bibitem[\protect\citeauthoryear{Adya}{Adya}{1999}]%
        {Adya:PhDThesis1999}
\bibfield{author}{\bibinfo{person}{Atul Adya}.}
  \bibinfo{year}{1999}\natexlab{}.
\newblock \emph{\bibinfo{title}{Weak Consistency: A Generalized Theory and
  Optimistic Implementations for Distributed Transactions}}.
\newblock \bibinfo{thesistype}{Ph.D. Dissertation}. \bibinfo{address}{USA}.
\newblock


\bibitem[\protect\citeauthoryear{Bailis, Davidson, Fekete, Ghodsi, Hellerstein,
  and Stoica}{Bailis et~al\mbox{.}}{2013}]%
        {HAT:VLDB2013}
\bibfield{author}{\bibinfo{person}{Peter Bailis}, \bibinfo{person}{Aaron
  Davidson}, \bibinfo{person}{Alan Fekete}, \bibinfo{person}{Ali Ghodsi},
  \bibinfo{person}{Joseph~M. Hellerstein}, {and} \bibinfo{person}{Ion Stoica}.}
  \bibinfo{year}{2013}\natexlab{}.
\newblock \showarticletitle{Highly Available Transactions: Virtues and
  Limitations}.
\newblock \bibinfo{journal}{\emph{Proc. VLDB Endow.}} \bibinfo{volume}{7},
  \bibinfo{number}{3} (\bibinfo{date}{nov} \bibinfo{year}{2013}),
  \bibinfo{pages}{181--192}.
\newblock
\showISSN{2150-8097}
\urldef\tempurl%
\url{https://doi.org/10.14778/2732232.2732237}
\showDOI{\tempurl}


\bibitem[\protect\citeauthoryear{Bailis, Fekete, Ghodsi, Hellerstein, and
  Stoica}{Bailis et~al\mbox{.}}{2016}]%
        {RAMP:TODS2016}
\bibfield{author}{\bibinfo{person}{Peter Bailis}, \bibinfo{person}{Alan
  Fekete}, \bibinfo{person}{Ali Ghodsi}, \bibinfo{person}{Joseph~M.
  Hellerstein}, {and} \bibinfo{person}{Ion Stoica}.}
  \bibinfo{year}{2016}\natexlab{}.
\newblock \showarticletitle{Scalable Atomic Visibility with RAMP Transactions}.
\newblock \bibinfo{journal}{\emph{ACM Trans. Database Syst.}}
  \bibinfo{volume}{41}, \bibinfo{number}{3}, Article \bibinfo{articleno}{15}
  (\bibinfo{date}{jul} \bibinfo{year}{2016}), \bibinfo{numpages}{45}~pages.
\newblock
\showISSN{0362-5915}
\urldef\tempurl%
\url{https://doi.org/10.1145/2909870}
\showDOI{\tempurl}


\bibitem[\protect\citeauthoryear{Bayless, Bayless, Hoos, and Hu}{Bayless
  et~al\mbox{.}}{2015}]%
        {MonoSAT:AAAI2015}
\bibfield{author}{\bibinfo{person}{Sam Bayless}, \bibinfo{person}{Noah
  Bayless}, \bibinfo{person}{Holger~H. Hoos}, {and} \bibinfo{person}{Alan~J.
  Hu}.} \bibinfo{year}{2015}\natexlab{}.
\newblock \showarticletitle{SAT modulo Monotonic Theories}. In
  \bibinfo{booktitle}{\emph{Proceedings of the Twenty-Ninth AAAI Conference on
  Artificial Intelligence}} \emph{(\bibinfo{series}{AAAI'15})}.
  \bibinfo{publisher}{AAAI Press}, \bibinfo{pages}{3702--3709}.
\newblock
\showISBNx{0262511290}


\bibitem[\protect\citeauthoryear{Berenson, Bernstein, Gray, Melton, O'Neil, and
  O'Neil}{Berenson et~al\mbox{.}}{1995}]%
        {CritiqueANSI:SIGMOD1995}
\bibfield{author}{\bibinfo{person}{Hal Berenson}, \bibinfo{person}{Phil
  Bernstein}, \bibinfo{person}{Jim Gray}, \bibinfo{person}{Jim Melton},
  \bibinfo{person}{Elizabeth O'Neil}, {and} \bibinfo{person}{Patrick O'Neil}.}
  \bibinfo{year}{1995}\natexlab{}.
\newblock \showarticletitle{A Critique of ANSI SQL Isolation Levels}. In
  \bibinfo{booktitle}{\emph{SIGMOD '95}}. \bibinfo{publisher}{ACM},
  \bibinfo{pages}{1--10}.
\newblock
\showISBNx{0897917316}
\urldef\tempurl%
\url{https://doi.org/10.1145/223784.223785}
\showDOI{\tempurl}


\bibitem[\protect\citeauthoryear{Bernstein, Hadzilacos, and Goodman}{Bernstein
  et~al\mbox{.}}{1986}]%
        {Bernstein:Book1986}
\bibfield{author}{\bibinfo{person}{Philip~A Bernstein}, \bibinfo{person}{Vassos
  Hadzilacos}, {and} \bibinfo{person}{Nathan Goodman}.}
  \bibinfo{year}{1986}\natexlab{}.
\newblock \bibinfo{booktitle}{\emph{Concurrency Control and Recovery in
  Database Systems}}.
\newblock \bibinfo{publisher}{Addison-Wesley Longman Publishing Co., Inc.},
  \bibinfo{address}{USA}.
\newblock
\showISBNx{0201107155}


\bibitem[\protect\citeauthoryear{Biswas and Enea}{Biswas and Enea}{2019}]%
        {Complexity:OOPSLA2019}
\bibfield{author}{\bibinfo{person}{Ranadeep Biswas} {and}
  \bibinfo{person}{Constantin Enea}.} \bibinfo{year}{2019}\natexlab{}.
\newblock \showarticletitle{On the Complexity of Checking Transactional
  Consistency}.
\newblock \bibinfo{journal}{\emph{Proc. ACM Program. Lang.}}
  \bibinfo{volume}{3}, \bibinfo{number}{OOPSLA}, Article
  \bibinfo{articleno}{165} (\bibinfo{date}{Oct.} \bibinfo{year}{2019}),
  \bibinfo{numpages}{28}~pages.
\newblock
\urldef\tempurl%
\url{https://doi.org/10.1145/3360591}
\showDOI{\tempurl}


\bibitem[\protect\citeauthoryear{Biswas, Kakwani, Vedurada, Enea, and
  Lal}{Biswas et~al\mbox{.}}{2021}]%
        {MonkeyDB:OOPSLA2021}
\bibfield{author}{\bibinfo{person}{Ranadeep Biswas}, \bibinfo{person}{Diptanshu
  Kakwani}, \bibinfo{person}{Jyothi Vedurada}, \bibinfo{person}{Constantin
  Enea}, {and} \bibinfo{person}{Akash Lal}.} \bibinfo{year}{2021}\natexlab{}.
\newblock \showarticletitle{MonkeyDB: Effectively Testing Correctness under
  Weak Isolation Levels}.
\newblock \bibinfo{journal}{\emph{Proc. ACM Program. Lang.}}
  \bibinfo{volume}{5}, \bibinfo{number}{OOPSLA}, Article
  \bibinfo{articleno}{132} (\bibinfo{date}{oct} \bibinfo{year}{2021}),
  \bibinfo{numpages}{27}~pages.
\newblock
\urldef\tempurl%
\url{https://doi.org/10.1145/3485546}
\showDOI{\tempurl}


\bibitem[\protect\citeauthoryear{Bouajjani, Enea, Guerraoui, and
  Hamza}{Bouajjani et~al\mbox{.}}{2017}]%
        {VCC:POPL2017}
\bibfield{author}{\bibinfo{person}{Ahmed Bouajjani},
  \bibinfo{person}{Constantin Enea}, \bibinfo{person}{Rachid Guerraoui}, {and}
  \bibinfo{person}{Jad Hamza}.} \bibinfo{year}{2017}\natexlab{}.
\newblock \showarticletitle{On verifying causal consistency}. In
  \bibinfo{booktitle}{\emph{POPL'17}}. \bibinfo{publisher}{{ACM}},
  \bibinfo{pages}{626--638}.
\newblock


\bibitem[\protect\citeauthoryear{Cerone, Bernardi, and Gotsman}{Cerone
  et~al\mbox{.}}{2015}]%
        {Framework:CONCUR2015}
\bibfield{author}{\bibinfo{person}{Andrea Cerone}, \bibinfo{person}{Giovanni
  Bernardi}, {and} \bibinfo{person}{Alexey Gotsman}.}
  \bibinfo{year}{2015}\natexlab{}.
\newblock \showarticletitle{{A Framework for Transactional Consistency Models
  with Atomic Visibility}}. In \bibinfo{booktitle}{\emph{CONCUR'15}}
  \emph{(\bibinfo{series}{LIPIcs})}, Vol.~\bibinfo{volume}{42}.
  \bibinfo{publisher}{Schloss Dagstuhl--Leibniz-Zentrum fuer Informatik},
  \bibinfo{pages}{58--71}.
\newblock
\showISBNx{978-3-939897-91-0}
\showISSN{1868-8969}


\bibitem[\protect\citeauthoryear{Cerone and Gotsman}{Cerone and
  Gotsman}{2018}]%
        {AnalysingSI:JACM2018}
\bibfield{author}{\bibinfo{person}{Andrea Cerone} {and} \bibinfo{person}{Alexey
  Gotsman}.} \bibinfo{year}{2018}\natexlab{}.
\newblock \showarticletitle{Analysing Snapshot Isolation}.
\newblock \bibinfo{journal}{\emph{J. ACM}} \bibinfo{volume}{65},
  \bibinfo{number}{2}, Article \bibinfo{articleno}{11} (\bibinfo{date}{Jan.}
  \bibinfo{year}{2018}), \bibinfo{numpages}{41}~pages.
\newblock
\showISSN{0004-5411}
\urldef\tempurl%
\url{https://doi.org/10.1145/3152396}
\showDOI{\tempurl}


\bibitem[\protect\citeauthoryear{Chen, Dou, Wang, and Qin}{Chen
  et~al\mbox{.}}{2020}]%
        {cofi}
\bibfield{author}{\bibinfo{person}{Haicheng Chen}, \bibinfo{person}{Wensheng
  Dou}, \bibinfo{person}{Dong Wang}, {and} \bibinfo{person}{Feng Qin}.}
  \bibinfo{year}{2020}\natexlab{}.
\newblock \showarticletitle{CoFI: Consistency-Guided Fault Injection for Cloud
  Systems}. In \bibinfo{booktitle}{\emph{{ASE} 2020}}.
  \bibinfo{publisher}{{IEEE}}.
\newblock
\urldef\tempurl%
\url{https://doi.org/10.1145/3324884.3416548}
\showDOI{\tempurl}


\bibitem[\protect\citeauthoryear{Clavel, Dur\'{a}n, Eker, Lincoln,
  Mart\'{\i}-Oliet, Meseguer, and Talcott}{Clavel et~al\mbox{.}}{2007}]%
        {AllAboutMaude:Book2007}
\bibfield{author}{\bibinfo{person}{Manuel Clavel}, \bibinfo{person}{Francisco
  Dur\'{a}n}, \bibinfo{person}{Steven Eker}, \bibinfo{person}{Patrick Lincoln},
  \bibinfo{person}{Narciso Mart\'{\i}-Oliet}, \bibinfo{person}{Jos\'{e}
  Meseguer}, {and} \bibinfo{person}{Carolyn Talcott}.}
  \bibinfo{year}{2007}\natexlab{}.
\newblock \bibinfo{booktitle}{\emph{All about Maude - a High-Performance
  Logical Framework: How to Specify, Program and Verify Systems in Rewriting
  Logic}}.
\newblock \bibinfo{publisher}{Springer-Verlag}, \bibinfo{address}{Berlin,
  Heidelberg}.
\newblock
\showISBNx{3540719407}


\bibitem[\protect\citeauthoryear{Cluster}{Cluster}{2022}]%
        {maria-galera}
\bibfield{author}{\bibinfo{person}{MariaDB~Galera Cluster}.}
  \bibinfo{year}{Accessed August, 2022}\natexlab{}.
\newblock
\newblock
\newblock
\shownote{\url{https://mariadb.com/kb/en/what-is-mariadb-galera-cluster/}.}


\bibitem[\protect\citeauthoryear{CockroachDB}{CockroachDB}{2022}]%
        {cockroach}
\bibfield{author}{\bibinfo{person}{CockroachDB}.} \bibinfo{year}{Accessed
  August, 2022}\natexlab{}.
\newblock
\newblock
\newblock
\shownote{\url{https://www.cockroachlabs.com/}.}


\bibitem[\protect\citeauthoryear{Cormen, Leiserson, Rivest, and Stein}{Cormen
  et~al\mbox{.}}{2009}]%
        {CLRS2009}
\bibfield{author}{\bibinfo{person}{Thomas~H. Cormen},
  \bibinfo{person}{Charles~E. Leiserson}, \bibinfo{person}{Ronald~L. Rivest},
  {and} \bibinfo{person}{Clifford Stein}.} \bibinfo{year}{2009}\natexlab{}.
\newblock \bibinfo{booktitle}{\emph{Introduction to Algorithms, Third Edition}
  (\bibinfo{edition}{3rd} ed.)}.
\newblock \bibinfo{publisher}{The MIT Press}.
\newblock
\showISBNx{0262033844}


\bibitem[\protect\citeauthoryear{Crooks, Pu, Alvisi, and Clement}{Crooks
  et~al\mbox{.}}{2017}]%
        {ClientCentric:PODC2017}
\bibfield{author}{\bibinfo{person}{Natacha Crooks}, \bibinfo{person}{Youer Pu},
  \bibinfo{person}{Lorenzo Alvisi}, {and} \bibinfo{person}{Allen Clement}.}
  \bibinfo{year}{2017}\natexlab{}.
\newblock \showarticletitle{Seeing is Believing: A Client-Centric Specification
  of Database Isolation}. In \bibinfo{booktitle}{\emph{PODC '17}}.
  \bibinfo{publisher}{ACM}, \bibinfo{pages}{73--82}.
\newblock
\showISBNx{9781450349925}
\urldef\tempurl%
\url{https://doi.org/10.1145/3087801.3087802}
\showDOI{\tempurl}


\bibitem[\protect\citeauthoryear{Darnell}{Darnell}{2022}]%
        {CockroachDB-bug}
\bibfield{author}{\bibinfo{person}{Ben Darnell}.} \bibinfo{year}{Accessed
  August, 2022}\natexlab{}.
\newblock \bibinfo{title}{Lessons Learned from 2+ Years of Nightly Jepsen
  Tests}.
\newblock
\newblock
\newblock
\shownote{\url{https://www.cockroachlabs.com/blog/jepsen-tests-lessons/}.}


\bibitem[\protect\citeauthoryear{Database}{Database}{2022}]%
        {Oracle}
\bibfield{author}{\bibinfo{person}{Oracle Database}.} \bibinfo{year}{Accessed
  August, 2022}\natexlab{}.
\newblock
\newblock
\newblock
\shownote{\url{https://www.oracle.com/database/}.}


\bibitem[\protect\citeauthoryear{Daudjee and Salem}{Daudjee and Salem}{2006}]%
        {LazyReplSI:VLDB2006}
\bibfield{author}{\bibinfo{person}{Khuzaima Daudjee} {and}
  \bibinfo{person}{Kenneth Salem}.} \bibinfo{year}{2006}\natexlab{}.
\newblock \showarticletitle{Lazy Database Replication with Snapshot Isolation}.
  In \bibinfo{booktitle}{\emph{VLDB'06}}. \bibinfo{publisher}{VLDB Endowment},
  \bibinfo{pages}{715--726}.
\newblock


\bibitem[\protect\citeauthoryear{Dgraph}{Dgraph}{2022}]%
        {dgraph}
\bibfield{author}{\bibinfo{person}{Dgraph}.} \bibinfo{year}{Accessed August,
  2022}\natexlab{}.
\newblock
\newblock
\newblock
\shownote{\url{https://dgraph.io/}.}


\bibitem[\protect\citeauthoryear{Didona, Guerraoui, Wang, and
  Zwaenepoel}{Didona et~al\mbox{.}}{2018}]%
        {DBLP:journals/pvldb/DidonaGWZ18}
\bibfield{author}{\bibinfo{person}{Diego Didona}, \bibinfo{person}{Rachid
  Guerraoui}, \bibinfo{person}{Jingjing Wang}, {and} \bibinfo{person}{Willy
  Zwaenepoel}.} \bibinfo{year}{2018}\natexlab{}.
\newblock \showarticletitle{Causal Consistency and Latency Optimality: Friend
  or Foe?}
\newblock \bibinfo{journal}{\emph{Proc. {VLDB} Endow.}} \bibinfo{volume}{11},
  \bibinfo{number}{11} (\bibinfo{year}{2018}), \bibinfo{pages}{1618--1632}.
\newblock


\bibitem[\protect\citeauthoryear{Gan, Ren, Ripberger, Blanas, and Wang}{Gan
  et~al\mbox{.}}{2020}]%
        {IsoDiff:VLDB2020}
\bibfield{author}{\bibinfo{person}{Yifan Gan}, \bibinfo{person}{Xueyuan Ren},
  \bibinfo{person}{Drew Ripberger}, \bibinfo{person}{Spyros Blanas}, {and}
  \bibinfo{person}{Yang Wang}.} \bibinfo{year}{2020}\natexlab{}.
\newblock \showarticletitle{IsoDiff: Debugging Anomalies Caused by Weak
  Isolation}.
\newblock \bibinfo{journal}{\emph{Proc. VLDB Endow.}} \bibinfo{volume}{13},
  \bibinfo{number}{12} (\bibinfo{date}{July} \bibinfo{year}{2020}),
  \bibinfo{pages}{2773--2786}.
\newblock
\showISSN{2150-8097}
\urldef\tempurl%
\url{https://doi.org/10.14778/3407790.3407860}
\showDOI{\tempurl}


\bibitem[\protect\citeauthoryear{Graphviz}{Graphviz}{2022}]%
        {graphviz}
\bibfield{author}{\bibinfo{person}{Graphviz}.} \bibinfo{year}{Accessed
  December, 2022}\natexlab{}.
\newblock \bibinfo{title}{Open source graph visualization software}.
\newblock
\newblock
\newblock
\shownote{\url{https://graphviz.org/}.}


\bibitem[\protect\citeauthoryear{Helt, Burke, Levy, and Lloyd}{Helt
  et~al\mbox{.}}{2021}]%
        {DBLP:conf/sosp/Helt0LL21}
\bibfield{author}{\bibinfo{person}{Jeffrey Helt}, \bibinfo{person}{Matthew
  Burke}, \bibinfo{person}{Amit Levy}, {and} \bibinfo{person}{Wyatt Lloyd}.}
  \bibinfo{year}{2021}\natexlab{}.
\newblock \showarticletitle{Regular Sequential Serializability and Regular
  Sequential Consistency}. In \bibinfo{booktitle}{\emph{{SOSP}'21}}.
  \bibinfo{publisher}{{ACM}}, \bibinfo{pages}{163--179}.
\newblock


\bibitem[\protect\citeauthoryear{Huang, Liu, Chen, Wei, Basin, Li, and
  Pan}{Huang et~al\mbox{.}}{2023}]%
        {PolySI:VLDB2023}
\bibfield{author}{\bibinfo{person}{Kaile Huang}, \bibinfo{person}{Si Liu},
  \bibinfo{person}{Zhenge Chen}, \bibinfo{person}{Hengfeng Wei},
  \bibinfo{person}{David Basin}, \bibinfo{person}{Haixiang Li}, {and}
  \bibinfo{person}{Anqun Pan}.} \bibinfo{year}{2023}\natexlab{}.
\newblock \showarticletitle{Efficient Black-Box Checking of Snapshot Isolation
  in Databases}.
\newblock \bibinfo{journal}{\emph{Proc. VLDB Endow.}} \bibinfo{volume}{16},
  \bibinfo{number}{6} (\bibinfo{date}{apr} \bibinfo{year}{2023}),
  \bibinfo{pages}{1264–1276}.
\newblock
\showISSN{2150-8097}
\urldef\tempurl%
\url{https://doi.org/10.14778/3583140.3583145}
\showDOI{\tempurl}


\bibitem[\protect\citeauthoryear{Huang, Liu, Chen, Wei, Basin, Li, and
  Pan}{Huang et~al\mbox{.}}{2022}]%
        {elle-bug}
\bibfield{author}{\bibinfo{person}{Kaile Huang}, \bibinfo{person}{Si Liu},
  \bibinfo{person}{Zhenge Chen}, \bibinfo{person}{Hengfeng Wei},
  \bibinfo{person}{David Basin}, \bibinfo{person}{Haixiang Li}, {and}
  \bibinfo{person}{Anqun Pan}.} \bibinfo{year}{Accessed December,
  2022}\natexlab{}.
\newblock \bibinfo{title}{Issue \#17}.
\newblock
\newblock
\newblock
\shownote{\url{https://github.com/jepsen-io/elle/issues/17}.}


\bibitem[\protect\citeauthoryear{Jepsen}{Jepsen}{2022a}]%
        {jepsen}
\bibfield{author}{\bibinfo{person}{Jepsen}.} \bibinfo{year}{Accessed August,
  2022}\natexlab{a}.
\newblock
\newblock
\newblock
\shownote{\url{https://jepsen.io}.}


\bibitem[\protect\citeauthoryear{Jepsen}{Jepsen}{2022b}]%
        {YugabyteDB-bug}
\bibfield{author}{\bibinfo{person}{Jepsen}.} \bibinfo{year}{Accessed August,
  2022}\natexlab{b}.
\newblock \bibinfo{title}{Issue \#824}.
\newblock
\newblock
\newblock
\shownote{\url{https://github.com/YugaByte/yugabyte-db/issues/824}.}


\bibitem[\protect\citeauthoryear{Kallen}{Kallen}{2022}]%
        {ctwitter}
\bibfield{author}{\bibinfo{person}{Nick Kallen}.} \bibinfo{year}{Accessed
  August, 2022}\natexlab{}.
\newblock \bibinfo{title}{Big Data in Real Time at Twitter}.
\newblock
\newblock
\newblock
\shownote{\url{https://www.infoq.com/presentations/Big-Data-in-Real-Time-at-Twitter/}.}


\bibitem[\protect\citeauthoryear{Kingsbury and Alvaro}{Kingsbury and
  Alvaro}{2020}]%
        {Elle:VLDB2020}
\bibfield{author}{\bibinfo{person}{Kyle Kingsbury} {and} \bibinfo{person}{Peter
  Alvaro}.} \bibinfo{year}{2020}\natexlab{}.
\newblock \showarticletitle{Elle: Inferring Isolation Anomalies from
  Experimental Observations}.
\newblock \bibinfo{journal}{\emph{Proc. VLDB Endow.}} \bibinfo{volume}{14},
  \bibinfo{number}{3} (\bibinfo{date}{Nov.} \bibinfo{year}{2020}),
  \bibinfo{pages}{268--280}.
\newblock
\showISSN{2150-8097}


\bibitem[\protect\citeauthoryear{Lamport}{Lamport}{1978}]%
        {DBLP:journals/cacm/Lamport78}
\bibfield{author}{\bibinfo{person}{Leslie Lamport}.}
  \bibinfo{year}{1978}\natexlab{}.
\newblock \showarticletitle{Time, Clocks, and the Ordering of Events in a
  Distributed System}.
\newblock \bibinfo{journal}{\emph{Commun. {ACM}}} \bibinfo{volume}{21},
  \bibinfo{number}{7} (\bibinfo{year}{1978}), \bibinfo{pages}{558--565}.
\newblock


\bibitem[\protect\citeauthoryear{Liu, {\"{O}}lveczky, Zhang, Wang, and
  Meseguer}{Liu et~al\mbox{.}}{2019}]%
        {Maude:Liu2019}
\bibfield{author}{\bibinfo{person}{Si Liu}, \bibinfo{person}{Peter~Csaba
  {\"{O}}lveczky}, \bibinfo{person}{Min Zhang}, \bibinfo{person}{Qi Wang},
  {and} \bibinfo{person}{Jos{\'{e}} Meseguer}.}
  \bibinfo{year}{2019}\natexlab{}.
\newblock \showarticletitle{Automatic Analysis of Consistency Properties of
  Distributed Transaction Systems in {M}aude}. In
  \bibinfo{booktitle}{\emph{{TACAS} 2019}} \emph{(\bibinfo{series}{LNCS})},
  Vol.~\bibinfo{volume}{11428}. \bibinfo{publisher}{Springer},
  \bibinfo{pages}{40--57}.
\newblock


\bibitem[\protect\citeauthoryear{Lloyd, Freedman, Kaminsky, and Andersen}{Lloyd
  et~al\mbox{.}}{2013}]%
        {Eiger:NSDI2013}
\bibfield{author}{\bibinfo{person}{Wyatt Lloyd}, \bibinfo{person}{Michael~J.
  Freedman}, \bibinfo{person}{Michael Kaminsky}, {and}
  \bibinfo{person}{David~G. Andersen}.} \bibinfo{year}{2013}\natexlab{}.
\newblock \showarticletitle{Stronger semantics for low-latency geo-replicated
  storage}. In \bibinfo{booktitle}{\emph{NSDI' 13}}. \bibinfo{publisher}{USENIX
  Association}, \bibinfo{pages}{313--328}.
\newblock
\showISBNx{978-1-931971-00-3}


\bibitem[\protect\citeauthoryear{Lu, Sen, and Lloyd}{Lu et~al\mbox{.}}{2020}]%
        {DBLP:conf/osdi/LuSL20}
\bibfield{author}{\bibinfo{person}{Haonan Lu}, \bibinfo{person}{Siddhartha
  Sen}, {and} \bibinfo{person}{Wyatt Lloyd}.} \bibinfo{year}{2020}\natexlab{}.
\newblock \showarticletitle{Performance-Optimal Read-Only Transactions}. In
  \bibinfo{booktitle}{\emph{{OSDI} 2020}}. \bibinfo{publisher}{{USENIX}
  Association}, \bibinfo{pages}{333--349}.
\newblock


\bibitem[\protect\citeauthoryear{MongoDB}{MongoDB}{2022}]%
        {MongoDB}
\bibfield{author}{\bibinfo{person}{MongoDB}.} \bibinfo{year}{Accessed August,
  2022}\natexlab{}.
\newblock
\newblock
\newblock
\shownote{\url{https://www.mongodb.com/}.}


\bibitem[\protect\citeauthoryear{Papadimitriou}{Papadimitriou}{1979}]%
        {SER:JACM1979}
\bibfield{author}{\bibinfo{person}{Christos~H. Papadimitriou}.}
  \bibinfo{year}{1979}\natexlab{}.
\newblock \showarticletitle{The Serializability of Concurrent Database
  Updates}.
\newblock \bibinfo{journal}{\emph{J. ACM}} \bibinfo{volume}{26},
  \bibinfo{number}{4} (\bibinfo{date}{oct} \bibinfo{year}{1979}),
  \bibinfo{pages}{631--653}.
\newblock
\showISSN{0004-5411}
\urldef\tempurl%
\url{https://doi.org/10.1145/322154.322158}
\showDOI{\tempurl}


\bibitem[\protect\citeauthoryear{Peng and Dabek}{Peng and Dabek}{2010}]%
        {Percolator:OSDI2010}
\bibfield{author}{\bibinfo{person}{Daniel Peng} {and} \bibinfo{person}{Frank
  Dabek}.} \bibinfo{year}{2010}\natexlab{}.
\newblock \showarticletitle{Large-Scale Incremental Processing Using
  Distributed Transactions and Notifications}. In
  \bibinfo{booktitle}{\emph{OSDI'10}}. \bibinfo{publisher}{USENIX Association},
  \bibinfo{address}{USA}, \bibinfo{pages}{251--264}.
\newblock


\bibitem[\protect\citeauthoryear{PostgreSQL}{PostgreSQL}{2022}]%
        {postgresql-rr}
\bibfield{author}{\bibinfo{person}{PostgreSQL}.} \bibinfo{year}{Accessed
  August, 2022}\natexlab{}.
\newblock \bibinfo{title}{Transaction Isolation}.
\newblock
\newblock
\newblock
\shownote{\url{https://www.postgresql.org/docs/current/transaction-iso.html}.}


\bibitem[\protect\citeauthoryear{RUBiS}{RUBiS}{2022}]%
        {crubis}
\bibfield{author}{\bibinfo{person}{RUBiS}.} \bibinfo{year}{Accessed August,
  2022}\natexlab{}.
\newblock \bibinfo{title}{Auction Site for e-Commerce Technologies
  Benchmarking}.
\newblock
\newblock
\newblock
\shownote{\url{https://projects.ow2.org/view/rubis/}.}


\bibitem[\protect\citeauthoryear{Server}{Server}{2022}]%
        {SQLServer}
\bibfield{author}{\bibinfo{person}{Microsoft~SQL Server}.}
  \bibinfo{year}{Accessed August, 2022}\natexlab{}.
\newblock
\newblock
\newblock
\shownote{\url{https://www.microsoft.com/en-us/sql-server/}.}


\bibitem[\protect\citeauthoryear{Shang, Yu, and Elmore}{Shang
  et~al\mbox{.}}{2018}]%
        {RushMon:SIGMOD2018}
\bibfield{author}{\bibinfo{person}{Zechao Shang}, \bibinfo{person}{Jeffrey~Xu
  Yu}, {and} \bibinfo{person}{Aaron~J. Elmore}.}
  \bibinfo{year}{2018}\natexlab{}.
\newblock \showarticletitle{RushMon: Real-Time Isolation Anomalies Monitoring}.
  In \bibinfo{booktitle}{\emph{SIGMOD '18}}. \bibinfo{publisher}{ACM},
  \bibinfo{pages}{647--662}.
\newblock
\showISBNx{9781450347037}
\urldef\tempurl%
\url{https://doi.org/10.1145/3183713.3196932}
\showDOI{\tempurl}


\bibitem[\protect\citeauthoryear{Sovran, Power, Aguilera, and Li}{Sovran
  et~al\mbox{.}}{2011}]%
        {PSI:SOSP2011}
\bibfield{author}{\bibinfo{person}{Yair Sovran}, \bibinfo{person}{Russell
  Power}, \bibinfo{person}{Marcos~K. Aguilera}, {and} \bibinfo{person}{Jinyang
  Li}.} \bibinfo{year}{2011}\natexlab{}.
\newblock \showarticletitle{Transactional Storage for Geo-Replicated Systems}.
  In \bibinfo{booktitle}{\emph{SOSP '11}}. \bibinfo{publisher}{ACM},
  \bibinfo{pages}{385--400}.
\newblock
\showISBNx{9781450309776}
\urldef\tempurl%
\url{https://doi.org/10.1145/2043556.2043592}
\showDOI{\tempurl}


\bibitem[\protect\citeauthoryear{Tan, Zhao, Mu, and Walfish}{Tan
  et~al\mbox{.}}{2020}]%
        {Cobra:OSDI2020}
\bibfield{author}{\bibinfo{person}{Cheng Tan}, \bibinfo{person}{Changgeng
  Zhao}, \bibinfo{person}{Shuai Mu}, {and} \bibinfo{person}{Michael Walfish}.}
  \bibinfo{year}{2020}\natexlab{}.
\newblock \showarticletitle{COBRA: Making Transactional Key-Value Stores
  Verifiably Serializable}. In \bibinfo{booktitle}{\emph{OSDI'20}}. Article
  \bibinfo{articleno}{4}, \bibinfo{numpages}{18}~pages.
\newblock
\showISBNx{978-1-939133-19-9}


\bibitem[\protect\citeauthoryear{Terry, Demers, Petersen, Spreitzer, Theimer,
  and Welch}{Terry et~al\mbox{.}}{1994}]%
        {TerryDPSTW94}
\bibfield{author}{\bibinfo{person}{Douglas~B. Terry}, \bibinfo{person}{Alan~J.
  Demers}, \bibinfo{person}{Karin Petersen}, \bibinfo{person}{Mike Spreitzer},
  \bibinfo{person}{Marvin Theimer}, {and} \bibinfo{person}{Brent~B. Welch}.}
  \bibinfo{year}{1994}\natexlab{}.
\newblock \showarticletitle{Session Guarantees for Weakly Consistent Replicated
  Data}. In \bibinfo{booktitle}{\emph{{PDIS}}}. \bibinfo{publisher}{{IEEE}
  Computer Society}, \bibinfo{pages}{140--149}.
\newblock


\bibitem[\protect\citeauthoryear{testing~of MongoDB~4.2.6}{testing~of
  MongoDB~4.2.6}{2022}]%
        {MongoDB-Jepsen}
\bibfield{author}{\bibinfo{person}{Jepsen testing~of MongoDB~4.2.6}.}
  \bibinfo{year}{Accessed August, 2022}\natexlab{}.
\newblock
\newblock
\newblock
\shownote{\url{http://jepsen.io/analyses/mongodb-4.2.6}.}


\bibitem[\protect\citeauthoryear{testing~of TiDB~2.1.7}{testing~of
  TiDB~2.1.7}{2022}]%
        {TiDB-Jepsen}
\bibfield{author}{\bibinfo{person}{Jepsen testing~of TiDB~2.1.7}.}
  \bibinfo{year}{Accessed August, 2022}\natexlab{}.
\newblock
\newblock
\newblock
\shownote{\url{https://jepsen.io/analyses/tidb-2.1.7}.}


\bibitem[\protect\citeauthoryear{TiDB}{TiDB}{2022}]%
        {TiDB}
\bibfield{author}{\bibinfo{person}{TiDB}.} \bibinfo{year}{Accessed August,
  2022}\natexlab{}.
\newblock
\newblock
\newblock
\shownote{\url{https://en.pingcap.com/tidb/}.}


\bibitem[\protect\citeauthoryear{TPC}{TPC}{2022}]%
        {tpcc}
\bibfield{author}{\bibinfo{person}{TPC}.} \bibinfo{year}{Accessed August,
  2022}\natexlab{}.
\newblock \bibinfo{title}{{TPC-C}: On-Line Transaction Processing Benchmark}.
\newblock
\newblock
\newblock
\shownote{\url{https://www.tpc.org/tpcc/}.}


\bibitem[\protect\citeauthoryear{Warszawski and Bailis}{Warszawski and
  Bailis}{2017}]%
        {ACIDRain:SIGMOD2017}
\bibfield{author}{\bibinfo{person}{Todd Warszawski} {and}
  \bibinfo{person}{Peter Bailis}.} \bibinfo{year}{2017}\natexlab{}.
\newblock \showarticletitle{ACIDRain: Concurrency-Related Attacks on
  Database-Backed Web Applications}. In \bibinfo{booktitle}{\emph{{SIGMOD}
  2017}}. \bibinfo{publisher}{{ACM}}, \bibinfo{pages}{5--20}.
\newblock
\urldef\tempurl%
\url{https://doi.org/10.1145/3035918.3064037}
\showDOI{\tempurl}


\bibitem[\protect\citeauthoryear{Xiong, Cerone, Raad, and Gardner}{Xiong
  et~al\mbox{.}}{2020}]%
        {CentralisedSemantics:ECOOP2020}
\bibfield{author}{\bibinfo{person}{Shale Xiong}, \bibinfo{person}{Andrea
  Cerone}, \bibinfo{person}{Azalea Raad}, {and} \bibinfo{person}{Philippa
  Gardner}.} \bibinfo{year}{2020}\natexlab{}.
\newblock \showarticletitle{{Data Consistency in Transactional Storage Systems:
  A Centralised Semantics}}. In \bibinfo{booktitle}{\emph{ECOOP'20}},
  Vol.~\bibinfo{volume}{166}. \bibinfo{pages}{21:1--21:31}.
\newblock
\urldef\tempurl%
\url{https://doi.org/10.4230/LIPIcs.ECOOP.2020.21}
\showDOI{\tempurl}


\bibitem[\protect\citeauthoryear{YugabyteDB}{YugabyteDB}{2022}]%
        {YugabyteDB}
\bibfield{author}{\bibinfo{person}{YugabyteDB}.} \bibinfo{year}{Accessed
  August, 2022}\natexlab{}.
\newblock
\newblock
\newblock
\shownote{\url{https://www.yugabyte.com/}.}


\bibitem[\protect\citeauthoryear{Zellag and Kemme}{Zellag and Kemme}{2014}]%
        {ConsAD:VLDB2014}
\bibfield{author}{\bibinfo{person}{Kamal Zellag} {and} \bibinfo{person}{Bettina
  Kemme}.} \bibinfo{year}{2014}\natexlab{}.
\newblock \showarticletitle{Consistency anomalies in multi-tier architectures:
  automatic detection and prevention}.
\newblock \bibinfo{journal}{\emph{{VLDB} J.}} \bibinfo{volume}{23},
  \bibinfo{number}{1} (\bibinfo{year}{2014}), \bibinfo{pages}{147--172}.
\newblock
\urldef\tempurl%
\url{https://doi.org/10.1007/s00778-013-0318-x}
\showDOI{\tempurl}


\bibitem[\protect\citeauthoryear{Zennou, Biswas, Bouajjani, Enea, and
  Erradi}{Zennou et~al\mbox{.}}{2019}]%
        {DBLP:conf/netys/ZennouBBEE19}
\bibfield{author}{\bibinfo{person}{Rachid Zennou}, \bibinfo{person}{Ranadeep
  Biswas}, \bibinfo{person}{Ahmed Bouajjani}, \bibinfo{person}{Constantin
  Enea}, {and} \bibinfo{person}{Mohammed Erradi}.}
  \bibinfo{year}{2019}\natexlab{}.
\newblock \showarticletitle{Checking Causal Consistency of Distributed
  Databases}. In \bibinfo{booktitle}{\emph{{NETYS} 2019}}
  \emph{(\bibinfo{series}{LNCS})}, Vol.~\bibinfo{volume}{11704}.
  \bibinfo{publisher}{Springer}, \bibinfo{pages}{35--51}.
\newblock
\urldef\tempurl%
\url{https://doi.org/10.1007/978-3-030-31277-0\_3}
\showURL{%
\tempurl}


\end{thebibliography}

\clearpage
\appendix


\section{The Checking Algorithm for SI}
\label{section:appendix-polysi}


\begin{algorithm*}[t]
  \footnotesize
  \caption{The \name{} algorithm for checking SI (the full version)}
  \label{alg:checksi-full}
  \begin{varwidth}[t]{0.49\textwidth}
  \begin{algorithmic}[1]

    \Procedure{\checksi}{$\H$}
      \label{line:full-func-checksi}
      \If{$\H \not\models \intaxiom \lor \abortedreads \lor \intermediatereads$}
        \label{line:full-checksi-intaxiom}
        \State \Return \false
      \EndIf

      \hStatex
      \State \Call{\createknowngraph}{$\H$}
        \label{line:full-checksi-call-createknowngraph}
      \State \Call{\generateconstraints}{$\H$}
        \label{line:full-checksi-call-generateconstraints}
      \If{$\lnot \Call{\pruneconstraints}{\null}$}
        \label{line:full-checksi-call-pruneconstraints}
        \State \Return \false
          \label{line:full-checksi-return-false}
      \EndIf
      \State \Call{\encodeconstraints}{\null}
        \label{line:full-checksi-call-encodeconstraints}
      \State \Return \Call{\solveconstraints}{\null}
        \label{line:full-checksi-call-solveconstraints}
    \EndProcedure

    \Statex
    \Procedure{\createknowngraph}{$\H$}
      \label{line:full-func-createknowngraph}
      \ForAll{$T, S \in \T$ such that $T \rel{\SO} S$}
        \label{line:full-createknowngraph-so-relation}
        \State $\edges_{\g} \gets \edges_{\g} \cup (T, S, \SO)$
          \label{line:full-createknowngraph-so-edges}
      \EndFor
      \ForAll{$T, S \in \T$ such that $T \rel{\WR} S$}
        \label{line:full-createknowngraph-wr-relation}
        \State $\edges_{\g} \gets \edges_{\g} \cup (T, S, \WR)$
          \label{line:full-createknowngraph-wr-edges}
      \EndFor
    \EndProcedure

    \Statex
    \Procedure{\generateconstraints}{$\H$}
      \label{line:full-func-generateconstraints}
      \ForAll{$\keyxvar \in \Key$}
        \label{line:full-generateconstraints-forall-key}
        \ForAll{$T, S \in \WriteTx_{\keyxvar}$ such that $T \neq S$}
          \label{line:full-generateconstraints-forall-ts}
          \State $\eithervar \gets \set{(T, S, \WW)} \cup \bigcup\limits_{T' \in \WR(\keyxvar)(T)} \!\!\set{(T', S, \RW)}$
            \label{line:full-generateconstraints-either}
          \State $\orvar \gets \set{(S, T, \WW)} \cup \bigcup\limits_{S' \in \WR(\keyxvar)(S)} \!\!\set{(S', T, \RW)}$
            \label{line:full-generateconstraints-or}
          \State $\cons_{\g} \gets \cons_{\g} \cup \set{\tuple{\eithervar, \orvar}}$
            \label{line:full-generateconstraints-either-or}
        \EndFor
      \EndFor
    \EndProcedure

    \Statex
    \Procedure{\encodeconstraints}{\null}
      \label{line:full-func-encodeconstraints}
      \ForAll{$\vvar_{i}, \vvar_{j} \in \vertex_{\g}$ such that $i \neq j$}
        \label{line:full-encodeconstraints-forall-vertex-pair}
        \State Create two Boolean variables $\BV_{i, j}$ and $\BV^{\inducedgraph}_{i, j}$
          \label{line:full-encodeconstraints-create-bv-ij}
        \State $\BV \gets \BV \cup \set{\BV_{i, j}, \BV^{\inducedgraph}_{i, j}}$
          \label{line:full-encodeconstraints-bv-add-bvij}
      \EndFor

      \ForAll{$(\vvar_{i}, \vvar_{j}) \in \edges_{\g}$} \Comment{encode the known graph $\g$}
        \label{line:full-encodeconstraints-forall-edges}
        \State $\Clause \gets \Clause \cup \set{\BV_{i, j} = \True}$
          \label{line:full-encodeconstraints-set-bvij-true}
      \EndFor

      \ForAll{$\tuple{\eithervar, \orvar} \in \cons_{\g}$}  \Comment{encode the constraints $\g$}
        \label{line:full-encodeconstraints-forall-cons}
        \State $\Clause \gets \Clause \;\cup\; \bbset{\bigl(
            \bigwedge\limits_{(\vvar_{i}, \vvar_{j}, \_) \in \eithervar} \BV_{i, j}
            \land \bigwedge\limits_{(\vvar_{i}, \vvar_{j}, \_) \in \orvar} \lnot\BV_{i, j}\bigr) \;\lor\;
            \bigl(
              \bigwedge\limits_{(\vvar_{i}, \vvar_{j}, \_) \in \orvar} \BV_{i, j}
              \land \bigwedge\limits_{(\vvar_{i}, \vvar_{j}, \_) \in \eithervar} \lnot\BV_{i, j}\bigr)}$
          \label{line:full-encodeconstraints-either-or}
      \EndFor

      \hStatex
      \State $\graphA \gets \g|_{\SO_{\g} \cup \WR_{\g} \cup \WW_{\g}}$
        \label{line:full-encodeconstraints-A}
      \State $\edges_{\graphA} \gets \edges_{\graphA} \cup \set{(\_, \_, \WW) \in \eithervar \cup \orvar \mid \tuple{\eithervar, \orvar} \in \cons_{\g}}$
        \label{line:full-encodeconstraints-add-ww-to-A}
      \State $\graphB \gets \g|_{\RW_{\g}}$
        \label{line:full-encodeconstraints-B}
      \State $\edges_{\graphB} \gets \edges_{\graphB} \cup \set{(\_, \_, \RW) \in \eithervar \cup \orvar \mid \tuple{\eithervar, \orvar} \in \cons_{\g}}$
        \label{line:full-encodeconstraints-add-rw-to-B}
      \State $\Clause \gets \Clause \cup \bbset{\BV^{\inducedgraph}_{i, j} =
        \big(\BV_{i, j} \land (\vvar_{i}, \vvar_{j}, \_) \in \edges_{\graphA}\big) \lor
        \big(\bigvee\limits_{\substack{(\vvar_{i}, \vvar_{k}, \_) \in \edges_{\graphA} \\
          (\vvar_{k}, \vvar_{j}, \_) \in \edges_{\graphB}}} \hspace{-1em} \BV_{i, k} \land \BV_{k, j}\big)
          \bigm| \vvar_{i}, \vvar_{j} \in \vertex_{\g}}$
        \Comment{encode the induced \si{} graph $\inducedgraph$ of $\g$}
        \label{line:full-encodeconstraints-add-clauses-for-I}
    \EndProcedure
    \algstore{checksi}
  \end{algorithmic}
  \end{varwidth}\qquad
  \begin{varwidth}[t]{0.49\textwidth}
  \begin{algorithmic}[1]
    \algrestore{checksi}
    \Procedure{\pruneconstraints}{\null}
      \label{line:full-func-pruneconstraints}
      \Repeat
        \State $\graphA \gets \g|_{\SO_{\g} \cup \WR_{\g} \cup \WW_{\g}}$
          \label{line:full-pruneconstraints-A}
        \State $\graphB \gets \g|_{\RW_{\g}}$
          \label{line:full-pruneconstraints-B}
        \State $\knowninducedgraph \gets \graphA \cup (\graphA \comp \graphB)$
          \label{line:full-pruneconstraints-C}
        \State $\reachabilityvar \gets \Call{\reachability}{\knowninducedgraph}$
          \label{line:full-pruneconstraints-reachability}

        \hStatex
        \ForAll{$\consvar \gets \tuple{\eithervar, \orvar} \in \cons_{\g}$}
          \label{line:full-pruneconstraints-forall-cons}
          \ForAll{$(\fromvar, \tovar, \typevar) \in \eithervar$}
            \Comment{for the ``$\eithervar$'' possibility}
            \label{line:full-pruneconstraints-forall-edge-in-either}
            \If{$\typevar = \WW$}
              \label{line:full-pruneconstraints-ww-type}
              \If{$(\tovar, \fromvar) \in \reachabilityvar$}
                \label{line:full-pruneconstraints-ww-reachability}
                \State $\cons_{\g} \gets \cons_{\g} \setminus \set{\consvar}$
                  \label{line:full-pruneconstraints-ww-cons}
                \State $\edges_{\g} \gets \edges_{\g} \cup \orvar$
                  \label{line:full-pruneconstraints-ww-edges}
                \State {\bf break} the ``{\bf for all} $(\fromvar, \tovar, \typevar) \in \eithervar$'' loop
                  \label{line:full-pruneconstraints-ww-break}
              \EndIf
            \Else \Comment{$\typevar = \RW$}
              \ForAll{$\precvar \in \vertex_{\graphA}$ such that $(\precvar, \fromvar, \_) \in \edges_{\graphA}$}
                \label{line:full-pruneconstraints-rw-forall-pre-vertex}
                \If{$(\tovar, \precvar) \in \reachabilityvar$}
                  \label{line:full-pruneconstraints-rw-reachability}
                  \State $\cons_{\g} \gets \cons_{\g} \setminus \set{\consvar}$
                    \label{line:full-pruneconstraints-rw-cons}
                  \State $\edges_{\g} \gets \edges_{\g} \cup \orvar$
                    \label{line:full-pruneconstraints-rw-edges}
                  \State {\bf break} the ``{\bf for all} $(\fromvar, \tovar, \typevar) \in \eithervar$'' loop
                    \label{line:full-pruneconstraints-rw-break}
                \EndIf
              \EndFor
            \EndIf
          \EndFor

          \ForAll{$(\fromvar, \tovar, \typevar) \in \orvar$}
            \Comment{for the ``$\orvar$'' possibility}
            \label{line:full-pruneconstraints-forall-edge-in-or}
            \If{$\typevar = \WW$}
              \label{line:full-pruneconstraints-ww-type-in-or}
              \If{$(\tovar, \fromvar) \in \reachabilityvar$}
                \label{line:full-pruneconstraints-ww-reachability-in-or}
                \If{$\consvar \notin \cons_{\g}$}
                  \Comment{neither ``$\eithervar$'' nor ``$\orvar$'' is possible}
                  \label{line:full-pruneconstraints-ww-consvar-not-in-cons}
                  \State \Return \False
                    \label{line:full-pruneconstraints-ww-return-false}
                \EndIf
                \State $\cons_{\g} \gets \cons_{\g} \setminus \set{\consvar}$
                  \label{line:full-pruneconstraints-ww-cons-in-or}
                \State $\edges_{\g} \gets \edges_{\g} \cup \eithervar$
                  \label{line:full-pruneconstraints-ww-edges-in-or}
                \State {\bf break} the ``{\bf for all} $(\fromvar, \tovar, \typevar) \in \orvar$'' loop
                  \label{line:full-pruneconstraints-ww-break-in-or}
              \EndIf
            \Else \Comment{$\typevar = \RW$}
              \ForAll{$\precvar \in \vertex_{\graphA}$ such that $(\precvar, \fromvar, \_) \in \edges_{\graphA}$}
                \label{line:full-pruneconstraints-rw-forall-pre-vertex-in-or}
                \If{$(\tovar, \precvar) \in \reachabilityvar$}
                  \label{line:full-pruneconstraints-rw-reachability-in-or}
                  \If{$\consvar \notin \cons_{\g}$}
                    \Comment{neither ``$\eithervar$'' nor ``$\orvar$'' is possible}
                    \label{line:full-pruneconstraints-rw-consvar-not-in-cons}
                    \State \Return \False
                      \label{line:full-pruneconstraints-rw-return-false}
                  \EndIf
                  \State $\cons_{\g} \gets \cons_{\g} \setminus \set{\consvar}$
                    \label{line:full-pruneconstraints-rw-cons-in-or}
                  \State $\edges_{\g} \gets \edges_{\g} \cup \eithervar$
                    \label{line:full-pruneconstraints-rw-edges-in-or}
                  \State {\bf break} the ``{\bf for all} $(\fromvar, \tovar, \typevar) \in \orvar$'' loop
                    \label{line:full-pruneconstraints-rw-break-in-or}
                \EndIf
              \EndFor
            \EndIf
          \EndFor
        \EndFor
      \Until{$\cons_{\g}$ remains unchanged}
        \label{line:full-pruneconstraints-until}
      \State \Return \True
        \label{line:full-pruneconstraints-return-true}
    \EndProcedure

    \Statex
    \Procedure{\solveconstraints}{\null}
      \label{line:full-func-solveconstraints}
      \State $\solver \gets \Call{\instance}{\BV, \Clause}$
        \label{line:full-solveconstraints-solver}
      \State \Return $\Call{\solve}{\solver, \inducedgraph \text{ is acyclic}}$
        \label{line:full-solveconstraints-call-solve}
    \EndProcedure
  \end{algorithmic}
  \end{varwidth}
  \normalsize
\end{algorithm*}

The full version of the checking algorithm is given in Algorithm~\ref{alg:checksi-full}.
In the following, we reference pseudocode lines using the format algorithm\#:line\#.

\section{Proofs} \label{section:appendix-proofs}

\subsection{Proof of Theorem~\ref{thm:polygraph-si}} \label{ss:proof-polygraph-si}

\begin{proof} \label{proof:polygraph-si}
  The proof proceeds in two directions.

  (``$\implies$'') Suppose that $\H$ satisfies \si.
  By Theorem~\ref{thm:depgraph-si}, $\H$ satisfies $\intaxiom$
  and there exist $\WR$, $\WW$, and $\RW$ relations with which $\H$ can be extended to a dependency graph $G$
  such that $(\SO_{\g} \cup \WR_{\g} \cup \WW_{\g}) \comp \RW_{\g}?$ is acyclic.
  We show that the generalized polygraph $G'$ of $\H$ is \si-acyclic
  by constructing a compatible graph $G''$ with $G'$ such that $G''|_{\inducedrule}$
  is acyclic when the edge types are ignored:
  Consider a constraint $\tuple{\eithervar, \orvar}$ in $G'$ and any $\WW$ edge $T \rel{\WW} S$ in $G$.
  If $(T, S, \WW) \in \eithervar$, add all the edges in $\eithervar$ into $G''$.
  Otherwise, add all the edges in $\orvar$ into $G''$.

  (``$\impliedby$'') Suppose that $\H \models \intaxiom$ and the generalized polygraph $G'$ of $\H$ is \si-acyclic.
  By Definition~\ref{def:si-acyclicity}, there exists a compatible graph $G''$ with $G'$
  such that $G''|_{\inducedrule}$ is acyclic when the edge types are ignored.
  We show that $\H$ satisfies \si{} by constructing suitable $\WR$, $\WW$, and $\RW$ relations
  with which $\H$ can be extended to a dependency graph $G$
  such that $(\SO_{\g} \cup \WR_{\g} \cup \WW_{\g}) \comp \RW_{\g}?$ is acyclic:
  We simply take $G$ to be $G''$ by defining, e.g., $\WW = \set{(a, b) \mid (a, b, \WW) \in E_{G''}}$.
\end{proof}

\subsection{Proof of Theorem~\ref{thm:correctness-of-pruneconstraints}} \label{ss:correctness-of-pruneconstraints}

\begin{proof} \label{proof:correctness-of-pruneconstraints}
  We prove (1) by induction on the number of iterations of pruning.
  Consider an arbitrary iteration of pruning $\pruning$
  (lines~\code{\ref{alg:checksi-full}}{\ref{line:full-pruneconstraints-A}}--\code{\ref{alg:checksi-full}}{\ref{line:full-pruneconstraints-rw-break-in-or}})
  and denote the generalized polygraphs just before and after $\pruning$ by $G_{1}$ and $G_{2}$, respectively.
  We should show that $G_{1}$ is \si-acyclic if and only if
  $\pruning$ does not return $\False$ from line~\code{\ref{alg:checksi-full}}{\ref{line:full-pruneconstraints-ww-return-false}}
  or line~\code{\ref{alg:checksi-full}}{\ref{line:full-pruneconstraints-rw-return-false}} and $G_{2}$ is \si-acyclic.
  The following proof proceeds in two directions.

  (``$\implies$'') Suppose that $G_{1}$ is \si-acyclic. We then proceed by contradiction.
  \begin{itemize}
    \item Suppose that $\pruning$ returns $\False$ from
      line~\code{\ref{alg:checksi-full}}{\ref{line:full-pruneconstraints-ww-return-false}} or line~\code{\ref{alg:checksi-full}}{\ref{line:full-pruneconstraints-rw-return-false}}.
      This happens when some constraint in $G_{1}$ constructed based on two write transactions, say $T$ and $S$,
      on the same key cannot be resolved appropriately:
      every compatible graph with the induced \si{} graph of $G_{1}$ contains a cycle
      without adjacent $\RW$ edges, no matter whether $T \rel{\WW} S$ or $S \rel{\WW} T$ is in it.
      That is, $G_{1}$ is not \si-acyclic. Contradiction.
    \item Suppose that $G_{2}$ is not \si-acyclic.
      $G_{1}$ is not \si-acyclic because
      the set of constraints in $G_{2}$ is a subset of that in $G_{1}$
      and $\pruning$ prunes a constraint only if one of its two possibilities cannot happen.
      Contradiction.
  \end{itemize}

  (``$\impliedby$'') Suppose that $\pruning$ does not return $\False$ from line~\code{\ref{alg:checksi-full}}{\ref{line:full-pruneconstraints-ww-return-false}}
  or line~\code{\ref{alg:checksi-full}}{\ref{line:full-pruneconstraints-rw-return-false}} and $G_{2}$ is \si-acyclic.
  Therefore, there exists an acyclic compatible graph with the induced \si{} graph of $G_{2}$.
  This is also an acyclic compatible graph with the induced \si{} graph of $G_{1}$.
  Hence, $G_{1}$ is \si-acyclic.

  The second part of the theorem holds because any compatible graph with the induced \si{} graph of $G_{p}$
  is also a compatible graph with the induced \si{} graph of $G$.
\end{proof}

\section{The Interpretation Algorithm} \label{section:appendix-interpretation}

In this section, we describe the interpretation algorithm, called \interpret{},
that helps locate the causes of violations found by \name; see Algorithm~\ref{alg:interpret}.
The algorithm takes as input the undesired cycles constructed from the log generated by MonoSAT,
and outputs a dependency graph demonstrating the cause of the violation.


\begin{algorithm*}[t]
  \footnotesize
	\caption{The interpretation algorithm.}
	\label{alg:interpret}
    \begin{varwidth}[t]{0.49\textwidth}
	\begin{algorithmic}[1]
		\Statex $\H = (\T, \SO)$: the original history
		\Statex $G = (\vertex\_1, \edges\_1, \cons\_1)$: the polygraph
		\Statex $\cycle = (\vertex\_2, \edges\_2)$: The cycle found by \name

		\Statex
        \Procedure{\interpret}{$\H, G, \cycle$}
          \label{procedure:interpret}
            \State $\recoveredgraph$ $\gets$ $\Call{\Restore}{\H, G, \cycle}$
                \label{line:interpret-call-restore}
            \State $\taggedgraph$ $\gets$ $\Call{\Resolve}{\recoveredgraph,\H, G, \cycle}$
                \label{line:interpret-call-resolve}
            \State $\finalizedgraph$ $\gets$ $\Call{\Finalize}{\taggedgraph}$
                \label{line:interpret-call-finalize}
            \State \Return $\recoveredgraph,\taggedgraph,\finalizedgraph$
              \label{line:interpret-return}
        \EndProcedure

		\Statex
        \Procedure{\Restore}{$\H, G, \cycle$} \label{procedure:restore}
            \State $\recoveredgraph \gets \Call{\Findacs}{\cycle}$
            
                \ForAll{$(u\rel{\RW}v) \in \edges\_2$}
                \label{recover-rw-edges}
                    \ForAll{$c \in \cons\_1$}
                        \If{$((u\rel{\RW}v) \in \ct) \wedge (\exists w, (w\rel{\WW}v) \in \ct)$}
                            \State
                            $\recoveredgraph$ $\gets$ $\recoveredgraph
                                \cup \{w\rel{\WW}v\} \cup \{w\rel{\WR}u\}$
                        \EndIf
                    \EndFor
                \EndFor
                \label{recover-rw-edges-finish}

            \State
            \Return $\recoveredgraph$
        \EndProcedure     
        
        \Statex
        \Procedure{\Findacs}{$\acs,G$} \label{procedure:restore acs}
            \State
                $\graphsize$ $\gets$ number of edges in $\acs$ 
            \State
                $\minimalExtendDeps$ $\gets$ $+\infty$
            \State
                $\minimalAcs$ $\gets$ empty
            \ForAll{$\dependency \in (\edges\_2 \cap \ct.either)$, where $\ct \in \cons\_1$}
                \label{recover-edge-in-cons-1}
                    \If{($\exists \dependency' \in \ct.or, \dependency' \in \recoveredgraph$)}
                        continue
                    \EndIf
                    \label{opposite-dp-has-contained-1}

                    \ForAll{$\dependency' \in \ct.or$}
                    \label{find-cycle-contain-opposite-dp}
                        \ForAll{( cycle $(\vertex',\edges')\subseteq G$) $\wedge$  ($\dependency' \in \edges'$ )}
                            \State
                                $\testgraph$ $\gets$ $\recoveredgraph \cup (\vertex',\edges')$
                            \State
                                $(\extendDeps, \extendDeps)$ $\gets$ $\Call{\Findacs}{\testgraph,G}$
                                \label{find-cycle-recursively1}
                            \If{$\extendDeps < \minimalExtendDeps$}
                                \State
                                    $\minimalExtendDeps$ $\gets$ $\extendDeps$
                                \State
                                    $\minimalAcs$ $\gets$ $\extendgraph$
                            \EndIf
                        \EndFor
                    \EndFor
                \EndFor
                \label{recover-cons-finish-1}
                
                \ForAll{$\dependency \in (\edges\_2 \cap \ct.or)$, where $\ct \in \cons\_1$}
                \label{recover-edge-in-cons-2}
                    \If{($\exists \dependency' \in \ct.either, \dependency' \in \recoveredgraph$)}
                        continue
                    \EndIf
                    \label{opposite-dp-has-contained-2}

                    \ForAll{$\dependency' \in \ct.either$}
                    \label{find-cycle-contain-opposite-dp}
                        \ForAll{( cycle $(\vertex',\edges')\subseteq G$) $\wedge$  ($\dependency' \in \edges'$ )}
                            \State
                                $\testgraph$ $\gets$ $\recoveredgraph \cup (\vertex',\edges')$
                            \State
                                $(\extendDeps, \extendDeps)$ $\gets$ $\Call{\Findacs}{\testgraph,G}$
                                \label{find-cycle-recursively2}
                            \If{$\extendDeps < \minimalExtendDeps$}
                                \State
                                    $\minimalExtendDeps$ $\gets$ $\extendDeps$
                                \State
                                    $\minimalAcs$ $\gets$ $\extendgraph$
                            \EndIf
                        \EndFor
                    \EndFor
                \EndFor
                \label{recover-cons-finish-2}
            \Return $(\minimalExtendDeps, \minimalAcs+\graphsize)$
        \EndProcedure
        \algstore{interpret}
      \end{algorithmic}
      \end{varwidth}\qquad
      \begin{varwidth}[t]{0.49\textwidth}
        \begin{algorithmic}[1]
        \algrestore{interpret}
        \Procedure{\Resolve}{$\recoveredgraph,\H, G, \cycle$} \label{procedure:resolve}
            \State $\taggedgraph \gets \recoveredgraph$

            \ForAll{$\dependency \in \taggedgraph$}
            \label{set-known-certain}
                \If{$\dependency \in \edges\_1$}
                    \State
                        $\dependency.tag$ $\gets$ `certain'
                \Else
                    \State
                        $\dependency.tag$ $\gets$ `uncertain'
                \EndIf
            \EndFor

            \While{$\taggedgraph$ was changed in the last loop}
            \label{resolve-part}
                \ForAll{$\dependency \in \taggedgraph$}
                \label{tag-cons}
                    \If{($\dependency.tag$ = `uncertain') $\wedge$ ($\dependency$ in a cycle $(\vertex',\edges')$)}
                        \If{$\forall \dependency' \in \edges' \wedge \dependency'\neq \dependency$, $\dependency'.tag$ = certain }
                            \State
                                $\dependency.tag$ $\gets$ `uncertain'
                            \ForAll{$\dependency_{opposite}$, where $\{ \dependency / \dependency_{opposite} \} \in \cons\_1$}
                                \State
                                    $\dependency_{opposite}.tag$ $\gets$ `certain'
                            \EndFor
                        \EndIf
                    \EndIf
                \EndFor
            \EndWhile
            \label{resolve-part-finish}

            \Return $\taggedgraph$
        \EndProcedure

        \Statex
        \Procedure{\Finalize}{$\taggedgraph$} \label{procedure:finalize}
            \State
                $\finalizedgraph$ $\gets$ $\taggedgraph$
            \ForAll{$\dependency \in \finalizedgraph$}
            \label{delete-uncertain-dps}
                \If{$\dependency.tag = $ `uncertain'}
                    \State
                    $\finalizedgraph$ $\gets$ $\finalizedgraph \setminus \{ \dependency \}$
                \EndIf
            \EndFor
            \State
                \Return $\finalizedgraph$
        \EndProcedure
	\end{algorithmic}
  \end{varwidth}
  \normalsize
\end{algorithm*}

It is often difficult to locate the cause of a violation based solely on the original cycles
because the cycles may miss some crucial information such as
the core participating transactions and the dependencies between transactions.
Therefore, \interpret{} first restores such information from the generalized polygraph of the history
to reproduce the whole violating scenario
(line~\code{\ref{alg:interpret}}{\ref{line:interpret-call-restore}}).
This may bring uncertain dependencies to the resulting polygraph, which are then resolved via pruning
(line~\code{\ref{alg:interpret}}{\ref{line:interpret-call-resolve}}).
Finally, \interpret{} finalizes the violating scenario by removing any remaining uncertain dependencies,
as they are the ``effect'' of the violation instead of the ``cause''
(line~\code{\ref{alg:interpret}}{\ref{line:interpret-call-finalize}}).


\paragraph*{Restore Transactions and Dependencies.}
The procedure \Restore{} restores dependencies and the transactions involved in these dependencies in two steps
(line~\code{\ref{alg:interpret}}{\ref{procedure:restore}}).
First, it checks each $\RW$ dependency in the input undesired cycles
and restores its associated $\WW$ and $\WR$ dependencies if they are missing
(lines~\code{\ref{alg:interpret}}{\ref{recover-rw-edges}} -- \code{\ref{alg:interpret}}{\ref{recover-rw-edges-finish}}).
Then, for each $\WW$ dependency in an undesired cycle, 
it restores the other direction of this $\WW$ dependency (if missing)
which may be involved in another undesired cycle
(line~\code{\ref{alg:interpret}}{\ref{recover-edge-in-cons-1}} 
- line~\code{\ref{alg:interpret}}{\ref{recover-cons-finish-2}}).
Now it has recovered the minimal counter example.
To help understand the violation more clearly, it checks each $\RW$ dependency
in the input undesired cycles and restore its associated $\WW$ and
$\WR$ dependencies if they are missing ((line~\code{\ref{alg:interpret}}{\ref{recover-rw-edges}} -- \code{\ref{alg:interpret}}{\ref{recover-rw-edges-finish}}).

\paragraph*{Resolve Uncertain Dependencies.}
The restored violating scenario may contain $\WW$ and $\RW$ dependencies
that are still uncertain at this moment.
For an uncertain dependency to be part of the cause of a violation, it must be resolved.
The procedure \Resolve{} resolves as many uncertain dependencies as possible
using the same idea of pruning (Section~\ref{ss:prune-constraints}).
Specifically, if an uncertain dependency from $\eithervar$ (resp. $\orvar$)
in a constraint would create an undesired cycle with other certain dependencies,
it, along with its associated dependencies in $\eithervar$ (resp. $\orvar$) is removed
and the dependencies in $\orvar$ (resp. $\eithervar$) become certain 
(lines~\code{\ref{alg:interpret}}{\ref{resolve-part}} 
- ~\code{\ref{alg:interpret}}{\ref{resolve-part-finish}}).

\paragraph*{Finalize the Violating Scenario.}

The procedure \Finalize{} finalizes the violating scenario
by removing all the remaining uncertain dependencies,
as they are the ``effect'' of the violation instead of the ``cause''
(line~\code{\ref{alg:interpret}}{\ref{procedure:finalize}}).

\section{Causality Violations Found}
\label{section:appendix-causality}

\subsection{A Causality Violation Found in Dgraph}

\begin{figure*}[t]
	\centering
	\begin{subfigure}[b]{0.22\textwidth}
		\centering
		\includegraphics[width = \textwidth]{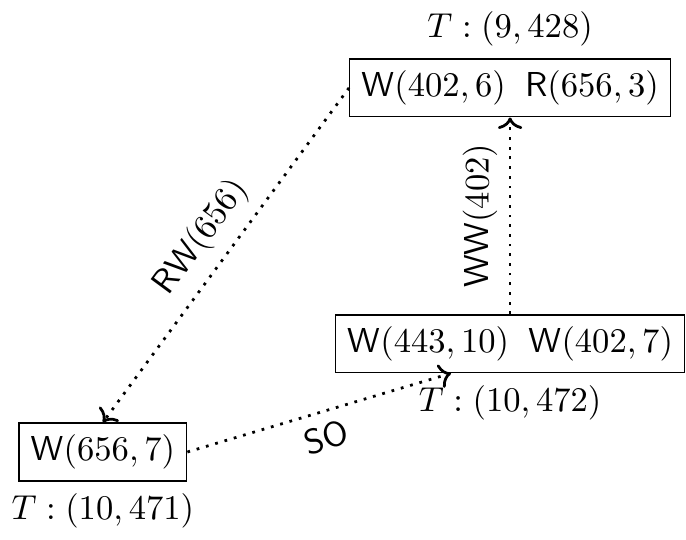}
		\caption{Original output}
	\end{subfigure}\hspace{22ex}
	\begin{subfigure}[b]{0.38\textwidth}
		\centering
		\includegraphics[width = 1.25\textwidth]{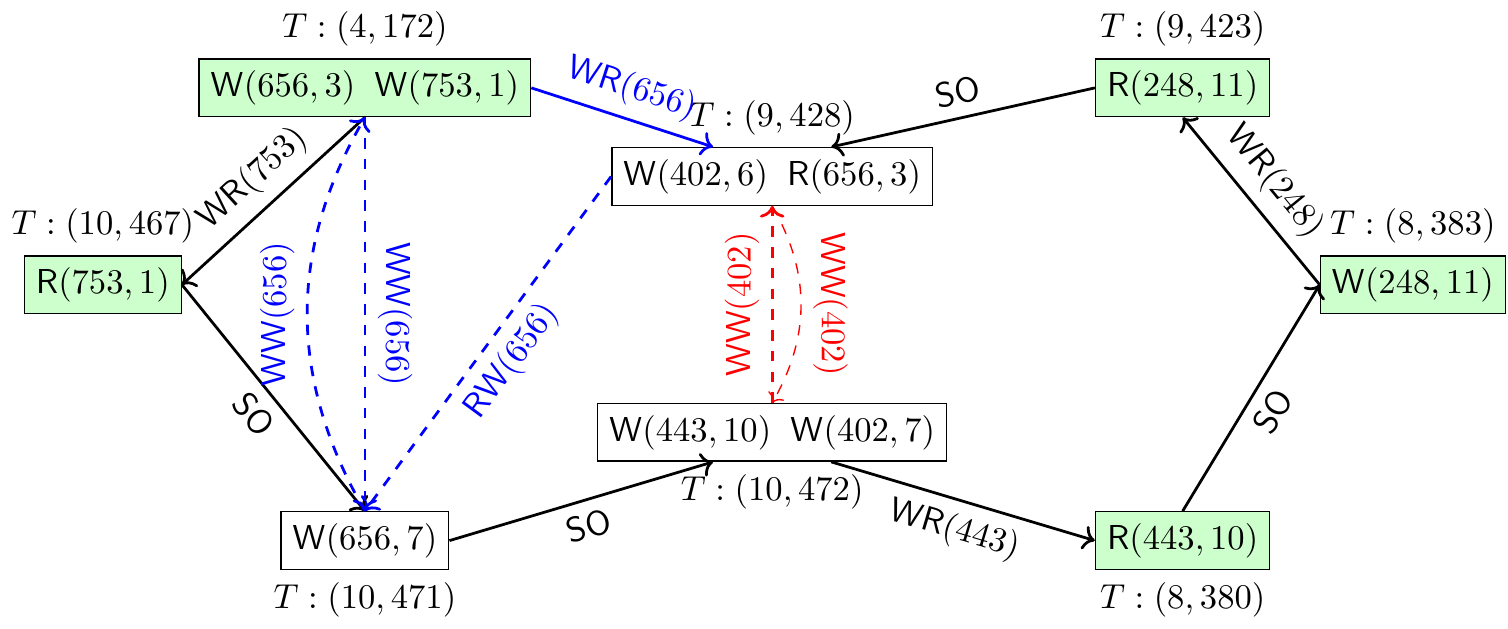}
		\caption{Missing participants}
	\end{subfigure}\vspace{3ex}
	\begin{subfigure}[b]{0.38\textwidth}
		\centering
		\includegraphics[width = 1.25\textwidth]{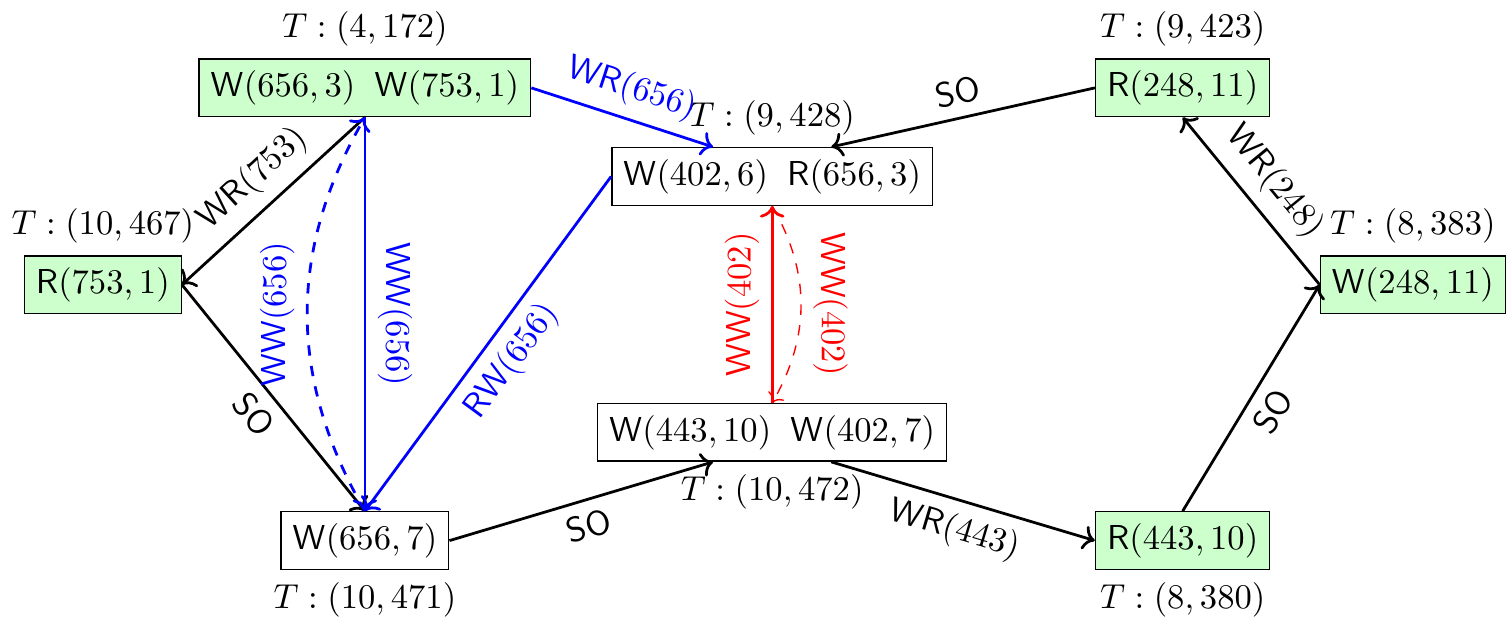}
		\caption{Recovered scenario}
	\end{subfigure}\hspace{15ex}
	\begin{subfigure}[b]{0.38\textwidth}
		\centering
		\includegraphics[width = 1.25\textwidth]{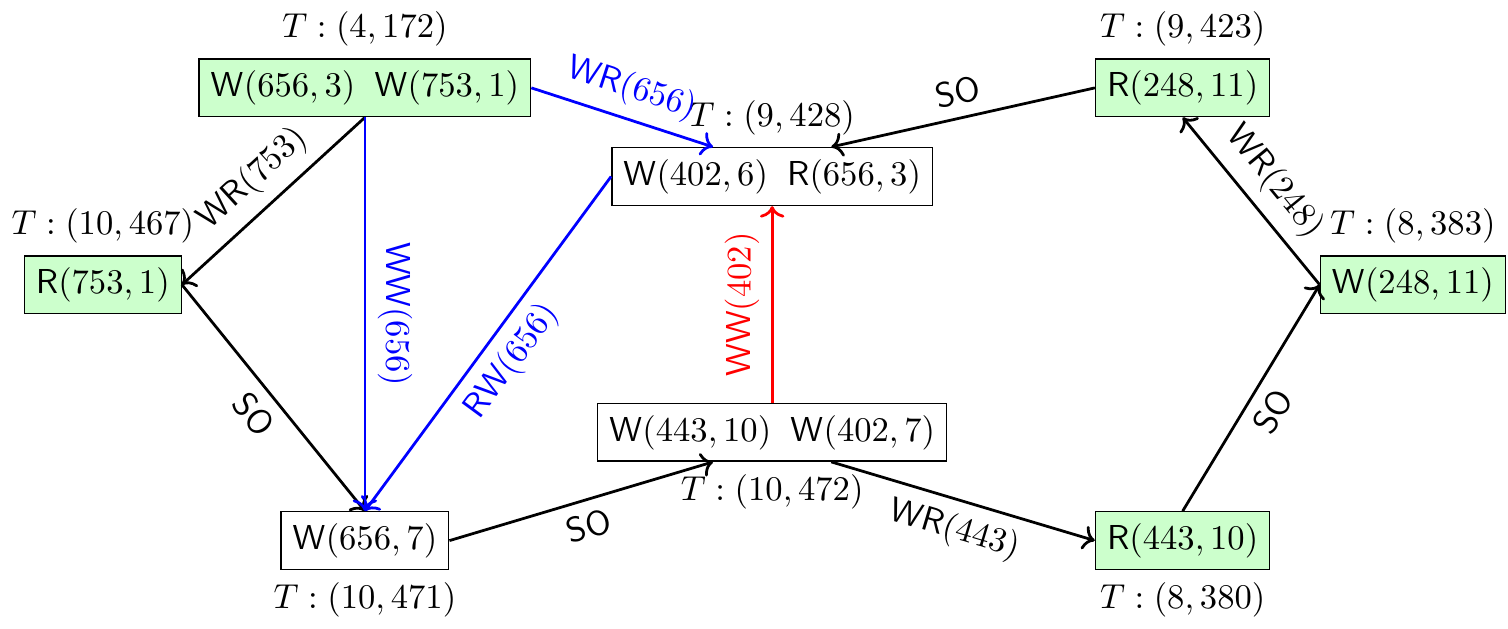}
		\caption{Finalized scenario}
	\end{subfigure}\hspace{15ex}
	\caption{\label{ce:dgraph}  Causality violation: the SI anomaly
		found in Dgraph.
		Dashed and solid  arrows represent uncertain and  certain dependencies, respectively.
		Recovered transactions are colored in green.
		The core dependencies involved in the two sub-scenarios are colored in red and blue, respectively.}
\end{figure*}

There might be multiple ``missing'' transactions that together contribute to a violation.
As depicted in Figure~\ref{ce:dgraph}(b),   \name{} restores the potentially involved five transactions (in green) and the associated dependencies from the original output cycle in Figure~\ref{ce:dgraph}(a).
In particular,  two sub-scenarios are involved: the left subgraph  concerns the $\RW(656)$ dependency on key 656 (in blue) while
the  right subgraph concerns the dependency $\WW(402)$ on key 402 (in red).
Following the same procedure as in the MariaDB-Galera example,
\name{}  resolves the outstanding dependencies if no  undesired cycles arise; see Figure \ref{ce:dgraph}(c).
For example,  we have T:(4,172)$\rel{\WW}$T:(10,471) (in blue) as there would otherwise be a cycle T:(10,471)$\rel{\WW}$T:(4,172)$\rel{\WR}$T:(10,467)$\rel{\SO}$T:(10,471).
The final violating scenario is shown in Figure \ref{ce:dgraph}(d) where two  impossible (dashed) dependencies in Figure \ref{ce:dgraph}(c) have been eliminated.


This violation occurs as the causality order is violated, which is a happens-before relationship between any two transactions in a given history~\cite{DBLP:journals/cacm/Lamport78,Eiger:NSDI2013}.\footnote{Intuitively,
	transaction $T$ causally depends
	on  transaction $S$ if any of the following conditions holds:
	(i) $T$ and $S$ are issued in the same session and $S$ is executed before $T$;
	(ii) $T$ reads the value written by $S$; and
	(iii) there exists another transaction $R$ such that $T$ causally depends on $R$ which in turn causally depends on $S$. 
}
More specifically,
transaction T:(9,428) causally depends on   transaction T:(10,471) (via the  counterclockwise path in the right subgraph) and transaction T:(10,471) causally depends on transaction T:(4,172) (via the clockwise path in the left subgraph).  Hence,
transaction T:(9,428) should have fetched the value 7 of key 656 written by transaction T:(10,471),  instead of the value 3 of transaction T:(4,172), to respect the causality order.

\subsection{A Causality Violation Found in YugabyteDB} 


\begin{figure*}[t]
  \centering
    \begin{subfigure}[b]{0.2\textwidth}
        \centering
        \includegraphics[width = \textwidth]{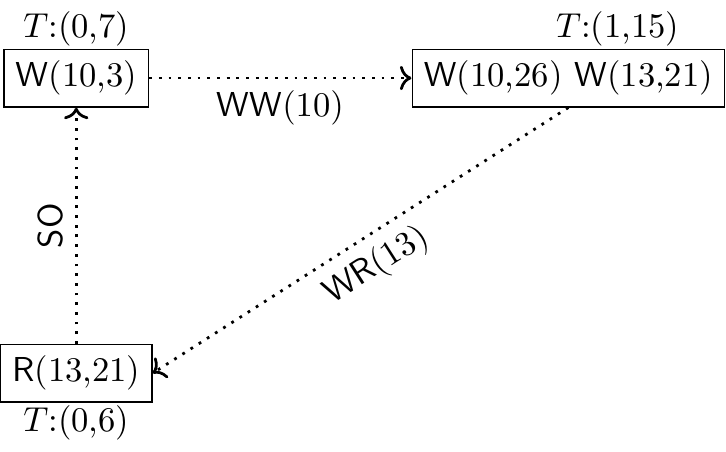}
        \caption{Original output}
        \label{fig:yugabytedb-ce-original}
    \end{subfigure}\hspace{6ex}
    \begin{subfigure}[b]{0.17\textwidth}
        \centering
        \includegraphics[width = 1.25\textwidth]{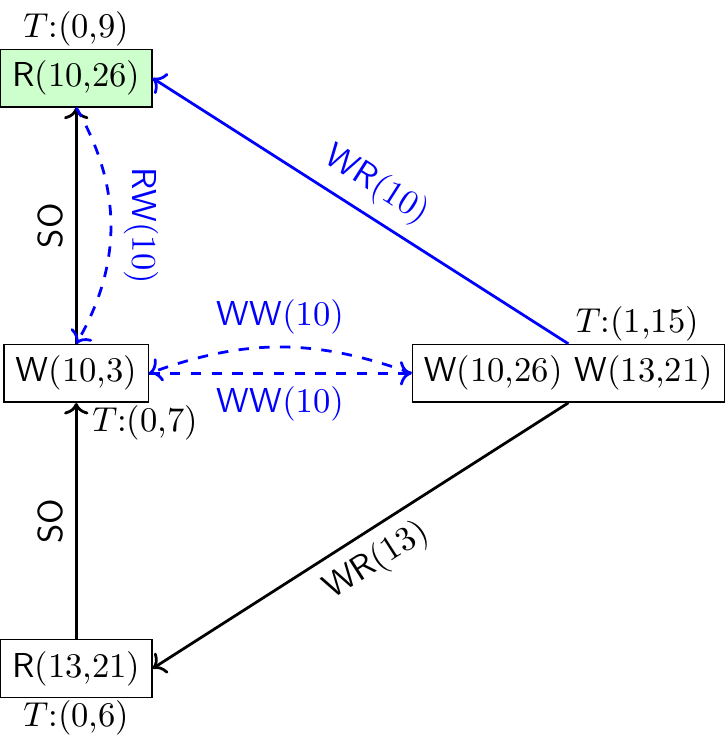}
        \caption{Missing participants}
		\label{fig:yugabytedb-ce-participants}
    \end{subfigure}\hspace{10ex}
    \begin{subfigure}[b]{0.17\textwidth}
        \centering
        \includegraphics[width = 1.25\textwidth]{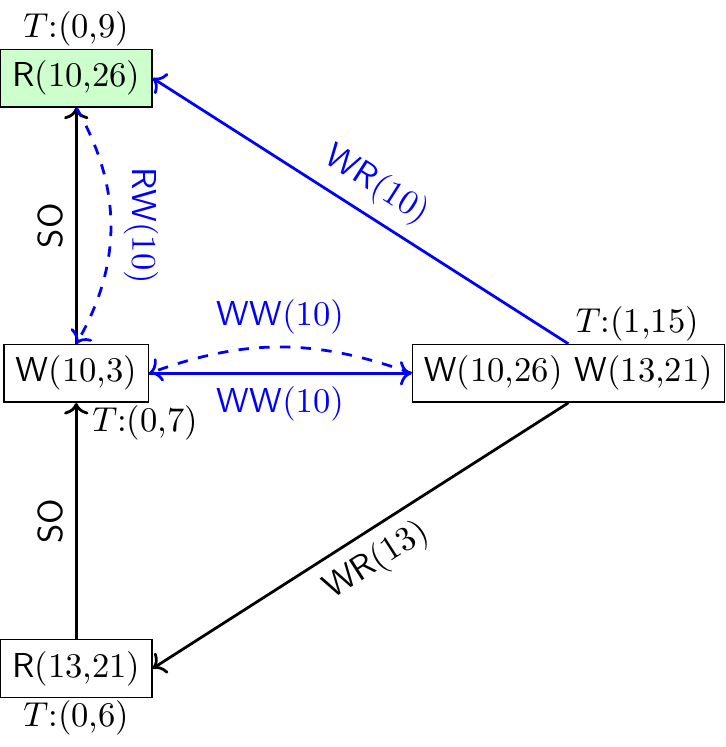}
        \caption{Recovered scenario}
		\label{fig:yugabytedb-ce-recovered}
    \end{subfigure}\hspace{10ex}
    \begin{subfigure}[b]{0.17\textwidth}
        \centering
        \includegraphics[width = 1.25\textwidth]{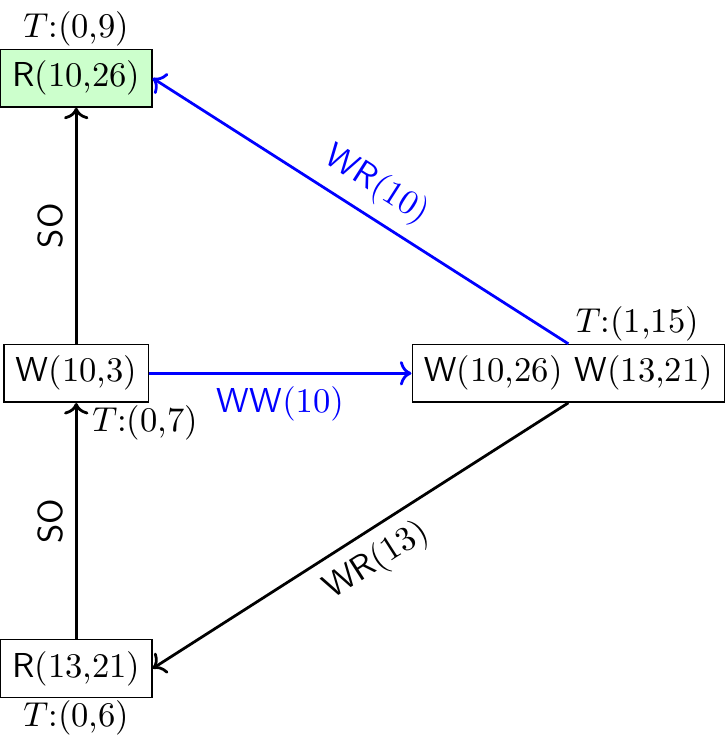}
        \caption{Finalized scenario}
		\label{fig:yugabytedb-ce-finalized}
    \end{subfigure}
    \caption{Causality violation: the SI anomaly found in YugabyteDB.
    The recovered dependencies are colored in blue with dashed and solid arrows indicating uncertain  and certain dependencies,  respectively.  The (only) missing transaction is colored in green.}
	\label{fig:yugabytedb-ce}
\end{figure*}

Figure~\ref{fig:yugabytedb-ce} shows how Algorithm~\ref{alg:interpret}
interprets a causality violation found in YugabyteDB.
MonoSAT reports an undesired cycle $T: (0, 7) \rel{\WW(10)} T: (1,15) \rel{\WR(13)} T: (0, 6) \rel{\SO} T: (0, 7)$; see Figure~\ref{fig:yugabytedb-ce-original}.
In this scenario, no transactions  observe values that have been causally overwritten;
consider the only transaction $T: (0, 6)$ in Figure~\ref{fig:yugabytedb-ce-original} that reads.
To locate the cause of the causality violation,
\name{} first finds the (only) ``missing'' transaction $T: (0, 9)$ (colored in green)
and the associated dependencies, as shown in Figure~\ref{fig:yugabytedb-ce-participants}.
The $\WW$ and $\RW$ dependencies are uncertain at this moment (represented by dashed arrows),
while the $\WR$ dependency is certain (represented by solid arrows, colored in blue).
\name{} then restores the violating scenario by resolving such uncertainties.
Specifically, as shown in Figure~\ref{fig:yugabytedb-ce-recovered},
\name{} determines that $W(10, 3)$ of transaction $T: (0, 7)$ was actually
installed before $W(10, 26)$ of transaction $T: (1, 15)$, i.e., $T: (0, 7) \rel{\WW(10)} T: (1, 15)$.
Otherwise, the other direction of dependency $T: (1, 15) \rel{\WW(10)} T: (0, 7)$
would enforce the dependency $T: (0, 9) \rel{\RW(10)} T: (0, 7)$,
which would create an undesired cycle with the known dependency $T: (0, 7) \rel{\SO} T: (0, 9)$.
Finally, \name{} finalizes the violating scenario by removing the remaining uncertainties
from Figure~\ref{fig:yugabytedb-ce-recovered} to obtain Figure~\ref{fig:yugabytedb-ce-finalized}.

Thanks to the participation of transaction $T: (0, 9)$,
the cause of the causality violation becomes clear:
Transaction $T: (0, 7)$ causally depends on transaction $T: (1, 15)$,
via $T: (1, 15) \rel{\WR(13)} T: (0, 6) \rel{\SO} T: (0, 7)$.
However, transaction $T: (0, 9)$ following transaction $T: (0,7)$ on the same session
reads the value $26$ of key $10$ from transaction $T: (1, 15)$,
which should have been overwritten by transaction $T: (0, 7)$.
\input{tables/memory}
\input{tables/large}
\input{tables/random-large}


\section{Minimality of Counterexamples}
\label{section:appendix-minimality}

In this section we show that our interpretation algorithm
actually returns a \emph{minimal} counterexample
which contains the cycle found by MonoSAT and
is just informative to help understand the violation.

\subsection{Definitions} \label{ss:defs}


\begin{definition} \label{def:vio}
  A \bfit{violation} is a polygraph that fails the checking of PolySI.
\end{definition}

Let $C$ be the cycle found by MonoSAT in polygraph $G = \{V,E,Cons\}$.
Consider the violation $vio = \{V',E',Cons'\}$ returned by the interpretation algorithm.
It satisfies the following properties:

\begin{itemize}
  \item $vio \subseteq G$, which means $V' \subseteq V, E' \subseteq E, Cons' \subseteq Cons$
  \item $C \subseteq vio$. 
  \item $vio$ fails the checking of PolySI.
\end{itemize}

We call $vio$ a violation with respect to $\set{G,C}$.

\begin{definition} \label{def:minimal-violation}
    A violation $vio$ is \bfit{minimal},
    if removing any dependency edge from $vio$ makes it no longer a violation.
\end{definition}

There may be more than one minimal violation with respect to $\{G,C\}$.
And we name the one who has the least number of dependencies
\bfit{the minimal counter example with respect to $\{G,C\}$},
which we expect to get.

\begin{definition} \label{def:minimal counterexample}
    Given a polygraph $G$ and a cycle $C$,
    there may be more than one minimal violations based on $\{G,C\}$.
    We name the one which has the least number of dependencies
    \bfit{the minimal counterexample with respect to $\{G,C\}$}.
\end{definition}


\subsection{Patterns of Minimal Counterexamples}

A minimal counter example is a special polygraph,
which consists of several cycles.
We define a data structure called "adjoining cycle set",
and prove that a minimal counterexample is exactly 
a minimal complete adjoining cycle set.

\begin{definition} \label{def:acs}
    A set of cycles $acs$ is called \bfit{adjoining cycle set}, 
    if $\forall$ cycle $C_1 \in acs$, 
    $\exists$ a constraint $cons = \{dep_1,.../dep_2,...\}$ and a cycle $C_2 \in acs$, s.t. $dep_1 \in C_1$ and $dep_2 \in C_2$.
    \begin{itemize}
        \item $C_1$ is the adjoining cycle of $C_2$, while $C_2$ is the adjoining cycle of $C_1$.
        \item If $\forall$ constraint $cons$, the two choice of $cons$ both have dependencies in $acs$ or neither have dependencies in $acs$, then $acs$ is \bfit{complete}.
    \end{itemize}
\end{definition}

Definition \ref{def:acs} describes the data structure called \bfit{adjoining cycle set}.
Figure \ref{def_complete_acs} gives an example of adjoining cycle set. 
When we say a adjoining cycle set is complete, 
it means no more cycles can be added to the set 
and each constraint included in the set has two choices.
$\{C1,C3\}$ is an adjoining cycle set, 
but it is incomplete because there exists a constraint (the green one) 
that has only one choice in $\{C1,C3\}$.
$\{C1,C2,C3\}$ is a complete adjoining cycle set, 
since each constraint has two choices in $\{C1,C2.C3\}$.

\begin{figure}[ht] 
    \centering
    \includegraphics[scale = 0.5]{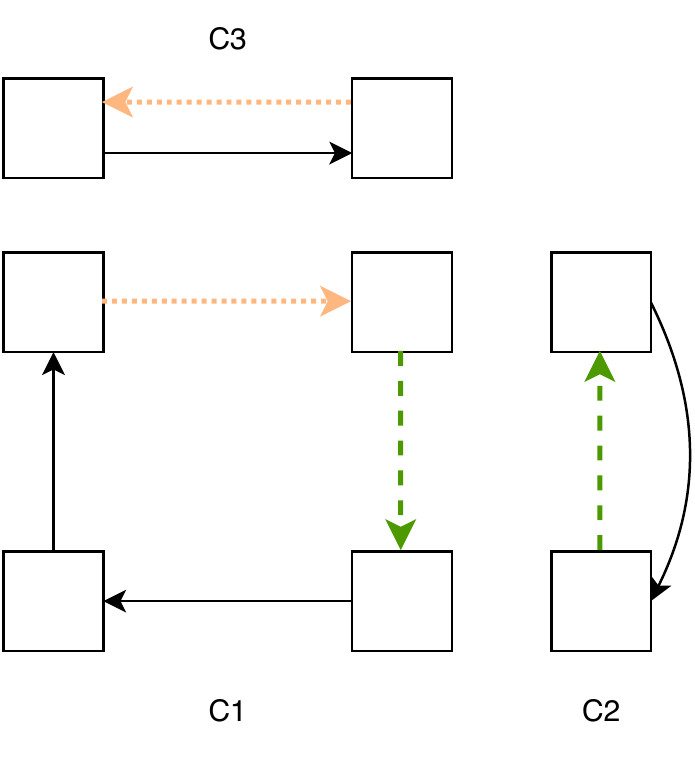}
    \caption{Complete and incomplete adjoining cycle sets. $\{C1,C2,C3\}$ is a complete adjoining cycle set. $\{C1,C2\}$ and $\{C1,C3\}$ are incomplete cycle sets.}
    \label{def_complete_acs}
\end{figure}

\begin{lemma} \label{lemma:one-acs-in-minimal}
    A minimal violation is exactly a complete adjoining cycle set.
\end{lemma}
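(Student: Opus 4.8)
The plan is to recast the PolySI check in terms of constraint assignments and then prove the two inclusions separately. First I would record the reformulation that, by Theorem~\ref{thm:polygraph-si} together with Definition~\ref{def:si-acyclicity}, a polygraph $\mathit{vio}$ is a violation exactly when \emph{every} compatible graph of $\mathit{vio}$ has a cyclic induced SI graph. Since a compatible graph amounts to selecting, for each constraint $c = \tuple{\eithervar,\orvar}$, one of its two sides (and materializing the corresponding $\WW$/$\RW$ edges), I would treat a compatible graph as an assignment $\alpha$ mapping each constraint to $\eithervar$ or $\orvar$. For a cycle $C$ in the induced graph, define its \emph{demand} as the partial assignment it forces: each $\WW$ or $\RW$ edge of $C$ comes from exactly one side of a unique constraint and thus pins that constraint to that side. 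Then $C$ is realized under $\alpha$ iff $\alpha$ extends $C$'s demand, ``$\mathit{vio}$ is a violation'' becomes ``every total $\alpha$ extends the demand of some cycle of $\mathit{vio}$,'' and two cycles adjoin in the sense of Definition~\ref{def:acs} precisely when their demands pin a common constraint to opposite sides.

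For the forward inclusion (minimal violation $\Rightarrow$ complete adjoining cycle set) I would first show $\mathit{vio}$ is a union of cycles: if some dependency edge $e$ lay on no realized cycle under any assignment, then deleting $e$ cannot destroy any realized cycle (it only removes cycles through $e$ and the composed $\RW$ edges it feeds, which are non-cyclic by assumption), so $\mathit{vio}\setminus\set{e}$ would still be a violation, contradicting minimality (Definition~\ref{def:minimal-violation}). Hence every edge lies on a cycle and the cycles constitute a set $A$ with $\bigcup A = \mathit{vio}$. I would then obtain completeness and the adjoining property together by one flipping argument: fix a constraint $c$ used on side $v$ by a cycle $C_1 \in A$; if the opposite side $\bar v$ contributed no edge to any cycle of $\mathit{vio}$, the assignment setting $c=\bar v$ (and otherwise mirroring a realization of $C_1$) avoids all cycles through $c$, so since $\mathit{vio}$ is still a violation some cycle $C'$ not using $c$ is realized, making $c$'s $v$-side edges inessential and removable --- contradicting minimality. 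Thus both sides of every used constraint occur in $A$ (completeness), and the cycle witnessing side $\bar v$ is the adjoining partner of $C_1$ (adjoining). A short case handles a cycle using no constraint at all: it is realized under every $\alpha$, so by minimality $\mathit{vio}$ is exactly that single known cycle, the degenerate complete adjoining cycle set.

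For the converse (complete adjoining cycle set $\Rightarrow$ minimal violation) I would show that a complete adjoining cycle set $A$ is a violation and is minimal. For the violation part, given any total $\alpha$ I must exhibit a cycle of $A$ realized by $\alpha$: start from any cycle, and whenever its demand conflicts with $\alpha$ on a constraint $c$, completeness guarantees a cycle of $A$ on the side $\alpha$ chooses for $c$, while the adjoining structure lets me follow the conflict; finiteness of the constraint set plus the ``balanced'' completeness condition rules out an infinite chase. Minimality then follows from the forward characterization: deleting any edge either opens a cycle whose demand is no longer covered, exposing an assignment with no realized cycle, or breaks completeness of one of its constraints, and in either case the result passes the check.

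The hard part will be the violation claim of the converse --- the no-escape argument that a balanced, connected family of cycles cannot be simultaneously avoided --- since a naive flip of one constraint can create a realized cycle elsewhere; I expect to phrase this cleanly by showing the clause set $\set{\lnot\,\textrm{demand}(C) \mid C \in A}$ is unsatisfiable and that completeness is exactly the condition making it a minimally unsatisfiable core (complementary literals on every occurring variable). A secondary technical subtlety, to be watched throughout, is that compatible graphs may \emph{add} edges beyond those in $\mathit{vio}$; I must ensure no flip introduces a spurious cycle through edges outside $A$, which holds because the union-of-cycles step places every such added edge off all cycles.
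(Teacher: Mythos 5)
Your overall plan mirrors the paper's proof of this lemma: both argue the forward inclusion by showing a minimal violation is a union of cycles whose constraints must be ``balanced'' (via edge-deletion and constraint-flipping arguments essentially identical to the paper's), and both rest the converse on the claim that a complete adjoining cycle set cannot be avoided by any resolution of its constraints. Your forward direction is in fact spelled out more carefully than the paper's.

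The gap is in the converse, and you have located exactly the right spot but proposed a fix that does not work. You plan to show that the clause set $\set{\lnot\,\mathrm{demand}(C) \mid C \in A}$ is unsatisfiable on the grounds that ``completeness is exactly the condition making it a minimally unsatisfiable core (complementary literals on every occurring variable).'' Complementary occurrence of every variable is a \emph{necessary} condition for minimal unsatisfiability, not a sufficient one. Concretely, take two constraints $c_1, c_2$ and two cycles, $C_1$ demanding $c_1 = \eithervar$ and $c_2 = \eithervar$, and $C_2$ demanding $c_1 = \orvar$ and $c_2 = \orvar$. The set $\set{C_1, C_2}$ is a complete adjoining cycle set in the sense of Definition~\ref{def:acs} (both sides of both constraints occur in the set, and the two cycles adjoin through $c_1$), yet the assignment $c_1 = \eithervar$, $c_2 = \orvar$ realizes neither cycle; the corresponding clause set $(\lnot x_1 \lor \lnot x_2) \land (x_1 \lor x_2)$ is satisfiable. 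So completeness plus the adjoining structure does not yield the ``no-escape'' property, and your converse (hence the ``exactly'') cannot be closed by this route. For what it is worth, the paper's own proof disposes of this step in a single unsupported sentence (``Since both choices of each constraint are in one cycle, $acs$ is a violation''), so the difficulty you isolated is genuine and is not resolved there either; closing it would require either a stronger hypothesis on the cycle family (e.g., that it contains \emph{all} cycles of the induced subpolygraph, together with an argument that mixed assignments always realize some cycle) or restricting attention to the particular cycle sets constructed by the interpretation algorithm.
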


\begin{proof}
    If there exists a violation $vio$, which do not contain any complete adjoining cycle set.
    It contain at least one incomplete cycle set $acs_1$,
    because a violation must have at least one cycle.
    So there exist a cycle $C_1 \in acs_1$ and a constraint $cons = \{dep_1...,/...\}$, where $dep_1 \in acs_1$ and all dependencies in the other choice of $cons$ do not contained by $acs_1$.
    Then we let $cons$ choose the other choice, and then $dep_1$ will disappear and $C_1$ will not be a cycle any more.
    Next we focus on $acs_1-\{C_1\}$, (if $acs_1$ has more than one cycle),
    we will get another incomplete cycle group $acs_2 = acs_1 - \{C_1\}$.
    Keep doing this recurisely, all cycles in $acs_1$ will be broken.
    And this polygraph will not be a violation.
    
    If there exists two complete adjoining cycle group in $vio$,
    then we can delete one complete adjoining cycle group and 
    the left polygraph is still a violation. So it is not a minimal violation.

    In conclusion, a minimal violation $vio$ contains exactly one complete adjoining cycle set $acs$. 
    Since both choices of each constraint are in one cycle, 
    $acs$ is a violation that cannot pass the verification of PolySI.
    $vio$ is a minimal violation, $acs \subseteq vio$ and $acs$ is a violation, 
    then it is trivial that $vio = acs$.
\end{proof}

\begin{definition} \label{def:smallest complete acs}
    A complete adjoining cycle set $acs$ is called \bfit{minimal complete adjoining cycle set containing cycle $C$} if 
    \begin{itemize}
        \item $C \in acs$
        \item $\forall$ complete adjoining cycle set $acs'$ containing $C$, the number of dependencies in $acs'$ $\geq$ that in $acs$
    \end{itemize}
\end{definition}

\begin{theorem}[Minimal counterexample] \label{thm:minimal-counterexample}
    A minimal complete adjoining cycle set containing cycle $C$ is exactly
    the minimal counter example with respect to $\{G,C\}$.
\end{theorem}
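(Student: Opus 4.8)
The plan is to prove the theorem by reducing it directly to Lemma~\ref{lemma:one-acs-in-minimal}, which already identifies the class of minimal violations with the class of complete adjoining cycle sets. The strategy is to show that the two optimization problems being compared --- ``a minimal violation with respect to $\{G,C\}$ having the fewest dependencies'' (Definition~\ref{def:minimal counterexample}) and ``a complete adjoining cycle set containing $C$ with the fewest dependencies'' (Definition~\ref{def:smallest complete acs}) --- range over exactly the same family of candidate polygraphs, so that their minimizers must coincide. Thus the theorem follows once the two candidate families and the two objective functions are matched.

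First I would establish that the candidate families agree. By definition, a violation with respect to $\{G,C\}$ is a sub-polygraph of $G$ that contains $C$ and fails \name{}'s check; by Lemma~\ref{lemma:one-acs-in-minimal}, its \emph{minimal} versions are precisely the complete adjoining cycle sets contained in $G$, and conversely every complete adjoining cycle set is a minimal violation. Imposing $C \in acs$ on one side and $C \subseteq vio$ on the other therefore cuts both families down to the identical collection $\mathcal{M}$ of minimal violations that contain the cycle $C$. I would also note that $\mathcal{M}$ is nonempty: since MonoSAT found $C$ inside $G$ as a genuine violation, the closure of $C$ under the ``adjoining'' relation forced by the constraints (exactly what \Findacs{} computes) yields a complete adjoining cycle set containing $C$, which lies in $\mathcal{M}$.

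Next I would match the objectives. Both Definition~\ref{def:minimal counterexample} and Definition~\ref{def:smallest complete acs} minimize the number of dependency edges, and by the previous paragraph both minimizations are taken over the same set $\mathcal{M}$. Hence an element of $\mathcal{M}$ attains the minimum dependency count in the sense of one definition if and only if it attains it in the sense of the other, giving both inclusions at once: a minimal complete adjoining cycle set containing $C$ is a minimizer over $\mathcal{M}$ and is therefore a minimal counterexample, and a minimal counterexample is a minimizer over $\mathcal{M}$ and is therefore a minimal complete adjoining cycle set containing $C$.

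The main obstacle I anticipate is the apparent mismatch in the quantification domains of the two definitions: Definition~\ref{def:minimal counterexample} first restricts to \emph{minimal} violations and only then compares dependency counts, whereas Definition~\ref{def:smallest complete acs} compares against \emph{all} complete adjoining cycle sets containing $C$ with no a priori minimality assumption. The crux is to argue --- again purely through Lemma~\ref{lemma:one-acs-in-minimal} --- that ``complete adjoining cycle set'' and ``minimal violation'' name the very same objects, so the two domains collapse to $\mathcal{M}$ and no spurious candidate (a violation that is not edge-minimal, or a cycle set that is adjoining but not complete) can undercut the minimum. A secondary point I would flag is the tie-breaking implicit in the phrase ``the minimal counterexample'': I would read the theorem as asserting that the two \emph{characterizations} coincide set-theoretically, i.e., every minimizer of one objective is a minimizer of the other, rather than asserting uniqueness; if several elements of $\mathcal{M}$ share the minimum dependency count, they are simultaneously the minimal complete adjoining cycle sets and the minimal counterexamples.
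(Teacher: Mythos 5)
Your proposal is correct and follows essentially the same route as the paper's own proof: both reduce the theorem to Lemma~\ref{lemma:one-acs-in-minimal} (minimal violations $=$ complete adjoining cycle sets), restrict to the objects containing $C$, and observe that the two definitions then minimize the same dependency-count objective over the same family. Your version is somewhat more careful --- explicitly collapsing the two quantification domains, noting non-emptiness, and arguing both inclusions rather than just one --- but these are refinements of the paper's argument, not a different approach.
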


\begin{proof}
    A minimal complete adjoining cycle set containing $C$ is 
    one of complete adjoining cycle sets containing $C$ 
    which has the least number of dependencies.
    By lemma \ref{lemma:one-acs-in-minimal}, it is the smallest
    minimal violation containing cycle $C$.
    So it is one of minimal violations with respect to $\{G,C\}$, 
    and has the least number of dependencies.
    By definition \ref{def:minimal counterexample}, it is exactly the
    minimal counter example with respect to $\{G,C\}$.
\end{proof}


\subsection{Restoring Minimal Patterns}

The core idea of our interpretation algorithm is to restore the 
\bfit{minimal complete adjoining cycle set containing $C$}.
Function $Find\_ACS$ is designed to do this task.
(line~\code{\ref{alg:interpret}}{\ref{procedure:restore acs}})

For each $\WW$ or $\RW$ dependency in an undesired cycle, 
it checks each dependency in the other direction (if missing),
and restores the undesired cycles containing these dependencies
(line~\code{\ref{alg:interpret}}{\ref{recover-edge-in-cons-1}} 
- line~\code{\ref{alg:interpret}}{\ref{recover-cons-finish-2}}).
Doing this recursively (line~\code{\ref{alg:interpret}}{\ref{find-cycle-recursively1}} and line~\code{\ref{alg:interpret}}{\ref{find-cycle-recursively2}}), 
the function will detect all of the adjoining cycle sets that contain $\acs$.
The minimal one will be returned.

The algorithm uses the brute force and sounds inefficient. 
However, it works well in practical experiments.
From our experience of verification, 
when it tries to detect a cycle, 
there usually exists a small cycle with only one $\WW$ or $\RW$ dependency.
For example, in the violation we found in figure~\ref{ce:dgraph} a, 
we first need to find a adjoining cycle from the dependency 
$T:(9,428)\rel{}T:{10,471}$. 
There exists a simple cycle $T:(10,471)\rel{}T:(4,172)\rel{}T:(10,467)\rel{}T:(10,471)$.
This cycle has a small size and only has one $\WW$ dependency, 
which means the function does not need to detect more cycles recursively 
in order to restore a complete adjoining cycle set, 
and can pay no attention to other possible adjoining cycle sets 
that contain more than 3 dependencies.

However, there still exist the worst cases that the interpretation algorithm cannot
return the minimal counter example quickly. 
But in this case, we can be more patient to wait for it.
Since we have already known the result that there exists a violation,
and we just need to wait for the minimal counter example.
And if it still costs too much time, we can interrupt it and output 
a minimal violation we have found instead of the smallest one,
after the interpretation algorithm has run enough time.

\begin{theorem}[Minimality]
	 \label{thm:minimal-counterexample-polysi}
	\name{} always returns a minimal counterexample with respect to $G$ and $C$, with
	$G$ the polygraph built from a history and  $C$ the cycle output by MonoSAT.
\end{theorem}

\begin{proof}
    Our interpretation algorithm will return a `minimal complete adjoining cycle set' containing cycle $C$,
    which is exactly the minimal counter example.
\end{proof}

Now we have restored the minimal counter example. 
However, to help understand the violation,
we still restore some more dependencies.
For each $rw$ dependency appears in violations, 
it is caused by a $wr$ dependency and a $ww$ dependency.
If we delete the $wr$ dependency and $ww$ dependency, 
it is difficult to understand the $rw$ dependency.
So for each $rw$ dependency in a minimal counterexample, 
we recover the related $wr$ and $ww$ dependencies to the violation
(Alg\ref{alg:interpret}:line\ref{recover-rw-edges}-\ref{recover-rw-edges-finish}).
This is very useful in our experiments.
For example, in the violation we found in figure~\ref{ce:galera},
the transaction $T:(1,4)$ is recovered from two $rw$ dependencies.
Without the recovered transaction, 
it is difficult to understand the violation.

\newpage
\section{Performance Evaluation of PolySI-List}
\label{app:polysi-list}

Figure~\ref{comparison-polysi-list} shows the performance evaluation of PolySI-List, the extension to \name{} for handling Elle-like logs/histories with the ``list'' data structure. 

\pgfplotsset{height=140pt, width=200pt}
\begin{figure}[h]
    \begin{scaletikzpicturetowidth}{0.23\textwidth}
	\begin{tikzpicture}[scale=\tikzscale]
	  \begin{axis}[
		title={(a)},
		xlabel={\#sessions},
		ylabel={Time (s)},
		cycle multiindex* list={
            color       \nextlist
            mark list*  \nextlist
        }
		]
		\addplot[color=blue,mark=square,mark size=3pt] table [x=param, y=si(list), col sep=comma] {tables/data/elle-sessions.csv};
		\legend{\name-List}
	  \end{axis}
	\end{tikzpicture}
		\hspace{1ex}
  \end{scaletikzpicturetowidth}           
  \begin{scaletikzpicturetowidth}{0.23\textwidth}
	\begin{tikzpicture}[scale=\tikzscale]
	  \begin{axis}[
		title={(b)},
		xlabel={\#txns/session},
		ylabel={Time (s)},
		legend pos=north west
		]
		\addplot[color=blue,mark=square,mark size=3pt] table [x=param, y=si(list), col sep=comma] {tables/data/elle-txns.csv};
		\legend{\name-List}
	  \end{axis}
	\end{tikzpicture}
  \end{scaletikzpicturetowidth}

  \begin{scaletikzpicturetowidth}{0.23\textwidth}
	\begin{tikzpicture}[scale=\tikzscale]
	  \begin{axis}[
		title={(c)},
		xlabel={\#ops/txn},
		ylabel={Time (s)},
		]
		\addplot[color=blue,mark=square,mark size=3pt] table [x=param, y=si(list), col sep=comma] {tables/data/elle-nops.csv};
		\legend{\name-List}
	  \end{axis}
	\end{tikzpicture}
  \end{scaletikzpicturetowidth}
	\hspace{1ex}
  \begin{scaletikzpicturetowidth}{0.23\textwidth}
	\begin{tikzpicture}[scale=\tikzscale]
	  \begin{axis}[
		title={(d)},
		xlabel={ read proportion \%},
		ylabel={Time (s)},
		legend pos=north east
		]
		\addplot[color=blue,mark=square,mark size=3pt] table [x=param, y=si(list), col sep=comma] {tables/data/elle-readpct.csv};
		\legend{\name-List}
	  \end{axis}
	\end{tikzpicture}

  \end{scaletikzpicturetowidth}
    \begin{scaletikzpicturetowidth}{0.23\textwidth}
	\begin{tikzpicture}[scale=\tikzscale]
	  \begin{axis}[
		title={(e)},
		ymin=0.7,
		ymax=1.5,
		xlabel={\#keys},
		ylabel={Time (s)},
		]
		\addplot[color=blue,mark=square,mark size=3pt] table [x=param, y=si(list), col sep=comma] {tables/data/elle-vars.csv};
		\legend{\name-List}
	  \end{axis}
	\end{tikzpicture}
  \end{scaletikzpicturetowidth}
  	\hspace{1ex}
  \begin{scaletikzpicturetowidth}{0.23\textwidth}
	\begin{tikzpicture}[scale=\tikzscale]
	  \begin{axis}[
		title={(f)},
		x tick style={draw=none},
		ylabel={Time (s)},
		ymin=0,
		xmin=-0.5,
		xmax=2.5,
		ybar,
		area legend,
		xtick=data,
		xticklabels={uniform,zipfian,hotspot}
		]
		\addplot[color=blue, pattern color=blue, pattern=crosshatch] table [x expr=\coordindex, y=si(list), col sep=comma] {tables/data/elle-distrib.csv};
		\legend{\name-List}
	  \end{axis}
	\end{tikzpicture}
  \end{scaletikzpicturetowidth}
 \caption{Performance evaluation of PolySI-List.}
 \label{comparison-polysi-list}
\end{figure}
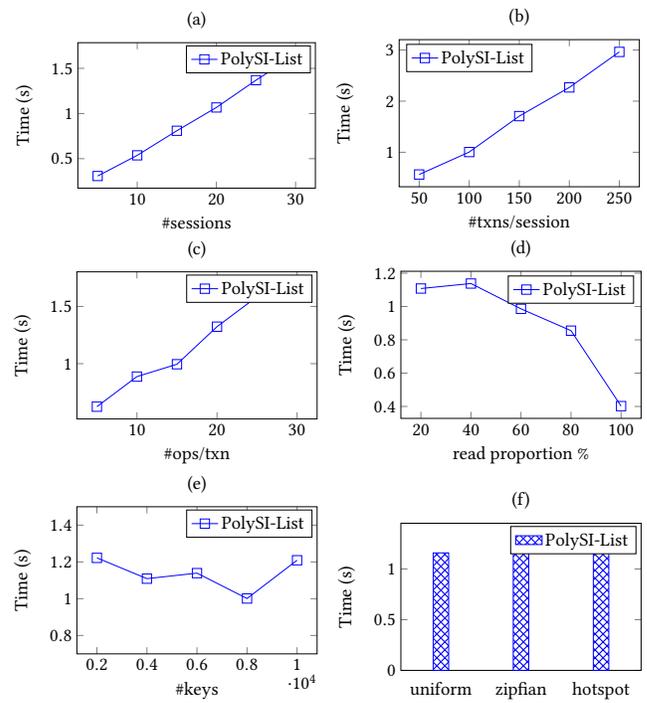

\end{document}